%% file: main.tex
\documentclass[11pt]{article}

\newif\ifcomments

\title{Dynamic Load Balancing under Restricting Updates}

\author{}

\author{Aaron Bernstein\thanks{
        New York University,
        \texttt{bernstei@gmail.com}. Supported by Sloan Fellowship, NSF Grant 1942010, and Charles S. Baylis endowment at NYU.
    } \and Sayan Bhattacharya\thanks{
        University of Warwick,
        \texttt{S.Bhattacharya@warwick.ac.uk}.
    }\and Peter Kiss\thanks{
        University of Vienna,
        \texttt{peter.kiss@univie.ac.at} This research was funded in whole or in part by the Austrian Science Fund (FWF) 10.55776/ESP6088024, Part of this work was done while at the University of Warwick.
    }  \and Thatchaphol Saranurak\thanks{
        University of Michigan,
        \texttt{thsa@umich.edu}.
        Supported by NSF Grant CCF-2238138 and a Sloan Fellowship.
        Part of this work was done while at INSAIT, Sofia University ``St. Kliment Ohridski'', Bulgaria. This work was partially funded from the Ministry of Education and Science of Bulgaria (support for INSAIT, part of the Bulgarian National Roadmap for Research Infrastructure).
    } }

\date{}

\input{commands}
\begin{document}

\maketitle
\pagenumbering{gobble}

\begin{abstract}
   \input{abstract.tex} 
\end{abstract}

\newpage
\tableofcontents
\newpage

\pagenumbering{arabic}
\input{intro}

\input{overview}

\input{prelim}

\input{fractional-algorithm}

\input{Rounding}

\input{Acknowledgements.tex}

\bibliographystyle{alpha}
\bibliography{references}

\appendix

\input{appendix-hardness}

\end{document}

%% file: commands.tex
\usepackage[linesnumbered,ruled,vlined]{algorithm2e}

\usepackage{fancyhdr}
\usepackage{blindtext}
\usepackage{amsthm}
\usepackage{amsmath}
\usepackage{amssymb}
\usepackage{thmtools}
\usepackage{thm-restate}
\usepackage{anyfontsize}
\usepackage{xstring}
\usepackage{xargs}
\usepackage{makeidx}
\usepackage{cancel}
\usepackage{chngcntr}
\usepackage{caption}
\usepackage{subcaption}
\usepackage{float}
\usepackage{pgf,tikz,pgfplots}
\usepackage{latexsym}
\usepackage{verbatim}
\usepackage{bookmark}
\usepackage[toc]{appendix}
\usepackage[a4paper,left=2cm,right=2cm,top=3cm,bottom=2.4cm]{geometry}

\usepackage{cleveref}

\usepackage{listings}
\usepackage{hyperref}

\hypersetup{
	colorlinks=true,
	linkcolor=blue,
	filecolor=magenta,      
	urlcolor=cyan,
}

\usepackage{algorithmicx}
\usepackage{algpseudocode}
\usepackage[dvipsnames]{xcolor}

\usepackage{bidicontour}

\newtheorem{theorem}{Theorem}[section]
\newtheorem{lemma}[theorem]{Lemma}

\newtheorem{corollary}[theorem]{Corollary}
\newtheorem{proposition}[theorem]{Proposition}

\newtheorem{definition}[theorem]{Definition}

\newtheorem{observation}[theorem]{Observation}
\newtheorem{claim}[theorem]{Claim}

\hypersetup{ colorlinks, citecolor=blue}

\PassOptionsToPackage{unicode}{hyperref}

\usepackage{framed}
\usepackage[framemethod=tikz]{mdframed}
\usepackage{tikz-cd}
\newenvironment{wrapper}[1]
{
	\begin{center}
		\begin{minipage}{\linewidth}
			\begin{mdframed}[hidealllines=true, backgroundcolor=gray!20, leftmargin=0cm,innerleftmargin=0.4cm,innerrightmargin=0.4cm,innertopmargin=0.4cm,innerbottommargin=0.4cm,roundcorner=0pt]
				#1}
			{\end{mdframed}
		\end{minipage}
	\end{center}
}

\ifcomments

\newcommand{\thatchaphol}[1]{{\color{purple}Thatchaphol: #1}}

\newcommand{\sayan}[1]{{\color{red}Sayan: #1}}

\newcommand{\aaron}[1]{{\color{magenta}Aaron: #1}}

\newcommand{\peter}[1]{{\color{orange}Peter: #1}}

\newcommand{\red}[1]{{\color{red}#1}}

\newcommand{\TODO}[1]{\color{red} TODO: #1 \color{black}}

\else 
\newcommand{\thatchaphol}[1]{}

\newcommand{\sayan}[1]{}

\newcommand{\aaron}[1]{}

\newcommand{\peter}[1]{}

\newcommand{\red}[1]{}

\newcommand{\TODO}[1]{}
\fi

\newcommand{\opt}{\textsc{Opt}}
\newcommand{\boost}{\texttt{Boost}}
\newcommand{\whack}{\texttt{Whack}}
\newcommand{\preprocess}{\texttt{Preprocess}}
\newcommand{\delete}{\texttt{Handle-Update}}
\newcommand{\resetbucket}{\texttt{Reset-Bucket}}
\newcommand{\increaseslot}{\texttt{Increase-Weight}}
\newcommand{\decreaseslot}{\texttt{Decrease-Weight}}
\newcommand{\G}{\mathcal{G}}
\newcommand{\J}{\mathcal{J}}
\newcommand{\M}{\mathcal{M}}
\newcommand{\E}{\mathcal{E}}

\DeclareMathOperator{\poly}{poly}

\newcommand{\gain}{\texttt{gain}}

\newcommand{\loss}{\texttt{loss}}

\usepackage{ifthen}
\newboolean{debug}
\setboolean{debug}{false} 

\DeclareMathOperator{\polylog}{polylog}
\newcommand{\eps}{\epsilon}
\newcommand{\otil}{\tilde{O}}

%% file: abstract.tex
We study the problem of load balancing in the dynamic setting. The input is a bipartite graph $G = (J \cup M, E)$ of jobs $J$ and machines $M$. Every job has a positive weight $w(j)$. The goal is to assign every job in $J$ to an adjacent machine in $M$ so as to minimize the maximum machine load, where the load of a machine is the sum of the weights of the jobs assigned to it.

Although load balancing has been studied extensively in many models of computation, the only previous result in the dynamic model addresses the \emph{incremental} setting, where updates take the form of edge insertions and job deletions \cite{BhattacharyaKS23}. The authors show a $(1+\eps)$-approximation for incremental \emph{fractional} load balancing with $\poly(\log(n),1/\eps)$ update time; they point out that their techniques do not extend to the decremental setting. 

\paragraph{Main Result:} We obtain a near-optimal algorithm for fractional load balancing in the \emph{decremental} setting, where updates take the form of edge deletions and job insertions: a $(1+\eps)$-approximation with $\poly(\log(n),1/\eps)$ update time. Our techniques are very different from those of the incremental algorithm of \cite{BhattacharyaKS23}. Instead, our algorithm is inspired by a load-balancing algorithm from the {\sc congest} model of computation \cite{Ahmadian2021DistributedLB}. We develop a novel analytic framework for this algorithm, which allows us to show that it performs well in the dynamic setting.

\paragraph{Rounding Scheme:} We also present a black-box rounding scheme for dynamic load balancing. Plugging in our fractional algorithm, we obtain a decremental algorithm for $(2+\eps)$-approximate \emph{integral} load balancing with $\poly(\log(n),1/\eps)$ update time. We also obtain the same bounds for incremental load balancing by plugging in the fractional algorithm of \cite{BhattacharyaKS23}. The approximation ratio nearly matches the $2$-approximation achieved by the state-of-the-art \emph{static} algorithms for this problem. 


%% file: intro.tex
\section{Introduction}

We study the load balancing problem in the dynamic setting. The input is a bipartite graph $G = (J \cup M, E)$, where we refer to the vertices in $J$ as jobs, and those in $M$ as machines. Each job also has a positive weight $w(j)$. The goal is to find an assignment from jobs to adjacent machines that minimizes the maximum machine load, where the load of a machine $i$ is defined as the sum of the weights of all jobs assigned to $i$.

There is a rich literature on load balancing in many models of computation, from standard approximation algorithms \cite{LenstraST90,ShmoysT93,Svensson11,JansenR20}, to classic online algorithms \cite{AzarNR95,AspnesAFPW97}, streaming algorithms \cite{KonradR16,AssadiBL23,AssadiBLLW25}, and parallel/distributed algorithms \cite{Young2001MixedPackingCovering,
  Mahoney2016ParallelMixedPackingCovering, Li23,HalldorssonKPR18,AssadiBL20,Ahmadian2021DistributedLB}. One particularly relevant line of research is online algorithms with recourse, where the graph is changing over time, and the goal is to upper-bound how often the (approximately) optimal assignment needs to change \cite{GuptaKS14, BerndtEM22, Krishnaswamy0S23, FoussoulGK24}. But the results for this model do not provide efficient algorithms for actually maintaining the assignment -- that is the goal of dynamic algorithms.

\paragraph{Dynamic Load Balancing:} In the \emph{fully dynamic} model, each update either inserts or deletes an edge in the graph, and the goal is to maintain an approximately optimal assignment with small update time. There are also two commonly studied special cases, which are called \emph{partially dynamic} models: in the incremental (resp. decremental) model the only allowed updates are edge insertions (resp.~deletions) and job deletions (resp. insertions); in short, updates in the incremental model can only decrease the optimum max load, while those in the decremental model can only increase it. 

Despite the rich literature on load balancing in other models, the only result we know of on dynamic load balancing is an algorithm of \cite{BhattacharyaKS23} for incremental load balancing: they maintain a $(1+\eps)$-approximate fractional assignment in \emph{total} update time $\otil(m)$, which implies amortized update time $\poly(\log(n)/\eps)$.
However, the authors of \cite{BhattacharyaKS23} note that their techniques fail to extend to the decremental setting.

Our main result obtains the same bounds in the decremental setting. We also develop a black-box rounding scheme to produce an integral assignment. 
We thus essentially settle the complexity of the problem for partially dynamic algorithms. One of the reasons to focus on the partially dynamic model is that similarly strong results for the fully dynamic model are currently beyond our reach, even for the much simpler problem of dynamic matching; see the discussion in \Cref{sec:related} on the relationship to matching.

\subsection{Our Contributions}

Our main contribution is a nearly optimal algorithm for \emph{fractional} load balancing in the decremental model.

\begin{theorem}
\label{thm:intro-fractional} Let $G = (J \cup M, E)$ be a graph with job weights in $[1,W]$ that is subject to a sequence of edge deletions and job insertions. There is an algorithm that maintains a \emph{fractional} assignment whose maximum machine load is a $(1+\eps)$-approximation to the optimum. The total update time is $O(m\poly(\log(nW)/\eps))$, where $m$ is the total number of edges ever present in the graph and $n = |J \cup M|$ is the total number of nodes at the end of the last update.
\end{theorem}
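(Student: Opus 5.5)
We reduce to a decision version and then build a monotone fractional structure inspired by the \textsc{congest} algorithm of \cite{Ahmadian2021DistributedLB}. Since updates in the decremental model only increase the optimum max load $\opt$, we guess $\lambda$ from a geometric sequence $1, (1+\eps), (1+\eps)^2,\dots$ up to $nW$. For the current guess we run a decision procedure that either maintains a fractional assignment of max load $(1+O(\eps))\lambda$ or certifies $\opt > \lambda$. Once a guess is refuted it stays refuted forever, because $\opt$ is monotone. So we move to the next guess and restart from scratch, at most $O(\log(nW)/\eps)$ times. The total time is then $O(\log(nW)/\eps)$ times the cost of one decision instance, and it suffices to show that one instance costs $O(m\,\poly(\log(nW)/\eps))$ total.

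For a fixed $\lambda$, following the distributed algorithm, each machine $i$ gets a level $\ell(i)\in\{0,\dots,L\}$ with $L=O(\log(nW)/\eps)$ and a price $(1+\eps)^{\ell(i)}$. Each job $j$ spreads its weight over its neighbors with fractions proportional to $(1+\eps)^{-\ell(i)}$, or alternatively over the neighbors of minimum level only; this is a multiplicative-weights or proportional-allocation rule. A machine whose load exceeds $(1+\eps)\lambda$ gets its level raised, and that decreases its incoming fractions. The key monotonicity property is that under edge deletions and job insertions, loads only increase and levels only go up. Each machine is raised at most $L$ times, and each raise costs $O(\deg(i))$ to update its neighbors' fractions, which together with lazy per-job normalization gives total work $O(mL\cdot\poly(1/\eps))$. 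An edge deletion or job insertion is charged $O(\deg)$ plus the cascading raises it causes.

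Certification is where I would do the real work. If some machine would be pushed past level $L$, I would extract a dual witness. Consider the set $S_k$ of machines at level $\ge k$; the jobs all of whose neighbors lie in $S_k$ are forced onto $S_k$. Averaging over the $L$ levels, there is a $k$ where the fraction of weight ``leaking'' from $S_{k}$ to $S_{k-1}\setminus S_k$ is at most an $\eps$ fraction of the total weight. This yields the Hall-type bound $\sum_{j:N(j)\subseteq S}w(j) > \lambda|S|$, which shows $\opt>\lambda$.

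The main obstacle is the approximation analysis under dynamics. In the static distributed setting one proves the potential is balanced after $\poly(\log n/\eps)$ rounds. Dynamically, however, we can never lower levels, since that would break the amortization, so machines can be stuck at high levels with low load after an adjacent job's weight is redistributed. The new analytic framework must show that the invariant ``every machine of level $\ge 1$ has load at least about $\lambda/(1+\eps)^{O(1)}$ when it was raised, and loads never decrease'' still suffices for the layered averaging argument. The point that needs care is that a raised machine's load can drop when its own level rises, since jobs shift away; the same happens when a neighbor is deleted. I would first try to prove the invariant in a weaker, slack form, charging each job's weight to the levels it crosses, and then argue the Hall witness using only the monotone increase of loads on the machine set $S_k$.
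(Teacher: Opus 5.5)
\paragraph{Verdict.} There is a genuine gap, and it sits in the decision procedure's certification step, which is the whole content of the theorem.

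\paragraph{What works.} Your outer layer is fine and matches the paper: geometric guessing, plus monotonicity of $\opt_f$ under restricting updates. One small fix: the guesses must start at $1/n$, not $1$, because $\opt_f<1$ is possible when machines outnumber jobs.

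\paragraph{Why the certification fails.} Your Hall-type witness $\sum_{j:N(j)\subseteq S_k}w(j)>\lambda|S_k|$ has to be read off the \emph{current} graph. So it needs the machines in $S_k$ to carry large current load. The monotonicity you invoke for this is false. Levels are monotone, but loads are not. Deleting an edge $(j,i)$ with $i\in S_k$ removes $j$'s contribution from $i$, and under proportional allocation it moves that weight to neighbours possibly outside $S_k$. So even the aggregate load of $S_k$ can drop. A machine can therefore be raised to a high level and then drained almost completely by deletions, while staying in every $S_k$ with $k\le\ell(i)$. The level sets can fill up with nearly empty machines, and no averaging over levels of the current configuration yields the Hall inequality. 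Your last paragraph notices this, but it only proposes a ``weaker, slack form'' of the invariant. It then again relies on ``monotone increase of loads on $S_k$'', which is exactly what breaks.

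\paragraph{The missing idea.} The accounting has to move off the current graph. The paper records the run in a \emph{history graph} $\G$:
\begin{itemize}
\item one node $(v,k)$ for every vertex $v$ and every level $k$ it has reached;
\item a history edge $((j,k_j),(i,k_i))$ whenever $(j,i)$ was present while its endpoints had those levels, carrying the frozen value $(\eps/(Wmn))^2(1+\eps)^{k_j-k_i}$.
\end{itemize}
$\G$ only grows under deletions, so every machine-level node keeps the load ($>\beta\Gamma$) it had right after its whack.

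Comparing node by node would only give an $O(1/\eps)$-approximation. The reason is that a job-level node can accumulate a geometric tail of $\Theta(1/\eps)$ history flow while neighbouring machines are whacked. The paper handles this by summing over suffixes of at least $\alpha=\Theta(\eps^{-3})$ consecutive levels, using gain/loss bookkeeping:
\begin{itemize}
\item job suffixes send at most $2(1+\eps^2/10)w(j)$ per level;
\item machine suffixes receive at least $2(1+\eps-\eps^2)\Gamma$ per level.
\end{itemize}
The machine bound needs the whack-before-boost priority, which caps every load at $3\Gamma$.

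Certification is then a contradiction, not a Hall set. Suppose a fractional $x^*$ of load $\Gamma$ exists in the current graph. Every current edge satisfies $\ell(i)\ge\ell(j)-L$, which gives $\Gamma|\hat{\M}_{\ge k-L}|\ge w(\hat{\J}_{\ge k})$. Combined with the flow gap, this forces $|\hat{\M}_{\ge k}|$ to grow by a factor $1+\eps/2$ every time $k$ drops by $7L+\alpha$. After $\kappa$ rounds this exceeds the number of machine-level nodes.

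\paragraph{Why your allocation rule does not fit this analysis.} The analysis relies on the paper's unnormalized flows $x(j,i)=(\eps/(Wmn))^2(1+\eps)^{\ell(j)-\ell(i)}$ with separate job levels. These make each whack touch only $i$'s edges and keep every history value fixed. Under your proportional rule, raising $i$ changes $j$'s fractions on all of $j$'s other edges. Making that normalization lazy amounts to reintroducing job boosts, and then jobs no longer send exactly their weight. That is precisely the situation the job-side suffix bound exists to handle.
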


 We also design a black-box rounding scheme, which is based on an earlier rounding scheme of \cite{Krishnaswamy0S23} for online algorithms, but makes several significant modifications to ensure efficient update time.
Applying the rounding scheme (\Cref{sec:rounding}) to Theorem \ref{thm:intro-fractional} yields the following integral algorithm:

\begin{theorem}
\label{thm:intro-integral} Let $G = (J \cup M, E)$ be a graph with job weights in $[1,W]$ that is subject to a sequence of edge deletions and job insertions. There is a randomized algorithm that maintains an \emph{integral} assignment whose maximum machine load is a $(2+\eps)$-approximation to the optimum, using total update time $O(m\poly(\log(nW)/\eps))$ with high probability. In fact, the approximation guarantee is slightly stronger: each machine in our assignment has load at most $(1+\eps)\cdot \opt + W \leq (2+\eps) \cdot \opt$. 
The algorithm works against an adaptive adversary.
\end{theorem}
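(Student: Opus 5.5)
The plan is to derive Theorem~\ref{thm:intro-integral} by composing two black boxes. The first is the decremental fractional algorithm of Theorem~\ref{thm:intro-fractional}. The second is the rounding scheme of \Cref{sec:rounding}. We view the fractional algorithm as producing a stream of changes to a fractional assignment $x$, with $\sum_{i} x_{ji} = 1$ for every job $j$ and fractional machine loads at most $(1+\eps)\opt$. Its total update time is $O(m\poly(\log(nW)/\eps))$, so the total number of changes to $x$ (its \emph{recourse}) is bounded by the same quantity.

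The rounding scheme, in the spirit of \cite{Krishnaswamy0S23}, takes such a stream and maintains an integral assignment. Its guarantee is that each machine $i$ has integral load at most $(1+O(\eps))\sum_j w(j)x_{ji} + W$ w.h.p. I would implement it in three steps:
\begin{enumerate}
\item Discretize $x$ into multiples of $\eps/n^{O(1)}$, possibly after bucketing jobs by weight class $[2^k,2^{k+1})$.
\item Split each machine into slots/buckets, following the pipage/Shmoys--Tardos idea. Group the jobs fractionally assigned to $i$ by weight, in sorted order, into consecutive unit-mass buckets.
\item Integrally match each job to one bucket, using randomization to keep the matching stable under changes.
\end{enumerate}
With this construction, the bucket containing the largest jobs contributes at most $W$, and every other bucket contributes at most the fractional mass of the preceding bucket. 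This gives load at most $(1+\eps)\opt + W$. Since $\opt \ge W$ (some job of weight $W$ exists, or else we rescale), this is at most $(2+\eps)\opt$.

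The key steps, in order, are as follows. (i) Restate the output interface of Theorem~\ref{thm:intro-fractional} as a sequence of \increaseslot/\decreaseslot\ operations on edge values, with bounded total count. (ii) Show that each such operation causes only $\poly(\log(nW)/\eps)$ amortized work in the rounding structure, in expectation, with a concentration argument for w.h.p.\ total time. To get this, buckets are reset lazily (\resetbucket) only when a $(1+\eps)$ factor of drift accumulates. (iii) Verify the load invariant after every update. (iv) Argue adaptivity: the randomness is only used for efficiency and is resampled at resets, while the approximation guarantee holds deterministically. Therefore an adaptive adversary who sees the assignment cannot hurt correctness. The time bound must then be shown to hold w.h.p.\ even against adaptive updates, for example by charging each random choice to independent fresh coins.

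The main obstacle is step (ii) combined with (iv): the time bound must hold w.h.p.\ against an adaptive adversary. The adversary sees which integral edges are used, so it can delete exactly those edges. My first attempt would be to charge each forced reassignment to the fractional mass that moved. Deleting an edge forces the fractional algorithm itself to move the mass $x_{ji}$ on that edge, and that movement is already paid for in its recourse. The rounding then only needs to re-place job $j$ at cost $\polylog$. Reshuffles within a bucket would be amortized through geometric thresholds. For this charging I would rely on the structural properties of the rounding established in \Cref{sec:rounding}.
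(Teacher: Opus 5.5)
Your top-level plan matches the paper's proof: feed a dynamic fractional assignment into the rounding of \Cref{sec:rounding}, bound its number of edge changes by the fractional update time, and conclude load $(1+O(\epsilon))\opt+W\le(2+\epsilon)\opt$ from $W\le\opt$. You are also right that the load bound itself is deterministic. The gap is the step you call the main obstacle: you never establish polylogarithmic repair time w.h.p.\ against an adaptive adversary, and the rounding you sketch cannot give it. With unit-mass buckets fed by an exactly normalized assignment, the fractional matching induced on the job--slot graph is tight. Nothing then bounds the length of the augmenting path needed to re-place a job whose matched edge was deleted, or the time to find it. So ``re-place job $j$ at cost polylog'' is exactly the unresolved difficulty, and charging to fractional recourse bounds only how often a job becomes unmatched, not the cost of each repair.

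The missing idea is slack. Slots are capped at weight $1-2\epsilon$, while each job sends $y(j)\ge 1-\epsilon$, which holds for an $\epsilon$-fractional input. Consider the walk that, from an unmatched job $j$, samples a slot with probability $y(j,s)/y(j)$ and follows matched edges. Let $Q_t=\max_l p_t(l)/y(l)$, where $p_t(l)$ is the probability that the walk's $(t+1)$-st job is $l$. Then $Q_t$ shrinks by a factor $(1-2\epsilon)/(1-\epsilon)\le 1-\epsilon$ per step, so the walk reaches a free slot within $O(\log n/\epsilon)$ steps w.h.p.\ from \emph{every} matching state. Each walk uses fresh coins and the bound is uniform over states, so a union bound over the polynomially many walks handles an adaptive adversary. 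The randomness finds augmenting paths; it does not stabilize the matching, and nothing is resampled at resets. This is \Cref{lem:slot-decomposition:matching}, packaged in \Cref{thm:rounding}, which you should simply invoke: it already gives total time $\tilde{O}((m+U)/\epsilon^3)$ w.h.p.\ against an adaptive adversary.

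Second, do not round the normalized output of \Cref{thm:intro-fractional}. Normalizing job by job, $\bar x(j,i)=x(j,i)/x(j)$, means a single whack of machine $i$ can change up to $\sum_{j\in N(i)}\deg(j)$ entries of $\bar x$ rather than $\deg(i)$. So the recourse of $\bar x$ is not bounded by the fractional update time, and your $U$ is uncontrolled. The paper instead applies \Cref{thm:rounding}, which accepts $\epsilon$-fractional assignments for exactly this reason, to the un-normalized assignment of \Cref{th:fractional}. It does this for every guess $\Gamma_r$ in parallel, so $U=\tilde{O}(m/\delta^4)$ per guess. It outputs the rounding for the smallest uncertified guess $\widehat\Gamma\le(1+\delta)\opt_f$, whose load is $(1+O(\delta))\beta(1+\delta)\widehat\Gamma+W\le(1+\epsilon)\opt+W$, and takes a union bound over the $O(\delta^{-1}\log(nW))$ guesses. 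With this interface, your discretization, weight classes and unit-mass buckets become unnecessary.
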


The approximation ratio of the integral algorithm is essentially the best we can hope for: it matches up to a $(1+\eps)$ factor the best-known polynomial-time algorithms in the \emph{static} setting~\cite{LenstraST90,ShmoysT93}.\footnote{There are better-than-two approximations for the \emph{value} version of the problem, but these cannot compute an actual assignment; they also have large polynomial running time and hence cannot lead to efficient dynamic algorithms~\cite{Svensson11,JansenR20}.} 

Applying our black-box rounding scheme to the incremental (fractional) algorithm of \cite{BhattacharyaKS23} also yields the first efficient algorithm for incremental \emph{integral} load balancing; the bounds are the same as in Theorem \ref{thm:intro-integral}.

\subsection{Related Work: Maximum Flow and Maximum Matching}
\label{sec:related}

Our results have significant connections to dynamic algorithms for several related problems.

\paragraph{Maximum Flow:} It is not hard to check that incremental/decremental load balancing is a special case of incremental/decremental maximum flow. There is an algorithm for incremental $(1+\eps)$-approximate maximum flow with update time $n^{o(1)}$ \cite{ChenKLMG24}. This same algorithm thus also solves incremental load balancing, although the algorithm is significantly more complicated than the incremental load balancing algorithm of \cite{BhattacharyaKS23} and has update time $n^{o(1)}$ rather than $\polylog$. In the decremental setting, by contrast, while there exists a recent flow algorithm with the same $n^{o(1)}$ update time \cite{BrandCKLMGS24}, \emph{that algorithm cannot actually maintain the approximate assignment itself}; it only maintains a single number, which is an approximation to the optimum max load. Ours is thus the first efficient algorithm to maintain the assignment itself in the decremental setting.

\paragraph{Matching:} Load balancing can be viewed as a generalization of the matching problem. There is a long line of work on this problem in partially dynamic settings, and here too efficient incremental algorithms \cite{Gupta2014,GrandoniLSSS2019,BlikstadKiss2023} came before decremental ones \cite{BernsteinGS2020,JambulapatiJST2022,AssadiBD2022,Dudeja2024,ChenST2025}. 


The connection to matching also implies a crucial barrier for our problem. Even for the simpler problem of matching, a \emph{fully dynamic} $(1+\eps)$-approximation in $\polylog$ (or even $n^{1-\Omega(1)}$) update time is not known, and this is widely considered one of the main open problems in the field. In fact, there are conditional lower bounds that rule out such a result if we seek a polynomial dependence on $1/\eps$ \cite{Liu24}. At the same time, it is not hard to show that a $(1+\eps)$-approximation for \emph{fractional} load balancing would immediately imply the same bounds for fully dynamic \emph{integral} matching; see Appendix~\ref{ap:matching} for details. Generalizing our $(1+\eps)$-approximate decremental load balancing algorithm to the fully dynamic setting is thus out of reach, as we should first aim to understand the simpler problem of fully dynamic $(1+\eps)$-approximate matching.

That still leaves open the question of whether one can get an efficient algorithm for fully dynamic load balancing with a worse approximation. We show a conditional lower bound which proves that even in unweighted graphs, a $2-\Omega(1)$-approximation for \emph{integral} load balancing requires large update time (Appendix~\ref{app:integralhardness}). But neither upper nor lower bounds are known for approximation better than $2$. Matching is no longer a barrier here, as an efficient algorithm for fully dynamic $2$-approximate matching is known \cite{BaswanaGS11,Solomon16}. At the same time, load balancing is a significantly harder problem than matching, and in fact there are models of computation where it is provably \emph{much} harder \cite{AssadiBLLW25}. We leave it as an open question whether there exists an efficient algorithm for $O(1)$- or even $\polylog$-approximate fully dynamic load balancing, or whether there are conditional lower bounds suggesting such bounds are not possible.

\paragraph{Mixed packing-covering LP} \cite{BhattacharyaKS23} showed how to maintain a $(1+\eps)$-approximate solution of mixed packing-covering LPs under relaxing (i.e.~incremental) updates in near-optimal update time, and left as an open problem whether non-trivial algorithms exist for ``restricting'' (i.e.~decremental) updates. Load balancing is one of the most basic instances of mixed packing-covering LP, and we show a near-optimal algorithm for it under restricting updates.  
Hence, this work makes a first step toward resolving the open problem of \cite{BhattacharyaKS23}. 


%% file: overview.tex
\section{Overview}

\label{sec:overview}

We give an informal overview of the two main ingredients of the paper. First, we explain a dynamic algorithm for maintaining a nearly optimal \emph{fractional} assignment in \Cref{sec:overview:frac}. Second, we explain in \Cref{sec:overview:round} a black-box dynamic \emph{rounding} algorithm that converts such a fractional assignment into an integral assignment with only a small additional loss.

\subsection{Maintaining a Fractional Assignment}

\label{sec:overview:frac}

Let $G=(J,M,E)$ be the bipartite graph between jobs $J$ and machines $M$. The graph undergoes edge deletions and job insertions. Let $n=|J|+|M|$ and let $m$ be the number of edges that ever appear in $G$.

To keep the exposition clean, we describe the fractional algorithm in the simple case where all jobs have weight $1$ and where we assume that at all times the current graph $G$ admits an \emph{integral} assignment $x^{*}\in\{0,1\}^{E}$ where each job sends $x^{*}(j):=\sum_{i\in M}x^{*}(j,i)=1$ unit of flow and each machine receives $x^{*}(i):=\sum_{j\in J}x^{*}(j,i)\le1$ unit. Our goal is to maintain a \emph{fractional} assignment $x\in[0,1]^{E}$ such that each job sends $x(j):=\sum_{i}x(j,i)\ge1/(1+\eps)$ units and each machine receives $x(i):=\sum_{j}x(j,i)\le(1+\eps)\beta$, where $\beta:=1+10\eps.$ Scaling all values of $x$ by a factor of at most $(1+\eps)$ then gives a fractional assignment of load at most $(1+\eps)^{2}\beta=1+O(\eps)$. The assumption that $x^{*}$ is integral is only for the sake of this overview; the actual proof handles fractional optimum assignments by replacing the injection arguments below with flow-counting arguments.

\paragraph{Starting point: CONGEST Algorithm.}

The starting point is an extremely simple multiplicative-update algorithm from the \textsc{congest} literature by Ahmadian, Liu, Peng, and Zadimoghaddam~\cite{Ahmadian2021DistributedLB}. Each job $j$ and each machine $i$ has a nonnegative integer level, denoted by $\ell(j)$ and $\ell(i)$, initialized ats zero. The algorithm maintains a flow value on each current edge $(j,i)$ of the form 
\[
x(j,i)=\eta\cdot(1+\eps)^{\ell(j)-\ell(i)},
\]
where $\eta>0$ is a very small number; for the unweighted overview, one may take $\eta=n^{-20}$. Increasing the level of a job multiplies all of its incident flow values by $(1+\eps)$, while increasing the level of a machine divides all of its incident flow values by $(1+\eps)$. We call these two operations \emph{boosting} the job and \emph{whacking} the machine.

The algorithm repeatedly performs the following local repairs. If some machine has load larger than $(1+\eps)\beta$, we whack it. Otherwise, if some job sends less than $1/(1+\eps)$ total flow, we boost it. We always give priority to whacking machines before boosting jobs. This priority rule ensures that every machine has load at most $(1+\eps)^2\beta < 2\beta$ throughout the algorithm.

When the local repair loop stops, every job sends at least $1/(1+\eps)$ flow and every machine receives load at most $(1+\eps)\beta$, as desired. 
Thus, the only remaining issue is to prove that the repair loop cannot run for too long while $x^{*}$ exists. Equivalently, we need to prove the following key claim. 
\begin{quote}
\textbf{Key claim.} If the current instance admits an assignment of maximum load one, then no job or machine can reach level $L_{\max}=\poly(\log n,1/\eps)$. 
\end{quote}
Once this claim is known, the total running time is immediate. A boost or whack costs time proportional to the current degree of the vertex being updated. Since each edge is charged only when one of its endpoints increases level, and each endpoint increases level only $L_{\max}$ times, the total update time is $\tilde{O}(mL_{\max})$, near-linear in the number of edges that ever appear.

\paragraph{A warm-up without edge deletions.}

Ahmadian, Liu, Peng, and Zadimoghaddam~\cite{Ahmadian2021DistributedLB} proved the key claim when no edge is deleted using the following crucial observation: \emph{every machine $i$ with positive level receives flow at least $x(i)\ge\beta$}. Indeed, once $i$ was whacked, its load right after the whack was at least $\beta$ and can only increase until $i$ is whacked again. On the other hand, every job always sends total flow $x(j)<1$.

Let $M_{\ge k}$ and $J_{\ge k}$ be the machines and jobs whose current levels are at least $k$. The proof has two simple inequalities. First, for every $k>0$, 
\begin{equation}
|J_{\ge k}|\ge\beta|M_{\ge k}|-\text{(negligible term)}.\label{eq:warmup-job-expansion}
\end{equation}
Consider the total flow from $J_{\ge k}$ to $M_{\ge k}$, denoted by $x(J_{\ge k},M_{\ge k})$. Observe that every machine $i$ essentially only receives flow from jobs $j$ with higher level $\ell(j)>\ell(i)$. Otherwise, if $\ell(j)\le\ell(i)$, the flow value $x(j,i)=\eta(1+\eps)^{\ell(j)-\ell(i)}\le\eta$ is negligible. Thus, $x(J_{\ge k},M_{\ge k})\ge|M_{\ge k}|\beta-\text{(negligible term)}$ as each machine receives at least $\beta$. On the other hand, $x(J_{\ge k},M_{\ge k})\le|J_{\ge k}|$ as every job sends less than one unit. \ref{eq:warmup-job-expansion} follows from combining the two bounds. 

Second, let $L := \lceil \log_{1+\epsilon}(1/\eta)\rceil = O(\log(n)/\epsilon)$. 
For every $k\ge0$,
\begin{equation}
|M_{\ge k-L}|\ge|J_{\ge k}|.\label{eq:warmup-machine-expansion}
\end{equation}
To see this, take a job $j\in J_{\ge k}$ and let $i$ be the machine to which $j$ is assigned by the assumed optimum assignment $x^{*}$. Since $(j,i)$ is a current edge and $1>x(j) \ge x(j,i)=\eta\cdot(1+\eps)^{\ell(j)-\ell(i)}$, we must have $\ell(i)\ge\ell(j)-L\ge k-L$. Thus $x^{*}$ assigns all jobs in $J_{\ge k}$ into only machines in $M_{\ge k-L}$.
Since $x^{*}$ has load at most one and jobs have unit size, this assignment is injective, proving~\eqref{eq:warmup-machine-expansion}.

Combining the two inequalities and ignoring the negligible term yields 
\[
|M_{\ge k-L}|\ge\beta|M_{\ge k}|.
\]
Thus, as we lower the level threshold, the number of high-level machines grows geometrically. This cannot happen for more than logarithmically many thresholds, so levels are bounded by $O(L\log_{\beta}n)=\poly(\log n,1/\eps)$.

This is the whole strategy: to forbid high-level vertices, we establish a \emph{job-machine flow gap} and apply an \emph{expansion} argument over level sets as above.

\paragraph{Why deletions break the warm-up.}
Under edge deletions, the sentence ``every machine $i$ with positive level receives flow at least $x(i)\ge\beta$'' is completely false. While the load of machine $i$ is still at least $\beta$ right after being whacked, it may decrease arbitrarily later because incident edges are deleted.

\paragraph{First key idea: history graph} To fix this, our first key idea is to introduce another graph that records the relevant events of the algorithm, including the moments when machine loads are large. We call it the \emph{history graph} $\G=(\J,\M,\E)$. For every job $j$ and every level $k$ that $j$ has ever reached, the history graph contains a \emph{job-level node} $(j,k)\in\J$. Similarly, for every machine $i$ and every level $k$ that $i$ has ever reached, it contains a \emph{machine-level node} $(i,k)\in\M$. Whenever an original edge $(j,i)$ is present while the endpoints have levels $k_{j}$ and $k_{i}$, the history graph contains an edge $((j,k_{j}),(i,k_{i}))\in\E$. The \emph{history flow} $z\in[0,1]^{\E}$ sends 
\[
z((j,k_{j}),(i,k_{i}))=\eta\cdot(1+\eps)^{k_{j}-k_{i}}
\]
units over this history edge. This is exactly the value of the original flow on $(j,i)$ at the time this historical event occurred. The key property is that the history graph is \emph{incremental} under edge deletions: even after an edge is deleted from the current graph, the corresponding historical record is kept, and its $z$-value never changes.

This idea alone already handles edge deletions, but only if we allow a $O(1/\eps)$-approximation to the optimum max load. Let $z(i,k)$ be the total incoming history flow of the machine-level node $(i,k)$, and let $z(j,k)$ be the total outgoing history flow of the job-level node $(j,k)$. Each machine-level node $(i,k)$ with $k>0$ receives at least 
\[
z(i,k)\ge\beta
\]
 units, because this was true right after the corresponding whack, and later history edges can only add to $z(i,k)$. On the other hand, a single job-level node $(j,k)$ may send as much as 
\[
z(j,k)\le1+\frac{1}{1+\eps}+\frac{1}{(1+\eps)^{2}}+\cdots\le\frac{1+\eps}{\eps}
\]
units of history flow. This happens because, while $j$ stays at level $k$, neighboring machines may be whacked many times. Each such whack reduces the current flow in the original graph by a $1+\eps$ factor, but it creates a new historical record and increases $z(j,k)$. If we set $\beta=2/\eps$, this creates the desired gap: $z(i,k)\ge2/\eps$ for all machine-level nodes $(i,k)$ and $z(j,k)\le\frac{1+\eps}{\eps}$ for all job-level nodes $(j,k)$, and we can apply a similar expansion argument to bound the level of vertices. However, as $\beta=\Theta(1/\eps)$, this gives only an $O(1/\eps)$-approximation.

\paragraph{Second key idea: average over suffixes of levels.}

The second key idea is to stop looking at one history node at a time. A single job-level node can be misleading, because it may accumulate a long geometric tail of historical records while the job waits at that level. Instead, for a vertex $v$ whose current level is $\ell(v)$, and for a threshold $k\le\ell(v)$, we look at the whole suffix of history nodes 
\[
(v,k),(v,k+1),\ldots,(v,\ell(v)).
\]
The point is that the bad $\Theta(1/\eps)$ contribution for a job is an additive cost over the whole suffix, not a cost paid at every level. Once the suffix is long, this additive cost is small compared to the number of levels in the suffix.

Let us make this precise for jobs. Fix a job $j$ and a threshold $k\le\ell(j)$, and write 
\[
q:=\ell(j)-k+1
\]
for the number of job-level nodes in the suffix. We want to upper bound 
\[
Z(j,k):=\sum_{r=k}^{\ell(j)}z(j,r),
\]
the total outgoing history flow of this suffix. To do this, we will decompose $Z(j,k)$ into two parts as follows. For a level $r\in[k,\ell(j)]$, we write $z(j,r)=z^{\text{init}}(j,r)+z^{\text{new}}(j,r)$ where $z^{\text{init}}(j,r)$ is the total flow out of $(j,r)$ right after $j$ is boosted into level $r$, and $z^{\text{new}}(j,r)$ is the increase in $z(j,r)$ while $j$ remains at some level $r$ and neighboring machines are whacked. Then, define $Z^{\text{init}}(j,k):=\sum_{r=k}^{\ell(j)}z^{\text{init}}(j,r)$ and $Z^{\text{new}}(j,k):=\sum_{r=k}^{\ell(j)}z^{\text{new}}(j,r)$. Now, we can decompose $Z(j,k)$ as
\[
Z(j,k)=Z^{\text{init}}(j,k)+Z^{\text{new}}(j,k).
\]
As $j$ sends less than $1$ unit right after being boosted, we have $z^{\text{init}}(j,r)<1$ for every $r$ and so 
\[
Z^{\text{init}}(j,k)<q.
\]
Next, we bound $Z^{\text{new}}(j,k)$. During the whole suffix from level $k$ to $\ell(j)$, let $\text{Loss}(j,k)$ and $\text{Gain}(j,k)$ be the total decrease and total increase of $x(j)$, respectively. That is, whenever the flow $x(j)$ decreases while $\ell(j) \geq k$ (for example, because a neighboring machine is whacked, or because an edge gets deleted), this decrease contributes to $\text{Loss}(j,k)$; $\text{Gain}(j,k)$ is defined analogously.

Observe that $Z^{\text{new}}(j,k)\le \frac{1}{\epsilon}\text{Loss}(j,k)$ because,
whenever neighboring machine $i$ is whacked, $x(j)$ decreases by $x(j,i)-\frac{1}{1+\eps}x(j,i)=\frac{\eps}{1+\eps}x(j,i)$ but $Z^{\text{new}}(j,k)$ increases by $\frac{1}{1+\eps}x(j,i)$. This inequality would actually be an equality without edge deletions, but edge deletions can contribute to $\text{Loss}(j,k)$ without increasing $Z^{\text{new}}(j,k)$.
Next, we have $\text{Loss}(j,k)\le\text{Gain}(j,k)+1$ because $x(j)$ always remains between zero and one. Furthermore, $\text{Gain}(j,k)\le\frac{\eps}{1+\eps}q$ because the maximum increase of $x(j)$ per level is $1-\frac{1}{1+\eps}=\frac{\epsilon}{1+\epsilon}$. Combining three bounds, we get
\[
Z^{\text{new}}(j,k)\le q+\frac{1}{\eps}
\]
and, hence, 
\begin{equation}
Z(j,k)\le2q+\frac{1}{\eps}.\label{eq:overview-job-suffix}
\end{equation}
In particular, if $q\ge\alpha$, where $\alpha:=\lceil100/\eps^{2}\rceil$, then 
\begin{equation}
Z(j,k)\le\left(2+\frac{\eps}{10}\right)q.\label{eq:overview-job-suffix-long}
\end{equation}
So a long job suffix sends, on average, only slightly more than two units of history flow per level.

Next, we prove a lower bound for the machine side using an analogous argument.
Consider a machine $i$ and a threshold $k\le\ell(i)$.
Let 
\[
q:=\ell(i)-k+1\qquad\text{and}\qquad Z(i,k):=\sum_{r=k}^{\ell(i)}z(i,r).
\]
We want a lower bound on $Z(i,k)$ by similarly decomposing it into two parts $Z(i,k)=Z^{\text{init}}(i,k)+Z^{\text{new}}(i,k).$ We have, for $k>0$, 
\[
Z^{\text{init}}(i,k)\ge\beta q
\]
because each time machine $i$ is whacked, it receives at least $\beta$ units of flow immediately afterward. With an argument analogous to the one used to upper-bound $Z^{\text{new}}(j,k)$, we obtain a lower bound 
\[
Z^{\text{new}}(i,k)\ge(1+\eps)\beta(q-1)-\frac{(1+\eps)2\beta}{\eps}
\]
where the additive term follows from the fact that the machine load $x(i)$ is always at most $2\beta$. This is where we crucially exploited prioritizing whacking over boosting. The two bounds give 
\begin{equation}
Z(i,k)\ge(2+\eps)\beta q-\frac{3\beta}{\eps}.\label{eq:overview-machine-suffix}
\end{equation}
Thus, if $q\ge\alpha=\lceil100/\eps^{2}\rceil$, then 
\begin{equation}
Z(i,k)\ge\left(2+\frac{\eps}{2}\right)\beta q.\label{eq:overview-machine-suffix-long}
\end{equation}
So a long machine suffix receives, on average, noticeably more than $2\beta$ units of history flow per level.

Thus, we established the job-machine flow gap: for a long enough suffix, we have $Z(i,k)\ge\left(2+\frac{\eps}{2}\right)\beta q$ while $Z(j,k)\le\left(2+\frac{\eps}{10}\right)q.$ This creates a constant-factor slack of $1+\Theta(\eps)$ between the two sides.

\paragraph{The final expansion argument on long suffixes in the history graph.}

We now apply the same level-set expansion strategy as in the warm-up, but to long suffixes in the history graph. For a threshold $k$, define 
\[
\widehat{\J}_{\ge k}:=\{(j,r):k\le r\le\ell(j)\text{ and }\ell(j)\ge k+\alpha\},
\]
and define $\widehat{\M}_{\ge k}$ analogously. Thus, $\widehat{\J}_{\ge k}$ contains exactly the job-level nodes that belong to job suffixes of length at least $\alpha+1$ starting from level $k$, and similarly for machines.

The first inequality says that many job-level nodes must feed the high machine suffixes. Consider $\widehat{\M}_{\ge k}$. By~\eqref{eq:overview-machine-suffix-long}, the total incoming history flow of these machine-level nodes is at least 
\[
\left(2+\frac{\eps}{2}\right)\beta\cdot|\widehat{\M}_{\ge k}|.
\]
Ignoring only the globally negligible contribution of tiny history edges, this flow can essentially only come from job-level nodes whose levels are at least $k$. Every such node is contained in $\widehat{\J}_{\ge k-\alpha}$: the extra shift by $\alpha$ ensures that the relevant job suffix is long enough for~\eqref{eq:overview-job-suffix-long}. By~\eqref{eq:overview-job-suffix-long}, the total outgoing history flow of $\widehat{\J}_{\ge k-\alpha}$ is at most 
\[
\left(2+\frac{\eps}{10}\right)|\widehat{\J}_{\ge k-\alpha}|.
\]
Double-counting the same history flow gives 
\[
\left(2+\frac{\eps}{10}\right)|\widehat{\J}_{\ge k-\alpha}|\ge\left(2+\frac{\eps}{2}\right)\beta|\widehat{\M}_{\ge k}|.
\]
Since $\beta=1+10\eps$, the ratio on the right is at least $1+\eps$ times $|\widehat{\M}_{\ge k}|$ after decreasing constants to absorb the negligible tiny-edge term. Hence 
\begin{equation}
|\widehat{\J}_{\ge k-\alpha}|\ge(1+\eps)|\widehat{\M}_{\ge k}|.\label{eq:overview-first-history-expansion}
\end{equation}
This is the history-graph analogue of the first warm-up inequality.

The second inequality uses the current optimum assignment $x^{*}$, just as in the warm-up. Take a node $(j,r)\in\widehat{\J}_{\ge k}$, and let $i$ be the machine to which $j$ is assigned by $x^{*}$. We know that $\ell(i)\ge\ell(j)-L\ge k-L$, otherwise $x(j,i)$ exceeds one. Therefore, for every $r\in[k,\ell(j)]$, the machine-level node $(i,r-L)$ exists and belongs to $\widehat{\M}_{\ge k-L}$. Because $x^{*}$ assigns at most one unit job to each machine, this maps distinct job suffix nodes to distinct machine suffix nodes. Hence 
\begin{equation}
|\widehat{\M}_{\ge k-L}|\ge|\widehat{\J}_{\ge k}|.\label{eq:overview-second-history-expansion}
\end{equation}

Combining~\eqref{eq:overview-first-history-expansion} and~\eqref{eq:overview-second-history-expansion}, with the second inequality applied at threshold $k-L-\alpha$, gives 
\[
|\widehat{\M}_{\ge k-L-\alpha}|\ge|\widehat{\J}_{\ge k-\alpha}|\ge(1+\eps)|\widehat{\M}_{\ge k}|.
\]
Thus, whenever a high machine suffix exists, lowering the threshold by $L+\alpha$ increases the number of counted machine-level nodes by a factor of at least $1+\eps$. This cannot repeat too many times, because the history graph contains only one node per vertex-level pair. Therefore no vertex reaches level larger than $\poly(\log n,1/\eps)$. This proves the key claim and completes the overview of the fractional algorithm.

\subsection{Rounding the Dynamic Fractional Assignment}

\label{sec:overview:round}

The fractional algorithm maintains a $(1+O(\eps))$-approximate fractional assignment. We now explain how to round a dynamic fractional assignment with maximum load $\Gamma$ and maximum job weight $W$ to an integral assignment with maximum load $(1+\eps)\Gamma+W$.

\paragraph{Background: the Shmoys Tardos slot graph.}

The starting point is the classical Shmoys Tardos rounding scheme. Let $G=(J\cup M,E)$ be an input graph with maximum job weight $W$ and $x$ be a fractional assignment with load $\Gamma$. 
Shmoys and Tardos construct a \emph{slot} graph $H=(J\cup S,E_{H})$ whose left side is the set of jobs and whose right side is obtained by replacing each machine $i$ with a set of \emph{slots} $S(i)\subseteq S$. 

The key property is that any job-perfect \emph{matching} in $H$ gives an integral assignment in the original instance: if job $j$ is matched to a slot of machine $i$, then we assign $j$ to $i$. Moreover, the load of the resulting integral assignment is $\Gamma+W$. Thus, using the fractional assignment, they reduce the problem of finding a (weighted) assignment to that of finding an (unweighted) perfect matching. 

\paragraph{The online rounding framework of KLS.}

Krishnaswamy, Li, and Suriyanarayana \cite{Krishnaswamy0S23} showed how to adapt the Shmoys--Tardos rounding scheme to the online setting. Given a graph $G$ and a fractional assignment $x$ undergoing updates, the goal here is to maintain an integral assignment with low \emph{recourse}, i.e., a small number of reassignments. 

They define a similar slot graph $H=(J\cup S,E_{H})$ with a slightly weaker approximation: any job-perfect matching in $H$ yields an integral assignment of load $(1+O(\eps))\Gamma+W$. However, their slot graph $H$ has nicer structural properties:
\begin{enumerate}
\item \textbf{(Vertex expansion):} every set $A$ of jobs has at least $(1+\Omega(\eps))|A|$ neighboring slots, and 
\item \textbf{(Stable slots)}: the number of slot insertions and deletions is bounded in terms of the number of updates to $x$.\footnote{In fact, it is $O(1/\eps^{2})$ times the total $\ell_{1}$ change of $x$.}
\end{enumerate}
The algorithm of \cite{Krishnaswamy0S23} simply maintains the slot graph $H$ and a job-perfect matching $M_{H}$ in $H$. Whenever $M_{H}$ is not perfect, they simply augment through an $O(\frac{\log n}{\eps})$-length augmenting path; such a short path is guaranteed to exist by the vertex expansion of $H$ and induces only $O(\frac{\log n}{\eps})$ recourse. Vertices in $M_{H}$ are not unmatched often because slots are stable. These facts imply a left-perfect matching with small recourse, but not yet with small \emph{update time}.

\paragraph{Barriers to making KLS fast.}

There are two barriers. 
\begin{enumerate}
\item \textbf{(Representation barrier)}: Edges of $H$ are not stable. A small change in the fractional assignment $x$ can change many edges of the slot graph $H$. Thus explicitly maintaining all edges of $H$ will be too expensive.
\item \textbf{(Algorithmic barrier): }Even if we ignore the representation issue, we do not know a fast fully dynamic algorithm for maintaining a left-perfect matching in a changing bipartite graph merely from the promise that the graph has vertex expansion $1+\eps$. Expansion guarantees that a perfect matching exists, but it does not directly tell us how to find short augmenting paths quickly after dynamic updates.
\end{enumerate}

\paragraph{Our strategy: random augmenting path based on fractional matching with gap.}

Based on the fractional assignment $x$ in $G$, \cite{Krishnaswamy0S23} also defined a fractional assignment $y$ with a \emph{gap }guarantee: each job $j$ sends  $y(j)=1$ unit and each slot $s$ receives at most $y(s)<1-2\eps$ units.
They used $y$ in the \emph{analysis} to prove the vertex expansion of $H$ and the approximation guarantee of a left-perfect matching in $H$.

Our key observation is that $y$ is, in fact, extremely helpful \emph{algorithmically}, not just in analysis. Moreover, although $y$ changes substantially under updates, as $H$ does, we can still \emph{implicitly} maintain it.

Suppose for the moment that we have oracle access to $y$. We can explicitly maintain a left-perfect matching $M_{H}$ in the slot graph $H$ undergoing fully dynamic updates with only $\tilde{O}(1/\eps)$ update time. Whenever a job becomes unmatched, we can very quickly repair the matching by finding a \emph{random augmenting path }from this unmatched job: Starting from the current unmatched job $j$, we sample a neighboring slot $s$ with probability proportional to $y(j,s)$. If $s$ is unmatched, we have found an augmenting path and we augment along it. If $s$ is matched, we follow the matching edge from $s$ to the job currently matched to $s$, and then repeat the same sampling step from that job. 

The reason this walk is short is exactly the gap in $y$: each job sends at least $1$ unit, but each slot receives at most $1-2\eps$ units. Intuitively, when the walk samples according to $y$, the matched slots cannot absorb all of the probability forever, because the slot side has slack. At each step, a constant $\Omega(\eps)$ fraction of the remaining probability reaches unmatched slots. Therefore the augmenting-path length is $O(\log(n)/\eps)$ with high probability. This bounds the update time of $\tilde{O}(1/\eps)$ as well. 

Thus, the fractional assignment $y$ with a gap guarantee is stronger than merely the graph $H$ with vertex expansion. This allows us to maintain a left-perfect matching and overcome the algorithmic barrier above.

\paragraph{Dynamic rounding without explicit slot graph.}

The simplicity of the random augmenting path algorithm allows us to work without explicitly storing the full slot graph $H$ or the fractional assignment $y$, overcoming the representation barrier too. Indeed, the random walk needs only very limited access to $y$: given a job $j$, sample a slot $s$ with probability proportional to $y(j,s)$; given a slot $s$, determine whether it is matched and, if so, to which job; and after an update, identify the small set of slots or matched jobs whose status may have become invalid.

These operations can be supported implicitly. For each machine $i$, its slots $S(i)$ form an ordered set of intervals. Using standard interval data structures, given a job-machine pair $(j,i)$, we can find $O(1)$ slots $s$ of machine $i$ where $(j,s)\in E_{H}$, as well as sample a neighbor of $j$ according to the $y$-distribution.

Thus the algorithm never materializes all edges of $H$. The explicit object we maintain is only the integral matching $M_{H}$, plus compact interval data structures that allow the required sampling queries. Updates to the fractional assignment may implicitly change many edges of $H$, but will lead to few slot updates because slots are stable. Each slot update can unmatch one job. Each unmatched job can be repaired very quickly by a short random augmenting path. Hence, the final update time is small.

\paragraph{Putting the pieces together.}

The fractional algorithm supplies a dynamic fractional assignment with load $1+O(\eps)$ and bounded total change. The rounding algorithm treats this fractional assignment as a black box, maintains the implicit slot graph $H$ and the matching $M_{H}$, and repairs the matching $M_{H}$ using short random augmenting paths guided by $y$. Combining the two parts gives a decremental algorithm for load balancing under edge deletions with near-linear total update time and integral load $(1+O(\eps))\Gamma+W$. 

We note that the rounding framework is completely black-box and could be applied to other dynamic fractional load-balancing algorithms as well, whether decremental, incremental, or fully dynamic. For example, we can apply it to the incremental algorithm of Bhattacharya, Kiss, and Saranurak \cite{BhattacharyaKS23}. 

%% file: prelim.tex
\section{Preliminaries}
\label{sec:prelim}

\paragraph{Input graph.}
Consider a bipartite graph $G = (J \cup M, E)$, where the nodes in $J$ (resp.~$M$) are referred to as {\em jobs} (resp.~{\em machines}), and a function $w : J \rightarrow [1, W]$ where $w(j)$ denotes the {\em weight} of a job $j \in J$, for some parameter $W > 1$.  We use the index $j$ (resp.~$i$) to denote a job (resp.~machine). For any subset of jobs $J' \subseteq J$, we define $w(J') := \sum_{j \in J'} w(j)$. For any node $v \in J \cup M$, let $N_G(v) := \{ u \in J \cup M : (u, v) \in E\}$ denote the set of neighbors of $v$ in $G$. Whenever the graph $G$ is clear from the context, we omit the subscript $G$ and simply write $N(v)$ instead of $N_G(v)$. 

\paragraph{Flow.}
We refer to any function $x : E \rightarrow \mathbb{R}_{\geq 0}$ as a {\em flow} in $G$. In particular, we say that $x(j, i)$ is the {\em flow on the edge} $(j, i) \in E$ in $x$; it specifies the fractional extent to which $x$ assigns the job $j$ to the machine $i$. For all jobs $j \in J$, we define $x(j) := \sum_{i \in N(j)} x(j, i)$ to be the {\em total flow sent out by $j$}. Similarly, for all machines $i \in M$, we define $x(i) := \sum_{j \in N(i)} x(j, i)$ to be the {\em total flow absorbed by $i$}. 

\paragraph{Weighted flow.}
For each $(j, i) \in E$, we refer to the quantity $w(j) \cdot x(j, i)$ as the {\em weighted flow} on the edge $(j, i)$. For all jobs $j \in J$, we define $w_x(j) := \sum_{i \in N(j)} w(j) \cdot x(j, i)$ to be the {\em total weighted flow sent out by $j$}. Similarly, for all machines $i \in M$, we define $w_x(i) := \sum_{j \in N(i)} w(j) \cdot x(j, i)$ to be the {\em total weighted flow absorbed by $i$}. We also refer to $w_x(i)$ as the {\em load of a machine $i \in M$} under $x$. The {\em load of $x$} is defined to be $\max_{i \in M} w_x(i)$, i.e., it equals the maximum load of any machine. 

\paragraph{Assignment.}
We say that a flow $x$ is an {\em assignment} iff  $x(j) = 1$ for all jobs $j \in J$.  We say that $x$ is an {\em integral assignment} iff $x$ is an assignment and $x(j, i) \in \{0, 1\}$ for every edge $(j, i) \in E$. We use the phrase {\em fractional assignment} to refer to an assignment that might or might not be integral. 

We say that $x : E \rightarrow \mathbb{R}_{\geq 0}$ is an {\em $\epsilon$-fractional assignment} iff $x(j) \geq (1+\epsilon)^{-1}$ for every job $j \in J$. 

\paragraph{Load balancing instance.}
In the {\em load balancing} problem, we receive $\left(G = (J \cup M, E), w\right)$ as input, and our goal is to compute an {\em integral assignment} with minimum load. We let $\opt_f(G, w)$ and $\opt(G, w)$ respectively denote the load of an optimal fractional assignment in $(G, w)$ and the load of an optimal integral assignment in $(G, w)$.

\paragraph{Dynamic setting.}
At preprocessing, we receive an input $(G, w)$, and subsequently this input goes through a sequence of {\em restricting updates}: Each update either deletes an existing edge $(j, i) \in E$, or inserts a job $j$ and specifies its weight $w(j)$ and all its incident edges. 
Throughout the rest of this paper, we let $m$ and $n$ respectively denote the number of edges that ever appear and the total number of jobs and machines present in the input at the end of the last update. For ease of presentation, we will assume that our algorithm knows the value of $n$ in advance.\footnote{We can easily modify our algorithm to work without this assumption by guessing the value of $n$. Whenever the actual value of $n$ exceeds this guess, we double our guess and restart the algorithm from scratch on the current input. This incurs only an extra logarithmic-factor overhead in our total update time.}

\paragraph{Parameters.} For the rest of the paper, we define the following  parameters: 
\begin{equation}
\label{eq:epsilon}
\epsilon \in (0, 1/100), \ \ 
\alpha := \frac{5}{\epsilon^3},  \ \ \beta := 1+10\epsilon, \ \ L :=  \log_{(1+\epsilon)} \left(\frac{W m n}{\epsilon}\right)^2, \ \ \kappa :=  \log_{(1+\epsilon/2)} \left( \frac{Wmn}{\epsilon} \right)^6.
\end{equation} For ease of exposition, throughout the rest of the paper we will assume that $L$ and $\kappa$ are integers. 

\thatchaphol{Note for the next version: We can set $\alpha = 10^{10}/\epsilon^2$ and $  L :=  \log_{(1+\epsilon)} \left(\frac{W m n}{\epsilon^2}\right)$ and say that the maximum level is $L(L+\alpha)$. We should be able to remove the dependency on $W$ as well.}


\section{Core Components}

Our dynamic algorithm will consist of two steps. First, we show how to maintain a good $\epsilon$-fractional assignment, as summarized in \Cref{th:fractional} below. Next, we design an efficient {\em rounding scheme} that converts this $\epsilon$-fractional assignment into an integral assignment, with only a small overhead in the approximation guarantee and total update time.

\begin{theorem}
\label{th:fractional} Consider an input $(G = (J \cup M, E), w)$ to the load balancing problem undergoing a sequence of restricting updates,
and a parameter $\Gamma > 0$. Then, there is a deterministic algorithm that
\begin{itemize}
\item Either maintains an $\epsilon$-fractional assignment $x$ in $(G, w)$ with load at most $\beta (1+\epsilon) \cdot \Gamma$,
\item Or certifies that $\opt_f(G, w) > \Gamma$.
\end{itemize}
The algorithm has a total update time of $\tilde{O}(m/\epsilon^4)$, where the  $\tilde{O}(.)$ notation hides polylogarithmic factors  in $m, n, W$ and $1/\epsilon$. 
\end{theorem}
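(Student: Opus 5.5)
We would prove \Cref{th:fractional} by implementing the weighted version of the \textsc{congest}-style multiplicative-update algorithm described in \Cref{sec:overview:frac}, and then turning the key claim into the certificate. After rescaling weights by $\Gamma$, each job $j$ and machine $i$ carries an integer level, and each current edge carries flow $x(j,i)=\eta(1+\epsilon)^{\ell(j)-\ell(i)}$ with $\eta$ polynomially small in $Wmn/\epsilon$ (consistent with the choice of $L$ in \eqref{eq:epsilon}). The repair loop has two rules. While some machine has load $w_x(i)>\beta(1+\epsilon)\Gamma$, we whack it. Otherwise, while some job has $x(j)<(1+\epsilon)^{-1}$, we boost it. Whacking has priority, so loads stay at most $(1+\epsilon)^2\beta\Gamma$ at all times.

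When the loop is idle, $x$ is an $\epsilon$-fractional assignment of load at most $\beta(1+\epsilon)\Gamma$. Restricting updates are handled easily. A deleted edge is removed from $x$, which may trigger boosts. An inserted job starts at level $0$, or at whatever level makes its flow tiny, so it only triggers repairs. We fix a level cap $L_{\max}=\Theta((L+\alpha)\kappa)$. If any vertex ever needs to exceed $L_{\max}$, we output ``$\opt_f(G,w)>\Gamma$''. The total time is $O(\sum_v \deg(v)\cdot L_{\max})=\tilde O(m/\epsilon^4)$, given that $\alpha=5/\epsilon^3$ and $L,\kappa=O(\log(Wmn/\epsilon)/\epsilon)$. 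A counting check is needed here, since $L\alpha\kappa\approx 1/\epsilon^5$; we would try bounding the number of distinct vertex levels more tightly, or charging $\alpha$ shifts more cleverly. This is bookkeeping rather than the crux.

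Correctness of the certificate is the main content. We prove the contrapositive: if a fractional assignment $x^*$ of load at most $\Gamma$ exists, no level exceeds $L_{\max}$. The proof follows the overview and uses the history graph $\G=(\J,\M,\E)$ with history flow $z$, working with \emph{weighted} history flow $w(j)z(\cdot)$ throughout.

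\textbf{Step 1 (job suffix bound).} For a job suffix of $q$ levels, we show $Z(j,k)\le 2q+1/\epsilon$. We decompose $Z$ into initial and new parts, bound the new part by $\frac1\epsilon\mathrm{Loss}$, use $\mathrm{Loss}\le\mathrm{Gain}+1$, and use $\mathrm{Gain}\le\frac{\epsilon}{1+\epsilon}q$. The only subtlety is that a boost may overshoot the threshold, but any boost from below $(1+\epsilon)^{-1}$ ends with $x(j)<1$.

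\textbf{Step 2 (machine suffix bound).} The symmetric argument gives $Z(i,k)\ge(2+\epsilon)\beta\Gamma q-O(\beta\Gamma/\epsilon)$. It uses two facts: the load is at least $\beta\Gamma$ after each whack, and it is always at most $2\beta\Gamma$, which is where whack priority enters. For $q\ge\alpha$ these two bounds yield the flow gap.

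\textbf{Step 3 (first expansion).} Double counting over the sets $\widehat{\J}_{\ge k-\alpha}$ and $\widehat{\M}_{\ge k}$, with weights, gives $w(\widehat{\J}_{\ge k-\alpha})\ge(1+\epsilon)\Gamma|\widehat{\M}_{\ge k}|$. The contribution of history edges from low-level jobs is at most $\eta(1+\epsilon)^{0}$ per history edge. The number of history edges is at most $m\cdot L_{\max}^2$, so the choice of $\eta$ makes this term negligible.

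\textbf{Step 4 (second expansion).} This replaces injectivity by flow counting with $x^*$. For a node $(j,r)$, spread its weight $w(j)$ along $x^*(j,i)$ to the nodes $(i,r-L)$. These exist because $\ell(i)\ge\ell(j)-L$ whenever $x^*(j,i)>0$, since otherwise $x(j,i)>1$. Each machine node then receives at most $\Gamma$, so $\Gamma|\widehat{\M}_{\ge k-L}|\ge w(\widehat{\J}_{\ge k})$.

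\textbf{Step 5 (iteration).} Chaining Steps 3 and 4 gives $|\widehat{\M}_{\ge k-L-\alpha}|\ge(1+\epsilon)|\widehat{\M}_{\ge k}|$. Iterating from a nonempty top set at most $\kappa$ times before exceeding the at most $nL_{\max}$ nodes bounds the maximum level. The key point is that $\kappa$ uses base $1+\epsilon/2$, which absorbs the negligible terms.

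The main obstacle is Steps 3 and 4 at the boundary. We must ensure the shifted suffixes $(i,r-L)$ and the job suffixes feeding $\widehat{\M}_{\ge k}$ actually have length at least $\alpha$, so that the long-suffix bounds apply. We must also correctly account for history edges to nodes whose vertex has since moved above level $k$ but whose suffix is short. We would handle short suffixes by charging their additive $O(1/\epsilon)$ error against the $\alpha$ shift.
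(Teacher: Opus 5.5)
Your proposal is correct and follows essentially the paper's proof: the whack-priority algorithm with a level cap, the job and machine suffix bounds on the history graph, the two expansion inequalities (the second obtained by spreading weight along $x^\star$, as in \Cref{new:cl:expansion}), and then $\kappa$ rounds of geometric growth. The only difference is in Step 3: you shift by $\alpha$ and make all flow from job-level nodes below $k$ negligible by taking $\eta$ small enough (e.g.\ $(\epsilon/(Wmn))^{c}$ for a suitable constant $c$, together with the trivial reduction to $\Gamma\ge 1/n$), whereas the paper keeps $\eta=(\epsilon/(Wmn))^2$ and uses a big/small edge threshold at $(\epsilon/(Wmn))^{14}$, paying an extra $6L$ in the shift. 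Your two flagged worries are non-issues: the cap $\Theta(\kappa(L+\alpha))$ is additive in $L$ and $\alpha$ and hence already $\tilde{O}(1/\epsilon^4)$ (no product $L\alpha\kappa$ enters the running time), and the $\alpha$ shift in Step 3 together with $\ell(i)\ge\ell(j)-L$ in Step 4 automatically gives every suffix involved length at least $\alpha$, so no charging of short suffixes is needed.
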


The guarantees of our rounding scheme are summarized by the following theorem. Note that the rounding scheme functions under fully dynamic vertex and edge updates. We treat the insertion or deletion of a vertex $v$ as $\deg_G(v)$ edge updates to the underlying fractional assignment.

\begin{restatable}{theorem}{thmrounding}
\label{thm:rounding}

Let $(G_t = (J_t \cup M_t, E_t), w_t)$ be a dynamic load balancing instance with maximum job weight $W$. Let $x_t$ denote an $\epsilon$-fractional assignment at time $t$ with maximum load $\Gamma_t$. Over a sequence of updates to the instance (which could consist of both edge and vertex insertions and deletions), let $U$ denote the total number of edge updates to the fractional assignment $x_t$.

There exists a randomized dynamic algorithm that, given access to $G_t, w_t, x_t$, maintains an integral assignment of $(G_t,w_t)$ with maximum load at most $(1 + O(\epsilon)) \Gamma_t + W$ at time $t$. The algorithm has total update time $\tilde{O}((m + U)/\epsilon^3)$ with high probability and works against an adaptive adversary.
\end{restatable}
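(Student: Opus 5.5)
We follow the strategy sketched in \Cref{sec:overview:round}: reduce to maintaining a job-perfect matching in an implicit KLS-style slot graph $H_t$, and repair the matching with random augmenting paths guided by the gap assignment $y$.

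\textbf{Step 1: the slot graph.} First normalize $x_t$ by scaling each job's flow so that it sends exactly one unit; this raises loads by at most a $(1+\epsilon)$ factor. For each machine $i$, sort its incident jobs by weight in non-increasing order and lay their weighted flows $w(j)x(j,i)$ consecutively on a line. Cut this line into slots of flow-length $(1-2\epsilon)^{-1}$ units. To make the slots stable, follow \cite{Krishnaswamy0S23}: use geometric weight classes and boundaries rounded to multiples of $\epsilon$, so that a change of $\delta$ in $x$ changes only $O(\delta/\epsilon^2)+O(1)$ slots. A job is adjacent to each slot whose interval it overlaps.

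We then check the two properties of $H_t$. The first is the Shmoys--Tardos load bound: in any job-perfect matching, the load of machine $i$ is at most $W$ plus the sum, over its slots, of the minimum weight in the previous slot, which gives $(1+O(\epsilon))\Gamma_t+W$. The second is the gap assignment: setting $y(j,s)$ proportional to the overlap gives $y(j)=1$ and $y(s)\le 1-2\epsilon$. The number of slot insertions and deletions is $O(U/\epsilon^2)$ up to normalization. Each such event, and each edge update, unmatches at most one job.

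\textbf{Step 2: random augmenting paths.} Let $j$ be a free job. Sample a slot $s$ with probability $y(j,s)$. If $s$ is free, augment along the path found so far. Otherwise, move to the job $M_H(s)$ and repeat, with the convention that we never return along a used matched edge.

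The short-path bound is the hard step. The idea is to consider the flow of probability mass along the walk: each matched slot passes on at most $1-2\epsilon$ of the mass it receives to its matched job. This suggests the mass still walking after $k$ steps is at most $(1-2\epsilon)^k$, so paths have length $O(\log n/\epsilon)$ with high probability.

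Making this rigorous is the obstacle. I would formalize it as a ``$y$-flow'' argument: the walk from $j$ is dominated by pushing one unit along $y$ restricted to the alternating structure. The total mass landing on matched slots at depth $k$ is at most $(1-2\epsilon)^k$ times the total mass on their matched jobs, because matched jobs and slots are in bijection and $y(s)\le 1-2\epsilon$. This requires bounding the stationary mass summed over all jobs rather than along a single path, which is where the gap genuinely enters.

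Adaptivity also needs care. Fresh randomness is used for each repair, and the analysis holds for any fixed current matching. Hence a union bound over the at most $\poly(n)$ repairs gives the high-probability bound even against an adaptive adversary.

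\textbf{Step 3: data structures.} For each machine we keep a balanced search tree over the weight-sorted list, augmented with prefix sums. This supports, in $O(\log n)$ time, three operations: finding the $O(1)$ slots overlapping a job's interval, sampling a slot proportionally to $y(j,\cdot)$, and identifying which slots change after an update to $x$. To sample a slot from $j$, first pick a machine $i$ with probability $x(j,i)$ using a per-job sampling tree. Then pick one of the $O(1)$ overlapping slots proportionally to its overlap.

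\textbf{Total time.} Each repair costs $O(\log^2 n/\epsilon)$. There are $O((m+U)/\epsilon^2)$ unmatch events, so the total update time is $\tilde O((m+U)/\epsilon^3)$.
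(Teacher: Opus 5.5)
Your overall plan is the paper's: an implicit slot graph, random augmenting walks guided by the gap assignment $y$, and per-machine interval trees. However, both the slot-graph construction and the walk bound fail as written.

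\textbf{Slot graph.} Your parameters are inverted. Slots must have capacity at most $1-2\epsilon$ measured in \emph{unweighted} flow $x(j,i)$ (with $N_G(i)$ sorted by weight), not length $(1-2\epsilon)^{-1}$ in weighted flow. With your choice, $y(s)$ can reach $(1-2\epsilon)^{-1}>1$, so there is no gap. Concretely, if $n$ unit-weight jobs are adjacent only to machine $i$, the line for $i$ has length about $n$ but is cut into only about $(1-2\epsilon)n$ slots, so no job-perfect matching exists. The weighted layout also breaks the Shmoys--Tardos load bound.

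More seriously, cutting the line into fixed-length slots lacks the stability you invoke when you say each event unmatches at most one job, and rounding the boundaries does not change this. Raising the flow of an early job shifts every later job interval across the fixed boundaries. Many matched edges then become invalid even though no slot is inserted or deleted. The paper obtains the needed property, that only edges at $j$ or at modified slots change (\Cref{lem:slot-decomposition:dynamic}), as follows:
\begin{itemize}
\item the slot containing $j$'s interval grows or shrinks at the same rate as $j$, so all later jobs and slots shift together;
\item buckets of weight $\Theta(1/\epsilon)$, with resets, keep non-final slots in $[1-4\epsilon,1-2\epsilon]$.
\end{itemize}

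You should also drop the normalization. One change of $x(j,i)$ alters $x(j)$, and hence the normalized value on every edge at $j$. That is $\deg(j)$ updates to the per-machine structures, which ruins the $\tilde O(U)$ accounting. It is unnecessary anyway, since $y(j)=x(j)\geq(1+\epsilon)^{-1}\geq 1-\epsilon$ already beats $y(s)\leq 1-2\epsilon$.

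\textbf{Walk bound.} Your proposed formalization is false. A matched slot forwards \emph{all} of its mass to its partner, not a $(1-2\epsilon)$ fraction. The surviving mass also need not shrink per step. If every slot adjacent to the start job is matched, nothing is absorbed in the first step. In a layered alternating structure whose layers grow by the factor $(1-\epsilon)/(1-2\epsilon)$, the walk survives with probability exactly $1$ for $\Omega(\log n/\epsilon)$ steps.

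What contracts is a sup-norm density. Let $p_t(l)$ be the probability that the $(t+1)$-st job of the walk is $l$, and let $Q_t:=\max_l p_t(l)/y(l)$. If $r$ is matched to $l$, then $p_{t+1}(l)=\sum_{l'}p_t(l')\,y(l',r)/y(l')\leq Q_t\,y(r)$. Hence $Q_{t+1}\leq\frac{1-2\epsilon}{1-\epsilon}Q_t\leq(1-\epsilon)Q_t$. The walk is still alive after $t$ steps with probability $\sum_l p_t(l)\leq Q_t\sum_l y(l)\leq\poly(n)\,Q_t$, which is $1/\poly(n)$ once $t=O(\log n/\epsilon)$.

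This argument needs the walk to be a Markov chain. So do not forbid returning along used matched edges; instead let the walk revisit vertices and shortcut its cycles before augmenting.
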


\paragraph{Deriving the main theorems.}
Run the algorithms of \Cref{th:fractional,thm:rounding} with accuracy
$\delta=c\epsilon$, where $c>0$ is a sufficiently small constant. In
parallel, run \Cref{th:fractional} for the guesses
\[
    \Gamma_r:=\frac{(1+\delta)^r}{n},
    \qquad
    0\le r\le \left\lceil\log_{1+\delta}(n^2W)\right\rceil.
\]
For every feasible nonempty instance, $1/n\le\opt_f\le nW$. Hence, if
$\widehat\Gamma$ is the smallest guess not certified as too small by
\Cref{th:fractional}, then
\begin{equation}
    \label{eq:smallest-valid-guess}
    \widehat\Gamma\le(1+\delta)\opt_f.
\end{equation}
There are $O(\delta^{-1}\log(nW))$ guesses, so running all instances only
increases the total update time by a factor polynomial in
$\log(nW)/\epsilon$.

\begin{proof}[Proof of \Cref{thm:intro-fractional}]
Let $x$ be the $\delta$-fractional assignment for
$\widehat\Gamma$, and normalize it job by job:
$\bar x(j,i):=x(j,i)/x(j)$. Then $\bar x$ is a fractional assignment and
its load is at most $(1+\delta)$ times that of $x$. Since
$\beta=1+10\delta$, \Cref{th:fractional} and
\eqref{eq:smallest-valid-guess} imply that its load is at most
\[
    \beta(1+\delta)^3\opt_f=(1+O(\delta))\opt_f
    \le(1+\epsilon)\opt_f,
\]
where the last inequality follows by choosing $c$ sufficiently small. The
claimed total update time follows from the bound on the number of guesses.
\end{proof}

\begin{proof}[Proof of \Cref{thm:intro-integral}]
For every guess, apply \Cref{thm:rounding} to the fractional assignment
maintained by \Cref{th:fractional}. The number of edge-value changes in
each such assignment is at most its total update time, so
$U=\widetilde O(m/\delta^4)$. Thus, \Cref{thm:rounding}, summed over all
guesses, still takes total time
$m\poly(\log(nW)/\epsilon)$ with high probability. Output the rounded
assignment for $\widehat\Gamma$. Its load is at most
\[
    (1+O(\delta))\beta(1+\delta)\widehat\Gamma+W
    \le (1+\epsilon)\opt+W
    \le (2+\epsilon)\opt,
\]
where we used \eqref{eq:smallest-valid-guess}, $\opt_f\le\opt$, and
$W\le\opt$. A union bound over all guesses gives the claimed
high-probability guarantee, including against an adaptive adversary.
\end{proof}

%% file: fractional-algorithm.tex
\section{Maintaining an $\epsilon$-Fractional Assignment: The Algorithm}
\label{sec:algo}

Note that if $\Gamma < 1/n$, then we can easily certify that $\opt_f(G, w) > \Gamma$ at preprocessing. Accordingly, throughout the rest of the paper, we assume that
\begin{equation}
\label{eq:gamma:100}
\Gamma \geq 1/n.
\end{equation}

\thatchaphol{TODO: we should remove dependency on $W$ from the running time.}

We will show how to maintain an $\epsilon$-fractional assignment $x$ in an input $(G = (J \cup M, E), w)$ undergoing decremental updates. Our dynamic algorithm repeatedly invokes two simple subroutines. The subroutine $\boost(j, G, x)$  {\em increases} the flow on each edge incident on the job $j \in J$ by a multiplicative $(1+\epsilon)$ factor (see \Cref{alg:boost}). In contrast, the subroutine $\whack(i, G, x)$ {\em decreases} the flow on each edge incident on the machine $i \in M$ by a multiplicative $(1+\epsilon)$ factor (see \Cref{alg:whack}).  Whenever we call the subroutine $\boost(j, G, x)$, we say that the {\bf job $j$ gets boosted}. Similarly, whenever we call the subroutine $\whack(i, G, x)$, we say that the {\bf machine $i$ gets whacked}.

\begin{algorithm}[h]
\caption{$\boost(j, G = (J \cup M, E), x)$, for some job $j \in J$. \label{alg:boost}}
  \DontPrintSemicolon
   \For{every $i \in N(j)$}{
        $x(j, i) \leftarrow (1+\epsilon) \cdot x(j, i)$.
   }
    \Return $x$.
\end{algorithm}

\begin{algorithm}[h]
\caption{$\whack(i, G = (J \cup M, E), x)$, for some machine $i \in M$. \label{alg:whack}}
  \DontPrintSemicolon
   \For{every $j \in N(i)$}{
        $x(j, i) \leftarrow (1+\epsilon)^{-1} \cdot x(j, i)$.
   }
    \Return $x$.
\end{algorithm}

\begin{algorithm}[h]
\caption{$\preprocess(G = (J \cup M, E), w)$ \label{alg:preprocess}}
  \DontPrintSemicolon
  \For{every job $j \in J$}{
        Initialize $\ell(j) \leftarrow 0$.
  }
  \For{every machine $i \in M$}{
        Initialize $\ell(i) \leftarrow 0$.
  }
   \For{every edge $(j, i) \in E$}{
       Initialize $x(j, i) \leftarrow \left(\frac{\epsilon}{Wmn}\right)^2 \cdot (1+\epsilon)^{\ell(j) - \ell(i)} = \left(\frac{\epsilon}{Wmn}\right)^2$. 
        \label{line:level:init:1}
   }
   \While{$x$ is {\em not} an $\epsilon$-fractional assignment in $(G, w)$ with load at most $\beta(1+\epsilon) \cdot \Gamma$}{
   \If{there exists some machine $i \in M$ with $w_x(i) > \beta(1+\epsilon)  \cdot \Gamma$}{ 
            $\whack(i, G, x)$.     \qquad // See \Cref{alg:whack}  \label{line:whack:preprocess}

            $\ell(i) \leftarrow \ell(i) + 1$.

            \If{$\ell(i) \geq 10 \kappa(7L+\alpha)$}{
                {\bf terminate}. \qquad // We will show that this implies $\opt_f(G, w) > \Gamma$.
            }
   }
   \Else{
            There must necessarily exist some job $j \in J$ with $x(j) < (1+\epsilon)^{-1}$. 
            
            $\boost(j, G, x)$.  \qquad // See \Cref{alg:boost}

            $\ell(j) \leftarrow \ell(j) + 1$.
            
            \If{$\ell(j) \geq 10 \kappa(7L+\alpha)$}{
                {\bf terminate}. \qquad // We will show that this implies $\opt_f(G, w) > \Gamma$.
            }
   }
   }
    
    \Return $x$.
\end{algorithm}

We define the {\bf level of a job} (resp.~{\bf machine}) to be the number of times it has been boosted (resp.~whacked) until now. Let $\ell(j)$ and $\ell(i)$ respectively denote the levels of a job $j \in J$ and a machine $i \in M$. At the start of preprocessing, we initialize $\ell(j) = \ell(i) = 0$ for every job $j \in J$ and machine $i \in M$. While dealing with the sequence of updates, whenever a job $j$ gets inserted, we initialize $\ell(j) = 0$. Subsequently, the level of a job (resp.~machine) increases monotonically with time as it keeps getting boosted (resp.~whacked). 

\begin{algorithm}[h]
\caption{$\delete$ \label{alg:delete-edge}}
  \DontPrintSemicolon
  \If{the update consists of the deletion of an edge $(j, i)$ from $E$}{   
        Delete the edge $(j, i)$ from the input.
   }
   \ElseIf{the update consists of the insertion of a job $j$}{
        Initialize $\ell(j) \leftarrow 0$.
        
        \For{all $i \in N(j)$}{
                $x(j, i) \leftarrow \left(\frac{\epsilon}{Wmn}\right)^2 \cdot (1+\epsilon)^{\ell(j) -\ell(i)} = \left(\frac{\epsilon}{Wmn}\right)^2 \cdot (1+\epsilon)^{-\ell(i)}$. \label{line:level:init:2}
            }
   }
   \While{$x$ is {\em not} an $\epsilon$-fractional assignment in $(G, w)$ with load at most $\beta(1+\epsilon) \cdot \Gamma$}{
   \If{there exists some machine $i \in M$ with $w_x(i) > \beta(1+\epsilon)  \cdot \Gamma$}{ 
            $\whack(i, G, x)$.     \qquad // See \Cref{alg:whack}   \label{line:whack:update}

            $\ell(i) \leftarrow \ell(i) + 1$.

            \If{$\ell(i) \geq 10\kappa(7L+\alpha)$}{
                {\bf terminate}. \qquad // We will show that this implies $\opt_f(G, w) > \Gamma$.
            }
   }
   \Else{
            There must necessarily exist some job $j \in J$ with $x(j) < (1+\epsilon)^{-1}$. 
            
            $\boost(j, G, x)$.  \qquad // See \Cref{alg:boost}

            $\ell(j) \leftarrow \ell(j) + 1$.
            
            \If{$\ell(j) \geq 10 \kappa(7L+\alpha)$}{
                {\bf terminate}. \qquad // We will show that this implies $\opt_f(G, w) > \Gamma$.
            }
   }
   }
\end{algorithm}

At preprocessing, we receive the input $(G = (J \cup M, E), w)$, and we initialize the flow $x$ by calling \Cref{alg:preprocess}. 
The algorithm starts by setting a flow of $\epsilon^2/(Wmn)^2$ on each edge. 
Next, as long as $x$ is not an $\epsilon$-fractional assignment with load at most $\beta(1+\epsilon) \cdot \Gamma$, we repeatedly perform the following operation: If there is a machine $i \in M$ with load more than $\beta(1+\epsilon) \cdot \Gamma$, then we whack the machine $i$. Otherwise, there must exist  some job $j \in J$ with $x(j) < (1+\epsilon)^{-1}$,  and we boost the job $j$.  Note that we always prioritize whacking machines over boosting jobs, and this tie-breaking rule will play a crucial role in our analysis. Furthermore, 
whenever the level of any job or machine becomes $\geq 10 \kappa(7L+\alpha)$, 
we terminate the algorithm. We will show that this occurs only if $\opt_f(G, w) > \Gamma$.

We handle each update as described in \Cref{alg:delete-edge}. In a bit more detail, if the update involves insertion of a new job $j$, 
then we initialize $\ell(j) = 0$ and a flow $x(j, i) = \left(\frac{\epsilon}{Wmn}\right)^2 \cdot (1+\epsilon)^{-\ell(i)}$ on each edge $(j, i) \in E$ incident on $j$. 
In contrast, if the update involves the deletion of an edge $(j, i)$, then we simply remove the flow $x(j, i)$. 
After this, we perform the same operations as we did at preprocessing, until either  $x$ is an $\epsilon$-fractional assignment with load at most $\beta(1+\epsilon) \cdot \Gamma$, or 
the level of any job or machine becomes $\geq 10 \kappa(7L+\alpha)$. 
In the latter case, we terminate the algorithm and declare that $\opt_f(G, w) > \Gamma$.

\medskip

We conclude this section with a few crucial observations.

\begin{observation}
\label{obs:flow}
The flow on an edge $(j, i) \in E$ is always equal to $x(j, i) = \left(\frac{\epsilon}{Wmn}\right)^2 \cdot (1+\epsilon)^{\ell(j) - \ell(i)}$. 
\end{observation}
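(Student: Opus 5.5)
The proof is by induction over the sequence of operations the algorithm performs: preprocessing initialization, job insertions, edge deletions, boosts and whacks. The invariant is that for every edge $(j,i)$ currently in $E$, the value $x(j,i)$ equals $\left(\frac{\epsilon}{Wmn}\right)^2 (1+\epsilon)^{\ell(j)-\ell(i)}$, where $\ell(\cdot)$ are the current levels. The only quantities that ever change are $x$ on existing edges, the levels, and the edge set. So it suffices to check that each operation preserves the invariant on every edge that survives it, and that every newly created edge satisfies it.

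For the base cases, consider first \Cref{alg:preprocess}. It sets all levels to $0$ and assigns $x(j,i)=\left(\frac{\epsilon}{Wmn}\right)^2(1+\epsilon)^{0}$ on every edge, which is exactly the claimed formula. When a job $j$ is inserted in \Cref{alg:delete-edge}, it receives $\ell(j)=0$. Its new incident edges get $x(j,i)=\left(\frac{\epsilon}{Wmn}\right)^2(1+\epsilon)^{-\ell(i)}$, which again matches the formula because $\ell(j)=0$. This insertion changes no other edge and no other level, so the invariant continues to hold on all old edges.

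For the inductive steps, an edge deletion simply removes $(j,i)$ from $E$. It leaves every other flow value and every level unchanged, so the invariant holds vacuously for the removed edge and trivially for the rest. A call to $\boost(j,G,x)$ multiplies $x(j,i)$ by $(1+\epsilon)$ for each $i\in N(j)$, and the algorithm immediately increments $\ell(j)$. This raises the exponent $\ell(j)-\ell(i)$ by exactly one on precisely these edges, and no edge not incident to $j$ involves $\ell(j)$. Symmetrically, $\whack(i,G,x)$ divides $x(j,i)$ by $(1+\epsilon)$ for each $j\in N(i)$ and is followed by $\ell(i)\leftarrow\ell(i)+1$, which lowers the exponent by one on exactly the edges incident to $i$.

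There is no real obstacle in this argument. The only points needing care are bookkeeping. First, the level increment and the flow rescaling happen together, so the invariant should be asserted only between these paired steps, and the terminate checks do not alter $x$. Second, the same fixed constant $\left(\frac{\epsilon}{Wmn}\right)^2$ is used at both initialization sites, with $m,n$ known in advance as assumed in \Cref{sec:prelim}.
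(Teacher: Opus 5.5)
Your proof is correct and follows the same route as the paper: check the formula at the two initialization sites (preprocessing and job insertion), then observe that each boost or whack rescales exactly the incident edges by $(1+\epsilon)^{\pm 1}$ while shifting the corresponding level by one. Your explicit handling of edge deletions, and of the pairing between the rescaling and the level increment, only spells out bookkeeping that the paper leaves implicit.
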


\begin{proof}
The observation clearly holds when we first initialize the flow on this edge (see Line~\ref{line:level:init:1} in \Cref{alg:preprocess} and Line~\ref{line:level:init:2} in \Cref{alg:delete-edge}). Subsequently, whenever we boost the job $j$, the flow $x(j, i)$ increases by a multiplicative $(1+\epsilon)$ factor and the level $\ell(j)$ increases by an additive one. In contrast, whenever we whack the machine $i$, the flow $x(j, i)$ decreases by a multiplicative $(1+\epsilon)$ factor and the level $\ell(i)$ increases by an additive one. Thus, the observation continues to hold at all times.
\end{proof}

\begin{observation}
\label{obs:jobs:flow:upper:bound}
For every job $j \in J$, we always have $0 \leq x(j) < 1$.
\end{observation}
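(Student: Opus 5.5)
The plan is to prove the bound $0 \leq x(j) < 1$ by induction over the sequence of operations the algorithm performs: initialization of a job's edges, edge deletions, boosts, and whacks.

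Nonnegativity is immediate. By \Cref{obs:flow}, every edge carries $x(j,i) = \left(\frac{\epsilon}{Wmn}\right)^2 (1+\epsilon)^{\ell(j)-\ell(i)} > 0$, so $x(j)$ is a sum of nonnegative terms.

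For the strict upper bound, I would go through the operations that change $x(j)$ one at a time.

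\textbf{Initialization.} When $j$ is created, either at preprocessing or upon insertion, $\ell(j)=0$. Each incident edge therefore carries at most $\left(\frac{\epsilon}{Wmn}\right)^2$, since $\ell(i)\ge 0$. Because $j$ has at most $m$ incident edges, we get $x(j) \le m\epsilon^2/(Wmn)^2 < 1$.

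\textbf{Deletions and whacks.} An edge deletion removes a nonnegative term, and whacking a machine $i\in N(j)$ divides $x(j,i)$ by $1+\epsilon$. Neither operation increases $x(j)$, so the invariant is preserved.

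\textbf{Boosts.} This is the only operation that increases $x(j)$, and it is the only real step. The algorithm boosts $j$ only in the else-branch, which requires that $x(j) < (1+\epsilon)^{-1}$ immediately before the boost. The boost multiplies every incident edge value, and hence $x(j)$, by exactly $1+\epsilon$. Afterwards, $x(j) < (1+\epsilon)\cdot(1+\epsilon)^{-1} = 1$.

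Every change to $x(j)$ falls into one of these cases, so the bound $0 \le x(j) < 1$ holds at all times. There is no real obstacle here. The only care needed is to note that the boost condition is checked against the current value of $x(j)$, and that boosting scales all of $j$'s incident edges uniformly, which is exactly what \Cref{alg:boost} does.
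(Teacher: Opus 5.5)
Your proposal is correct and follows essentially the same argument as the paper. Both bound $x(j)$ at initialization using $|N(j)|\le m$, observe that only a boost of $j$ can increase $x(j)$, and use the boost precondition $x(j)<(1+\epsilon)^{-1}$. Your version is slightly more careful than the paper's in two places: it checks strictness at initialization, and it explicitly handles nonnegativity and the non-increasing operations.
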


\begin{proof} 
If the job $j$ is present at preprocessing, 
then we initialize $x(j) = \sum_{i \in N(j)} x(j,i) = |N(j)| \cdot \left(\frac{\epsilon}{Wmn}\right)^2 \leq m \cdot \left(\frac{\epsilon}{Wmn}\right)^2 \leq 1$ (see Line~\ref{line:level:init:1} in \Cref{alg:preprocess}). 
Otherwise, when it is inserted, 
we initialize $x(j) = \sum_{i \in N(j)} x(j,i) = \sum_{i \in N(j)} \left(\frac{\epsilon}{Wmn}\right)^2 \cdot (1+\epsilon)^{-\ell(j)} \leq  |N(j)| \cdot \left(\frac{\epsilon}{Wmn}\right)^2  \leq 1$ (see Line~\ref{line:level:init:2} in \Cref{alg:delete-edge}). 

Subsequently, the value of $x(j)$ increases only when the job $j$ gets boosted. We decide to boost the job $j$ only if $x(j) < (1+\epsilon)^{-1}$. On the other hand, whenever we boost the job $j$, the value of $x(j)$ increases by a multiplicative $(1+\epsilon)$ factor. Thus, we always continue to have $x(j) < 1$. 
\end{proof}

\begin{corollary}
\label{obs:flow:upper:bound}
For every edge $(j, i) \in E$, we always have $0 \leq x(j, i) < 1$. 
\end{corollary}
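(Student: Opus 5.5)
The bound is a direct consequence of \Cref{obs:jobs:flow:upper:bound}, since every edge flow is nonnegative and is one of the summands of its job's total outgoing flow. Fix any edge $(j,i) \in E$, which is currently present in $G$.

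For the lower bound, \Cref{obs:flow} gives $x(j,i) = \left(\frac{\epsilon}{Wmn}\right)^2 (1+\epsilon)^{\ell(j)-\ell(i)}$. This is a positive constant times a positive power of $(1+\epsilon)$, so $x(j,i) > 0$, and in particular $x(j,i) \geq 0$.

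For the upper bound, $x(j) = \sum_{i' \in N(j)} x(j,i')$ by definition. The edge $(j,i)$ is present, so $i \in N(j)$ and $x(j,i)$ is one of the terms of this sum. Every other term is nonnegative by the same application of \Cref{obs:flow}. Dropping those terms and applying \Cref{obs:jobs:flow:upper:bound} gives
\[
x(j,i) \le x(j) < 1 .
\]

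Both inequalities hold at every moment, because the two observations they rest on hold at all times during preprocessing and during the handling of updates. I expect no real obstacle here. The only points to check are that edge flows are always nonnegative, which is why the remaining terms may be dropped, and that a deleted edge simply disappears from $E$ and so is not covered by the statement.
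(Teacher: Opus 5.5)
Your proof is correct and matches the paper's argument. The paper also derives the bound directly from \Cref{obs:jobs:flow:upper:bound} via $x(j) = \sum_{i \in N(j)} x(j,i)$; your appeal to \Cref{obs:flow} for the nonnegativity of the dropped terms just makes explicit a step the paper leaves implicit.
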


\begin{proof}
Follows from \Cref{obs:jobs:flow:upper:bound}, since $x(j) = \sum_{i \in N(j)} x(j, i)$.
\end{proof}

\begin{observation}
\label{assume:label}
For all $i \in M$ and $j \in J$, 
we have $\ell(i) \leq 10 \kappa(7L+\alpha)$ and $\ell(j) \leq 10\kappa(7L+\alpha)$. 
\end{observation}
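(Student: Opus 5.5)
The statement is a direct consequence of how the levels are updated and of the termination rule in \Cref{alg:preprocess,alg:delete-edge}. The plan is an induction over the sequence of level changes. The invariant is that, as long as the algorithm has not terminated, every job and machine has level strictly less than $10\kappa(7L+\alpha)$. After termination the levels never change again, so the claimed bound $\le 10\kappa(7L+\alpha)$ follows.

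First, the base cases. Every level is initialized to $0$ at preprocessing, and every newly inserted job also gets $\ell(j)=0$ in \Cref{alg:delete-edge}. Since $\kappa,L,\alpha>0$, we have $0<10\kappa(7L+\alpha)$, so the invariant holds at these moments.

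Next, the only operations that change a level are a whack, which sets $\ell(i)\leftarrow\ell(i)+1$, and a boost, which sets $\ell(j)\leftarrow\ell(j)+1$. Each raises a single level by exactly one. Before such an operation the level is an integer strictly below the threshold $T:=10\kappa(7L+\alpha)$. Hence afterwards it is at most $\lceil T\rceil$, which equals $T$ under the standing assumption that $L$ and $\kappa$ are integers. (If $\alpha$ is not an integer, one would compare with $\lceil T\rceil$ instead; this is harmless.)

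Immediately after every increment, both algorithms test whether the new level is $\ge T$ and, if so, terminate. Thus either the level is still $<T$ and the invariant is preserved, or the level equals $T$ and the algorithm halts with no further level changes. In either case every level stays at most $T$ forever.

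There is no real obstacle. The only subtlety is that the observation concerns the levels the algorithm maintains, and these are frozen once termination occurs. The separate claim that termination certifies $\opt_f(G,w)>\Gamma$ is proved later and is not needed here.
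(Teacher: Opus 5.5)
Your proof is correct and matches the paper's, which simply observes that the algorithm terminates as soon as any level reaches $10\kappa(7L+\alpha)$; your induction spells this out. One small nit: $\lceil T\rceil = T$ also requires $10\kappa\alpha$ to be an integer, not only $L$ and $\kappa$, but as you note (and as the paper implicitly assumes) this is harmless.
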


\begin{proof}
We terminate 
when a job or a machine reaches level $10 \kappa(7L+\alpha)$ (see \Cref{alg:preprocess} and  \Cref{alg:delete-edge}).
\end{proof}

\section{Maintaining an $\epsilon$-Fractional Assignment: The Analysis}
\label{sec:analysis}

We analyze the update time of our algorithm in the lemma below.

\begin{lemma}
\label{lm:main:time}
Our dynamic algorithm, as described in \Cref{sec:algo}, has a total update time of $\tilde{O}(m/\epsilon^4)$. 
\end{lemma}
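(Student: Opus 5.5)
The plan is a charging argument that bounds the total work by the number of level increments, weighted by degree. By \Cref{assume:label}, every job and machine has level at most $\ell_{\max} := 10\kappa(7L+\alpha)$ at all times. Boosting $j$ (resp.~whacking $i$) increments $\ell(j)$ (resp.~$\ell(i)$), so each vertex is boosted or whacked at most $\ell_{\max}$ times. A call to $\boost(j,G,x)$ costs $O(|N(j)|)$ and a call to $\whack(i,G,x)$ costs $O(|N(i)|)$, with degrees measured in the current graph. Since degrees only shrink over time, apart from the initial creation of a job, we can bound each current degree by the total number of edges ever incident to that vertex. Summing over all vertices, the total boost/whack cost is at most $\ell_{\max}\sum_v \deg_{\text{ever}}(v) = O(m\,\ell_{\max})$.

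Next I compute $\ell_{\max}$. With $\epsilon<1/100$, we have $L = 2\log_{1+\epsilon}(Wmn/\epsilon) = O(\log(Wmn/\epsilon)/\epsilon)$ and $\kappa = 6\log_{1+\epsilon/2}(Wmn/\epsilon) = O(\log(Wmn/\epsilon)/\epsilon)$. Also $\alpha = 5/\epsilon^3$. Hence
\[
\ell_{\max} = O\!\left(\frac{\log(Wmn/\epsilon)}{\epsilon}\cdot\left(\frac{\log(Wmn/\epsilon)}{\epsilon}+\frac{1}{\epsilon^3}\right)\right) = \tilde O(1/\epsilon^4),
\]
which gives $\tilde O(m/\epsilon^4)$ total time for all boosts and whacks. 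Edge deletions cost $O(1)$ each. Job insertions cost $O(\deg(j))$ to initialize flows, and preprocessing costs $O(m)$; together these contribute $O(m)$.

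The remaining obstacle is implementation. The while-loop must find, in $O(1)$ or polylog time per iteration, a machine with $w_x(i) > \beta(1+\epsilon)\Gamma$, or else a job with $x(j)<(1+\epsilon)^{-1}$, or detect that neither exists. I would maintain $w_x(i)$ for each machine and $x(j)$ for each job explicitly, updating them whenever an incident edge's flow changes. These updates happen only during boosts, whacks, deletions, and insertions, so each one is already charged $O(1)$ per touched edge above. I would also maintain two sets: $O_M$ of overloaded machines and $U_J$ of underfull jobs, updated in $O(1)$ whenever the corresponding value crosses its threshold. Each iteration of the loop then picks an element of $O_M$ if it is nonempty, else one of $U_J$, else stops, costing $O(1)$ plus the boost/whack cost. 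Each iteration performs exactly one boost or whack, so the number of iterations is at most the number of level increments, namely $O(n\,\ell_{\max})$. This is dominated by $O(m\,\ell_{\max})$, assuming isolated vertices are discarded or absorbed into $n\le m+n$, which is polylog-harmless. Combining everything yields the $\tilde O(m/\epsilon^4)$ bound of \Cref{lm:main:time}.
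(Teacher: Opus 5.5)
Your proof is correct and is essentially the paper's argument. The paper places one token on each currently incident edge per boost or whack, and uses the level cap $10\kappa(7L+\alpha)=\tilde O(1/\epsilon^4)$ from \Cref{assume:label} to bound the tokens per edge; this is the same double count as your sum $\ell_{\max}\sum_v \deg_{\text{ever}}(v)$. Your extra bookkeeping fills in what the paper calls ``standard data structures'': maintaining $x(j)$, $w_x(i)$ and the overloaded/underfull sets, plus the $O(n\ell_{\max})$ loop overhead. One small imprecision: machine degrees can grow when jobs are inserted, but your ever-degree bound already covers this.
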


\begin{proof}
Using standard data structures, our total update time is dominated (up to a polylogarithmic factor) by the time spent on the calls to the $
\boost(j, G, x)$ and $\whack(i,G, x)$ subroutines. Furthermore, a call to $\boost(j, G, x)$ takes time proportional to the current degree of the job $j$ in $G$, whereas a call to $\whack(i, G, x)$ takes time proportional to the current degree of the machine $i$ in $G$. \peter{As the pseudocode is written, it is not clear how the jobs/machines to be boosted/whacked are found. It is trivial, as they must be a neighbor of a boosted/whacked node, but maybe worth mentioning.}

Suppose that whenever we make a call to $\boost(j, G, x)$ (resp.~$\whack(i,G,x)$), we place one unit of token on each of the edges currently incident on the job $j$ (resp.~the machine $i$). Evidently, 
we terminate the algorithm whenever the level of any job or machine becomes equal to $10 \kappa(7L+\alpha)$ (see \Cref{assume:label}). 
Hence, throughout the entire duration of our algorithm, each edge that is ever present in the input receives 
at most $O(\kappa(L+\alpha)) = \tilde{O}(1/\epsilon^4)$ tokens (see \Cref{eq:epsilon}). This implies that the total number of tokens that ever get created is at most $\tilde{O}(m/\epsilon^4)$. 
As discussed above, these tokens are sufficient to pay  for the total update time of our algorithm. This concludes the proof.
\end{proof}

Our key technical contribution is to prove the following lemma.

\begin{lemma}
\label{lm:upper:bound:level} Throughout the sequence of updates, as long as $\opt_f(G) \leq \Gamma$, 
it must necessarily be the case that $\ell(j) < 10 \kappa(7L+\alpha)$ for all jobs $j \in J$ and $\ell(i) < 10 \kappa(7L+\alpha)$ for all machines $i \in M$. 
\end{lemma}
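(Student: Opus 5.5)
We argue by contradiction. Fix a moment at which $\opt_f(G,w)\le\Gamma$ and, for the first time, some vertex reaches level $10\kappa(7L+\alpha)$. All earlier moments also satisfy $\opt_f\le\Gamma$, because restricting updates only increase $\opt_f$. We then follow the overview and work in the \emph{history graph} $\G=(\J,\M,\E)$. It has a node $(v,r)$ for every vertex $v$ and every level $r\le\ell(v)$ that $v$ has reached. It has an edge $((j,k_j),(i,k_i))$ whenever $(j,i)$ was present while the endpoints had these levels, and this edge carries history flow $z=\eta(1+\eps)^{k_j-k_i}$ with $\eta=(\eps/(Wmn))^2$. Loads are measured in weighted flow and normalized by $\Gamma$. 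For a vertex $v$ and a threshold $k\le\ell(v)$, let $Z(v,k)=\sum_{r=k}^{\ell(v)}z(v,r)$ be the suffix sum, taken with weights on the machine side. \Cref{obs:flow} and \Cref{obs:jobs:flow:upper:bound} are the basic tools.

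The first step is the job-suffix upper bound $Z(j,k)\le 2q+1/\eps$ with $q=\ell(j)-k+1$. We write $Z=Z^{\rm init}+Z^{\rm new}$ and bound $Z^{\rm init}<q$ using $x(j)<1$. We bound $Z^{\rm new}\le \eps^{-1}\mathrm{Loss}\le \eps^{-1}(\mathrm{Gain}+1)\le q+1/\eps$, since each boost raises $x(j)$ by at most $\eps/(1+\eps)$ and each whack of a neighbour lowers $x(j)$ by an $\eps/(1+\eps)$ fraction of the edge flow while adding $1/(1+\eps)$ of it to history. The second step is the machine-suffix lower bound $Z(i,k)\ge(2+\eps)\beta\Gamma q-O(\beta\Gamma/\eps)$ for $k\ge1$. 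Immediately after each whack, the weighted load is still at least $\beta\Gamma$, because the whack happened only when the load exceeded $\beta(1+\eps)\Gamma$. The ``new'' part is bounded below by a Gain/Loss argument. The gains come from neighbouring boosts and job insertions, and they are recorded in history. The losses come from whacks and deletions. The total loss of $w_x(i)$ over the suffix is at least $\eps/(1+\eps)$ times the load at each whack, and the net change is at most $2\beta(1+\eps)\Gamma$. Here we need the whack-first priority, which keeps $w_x(i)\le(1+\eps)^2\beta\Gamma$. For $q\ge\alpha=5/\eps^3$, these two steps give average rates of at most $(2+\eps/10)$ and at least $(2+\eps/2)\beta\Gamma$.

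The third step is the expansion argument on the long-suffix sets $\widehat{\J}_{\ge k}$ and $\widehat{\M}_{\ge k}$. For the first inequality, we double count the history flow entering $\widehat{\M}_{\ge k}$. History edges whose job endpoint has level at most the machine level minus $L$ carry flow at most $\eta(1+\eps)^{-L}$, which is negligible. This holds even after summing over all at most $m\cdot(10\kappa(7L+\alpha))^2$ history edges, thanks to the choice of $L$. The flow from the remaining edges is absorbed by $\widehat{\J}_{\ge k-L-\alpha}$. Because job weights appear, we count $w$-weighted job-level nodes and obtain $w(\widehat{\J}_{\ge k-L'})\Gamma\cdot(2+\eps/10)\ge(2+\eps/2)\beta\Gamma|\widehat{\M}_{\ge k}|-\text{tiny}$, with $L'=O(L+\alpha)$. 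For the second inequality, we replace the injection by a flow-counting argument using a current fractional assignment $x^*$ of load at most $\Gamma$. A job $j$ with $x^*(j,i)>0$ has $\ell(i)\ge\ell(j)-L$, since otherwise $x(j,i)\ge 1$, contradicting \Cref{obs:flow:upper:bound}. We shift each job-level node $(j,r)$ to $(i,r-\Delta)$ for a suitable shift $\Delta\in[0,L]$ and route weight $w(j)x^*(j,i)$ along it. Since each machine-level node receives at most $\Gamma$, this gives $\Gamma|\widehat{\M}_{\ge k-O(L)}|\ge w(\widehat{\J}_{\ge k})$. Combining the two inequalities yields $|\widehat{\M}_{\ge k-(7L+\alpha)}|\ge(1+\eps/2)|\widehat{\M}_{\ge k}|$ for as long as the right-hand set is nonempty. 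The top set is nonempty near the maximal level, and every set has size at most $n\cdot 10\kappa(7L+\alpha)$. Iterating $\kappa$ times from the top level would therefore exceed this size bound by the definition of $\kappa$, a contradiction. Jobs at the top level are handled as well: a job at a high level forces its $x^*$-neighbours to levels within $L$ below it, so the same argument applies.

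The main obstacle is the machine-side lower bound together with making the shifted mapping in the second inequality injective enough. The machine-level suffix must stay aligned when levels differ by an amount that is not fixed. I would first choose $\Delta$ as exactly $\ell(j)-\ell(i)$ clipped to $[0,L]$, which uses that the suffix is long relative to $L$. The edge deletions between whacks also need care, because the tail of the $Z^{\rm new}$ bound only holds while the edges persist. I expect the additive $O(\beta/\eps)$ error to absorb these effects, since deleted flow simply counts as unrecorded loss.
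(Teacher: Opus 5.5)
Your proposal is correct and is essentially the paper's proof: the history graph; the job-suffix bound $2q+1/\epsilon$; the machine-suffix lower bound via the same gain/loss accounting (edge deletions only add unrecorded loss, so they are harmless and need no absorption); the two expansion inequalities on $\hat{\M}_{\geq k}$ and $\hat{\J}_{\geq k}$, where your clipped-shift routing of $w(j)x^\star(j,i)$ is a node-by-node form of the counting in Lemma~\ref{new:cl:expansion}; and $\kappa$ rounds of $(1+\epsilon/2)$-growth. The one quantitative slip is the negligible-edge cutoff: a level gap of $L$ only bounds each such edge's flow by $(\epsilon/(Wmn))^4$, and summing over up to $m(10\kappa(7L+\alpha))^2$ edges leaves a factor $\epsilon^{-4}$, so this is negligible only when $\epsilon$ is at least inverse-polynomial in $Wmn$; the paper instead cuts at gap $6L$ (flow below $(\epsilon/(Wmn))^{14}$, bounded per machine-level node as in Corollary~\ref{cor:edge:2}), which is exactly why the combined shift is $7L+\alpha$.
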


It is trivial to verify that as long as the levels of all jobs and machines 
remain below $10\kappa(7L+\alpha)$ after handling the current update, 
our algorithm  maintains an $\epsilon$-fractional assignment $x$ in $(G, w)$ with load at most $\beta(1+\epsilon) \cdot \Gamma$. Accordingly, \Cref{th:fractional} follows from \Cref{lm:main:time} and \Cref{lm:upper:bound:level}.

\paragraph{Roadmap for the rest of this section.} We devote the rest of \Cref{sec:analysis} towards proving \Cref{lm:upper:bound:level}. We start by introducing the concepts of an {\em history graph} $\G$ and its {\em history flow} $z$ in \Cref{sec:history-graph}. We emphasize that these concepts are used only for the sake of analysis. Our algorithm, as described in \Cref{sec:algo}, is oblivious to them. Finally, we derive the proof of \Cref{lm:upper:bound:level} in \Cref{sec:history-graph:properties}.

\subsection{The History Graph $\G = (\J \cup \M, \E)$ and the History Flow $z : \E \rightarrow \mathbb{R}_{\geq 0}$}
\label{sec:history-graph}

\paragraph{Time-steps.} We think of time as consisting of a sequence of discrete steps $t \in \mathbb{Z}_{\geq 0}$.\footnote{Throughout this paper, the symbol $\mathbb{Z}_{\geq 0}$ denotes the set of nonnegative integers.} Time-step $t = 0$ refers to the scenario before preprocessing. Each subsequent time-step $t > 0$ corresponds to one of the following types of events: (i) deletion of an edge, (ii) insertion of a job, (iii) boosting a job, and (iv) whacking a machine. We use the superscript $t$ to denote the status of any object at the {\em end} of time-step $t$. 
We are now ready to define the history graph and its history flow.

\begin{definition}
\label{def:history-graph}
At the end of  time-step $t \in \mathbb{Z}_{\geq 0}$, the {\em history graph} $\G = (\J \cup \M, \E)$ is given by:
\begin{eqnarray*}
\J^{(t)} & := & \left\{ (j, k_j) \in J^{(t)} \times \mathbb{Z}_{\geq 0} : k_j \leq \ell^{(t)}(j) \right\}.  \\
\M^{(t)} & := & \left\{ (i, k_i) \in M^{(t)} \times \mathbb{Z}_{\geq 0} : k_i \leq \ell^{(t)}(i) \right\}. \\
\E^{(t)} & := & \left\{ ((j,k_j), (i, k_i)) \in \J^{(t)} \times \M^{(t)} : \exists \, t' \in [0, t] \text{ s.t. } (j, i) \in E^{(t')}, k_j = \ell^{(t')}(j), k_i = \ell^{(t')}(i) \right\}. 
\end{eqnarray*}
Furthermore, 
the {\em history flow} $z: \E \rightarrow \mathbb{R}_{\geq 0}$ at the end of time-step $t \in \mathbb{Z}_{\geq 0}$ is given by:
$$z^{(t)}\big((j,k_j), (i, k_i)\big)  =  \left(\frac{\epsilon}{Wmn}\right)^2 \cdot (1+\epsilon)^{k_j - k_i} \text{ for every edge } ((j,k_j), (i, k_i)) \in \E^{(t)}.$$ 
We refer to each node $(j, k_j) \in \J$ as an {\bf job-level node at level $k_j$},  each node $(i, k_i) \in \M$ as an {\bf machine-level node at level $k_i$}, and each edge $\big((j,k_j), (i, k_i)\big) \in \E$ as an {\bf history edge}. We also abuse notation and define $w : \J \rightarrow [1, W]$ such that $w(j, k_j) = w(j)$ for all $(j, k_j) \in \J$.
\end{definition}

\begin{observation}
\label{obs:history-edge}
Consider any time-step $t' \in \mathbb{Z}_{\geq 0}$, and any edge $e = (j, i) \in E^{(t')}$ with $\ell^{(t')}(j) = k_j$ and $\ell^{(t')}(i) = k_i$. Then, at every future time-step $t \geq t'$, there exists a history edge $e_{\mathrm{hist}} = \big( (j, k_j), (i, k_i) \big) \in \E^{(t)}$ with history flow $z^{(t)}(e_{\mathrm{hist}}) = x^{(t')}(e)$. Furthermore, these are the only history edges that are present in $\G$ at any given time-step. \peter{This sentence is correct, but on a first glance suggests that the state of $G$ at a given time step describes $\mathcal{G}$, which is not the case}
\end{observation}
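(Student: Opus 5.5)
This follows directly from \Cref{def:history-graph} combined with \Cref{obs:flow}. I would verify the two parts separately: the existence and persistence of the record for a given current edge, and the fact that nothing else appears.

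\textbf{Existence and persistence.} Fix $t'$, an edge $e=(j,i)\in E^{(t')}$, and levels $k_j=\ell^{(t')}(j)$, $k_i=\ell^{(t')}(i)$. Let $t\ge t'$. I first check that both endpoints of the candidate history edge lie in the node sets at time $t$.
\begin{itemize}
\item Jobs are never deleted under restricting updates, and machines are never removed, so $j\in J^{(t)}$ and $i\in M^{(t)}$.
\item Levels only increase, so $\ell^{(t)}(j)\ge k_j$ and $\ell^{(t)}(i)\ge k_i$.
\item Hence $(j,k_j)\in\J^{(t)}$ and $(i,k_i)\in\M^{(t)}$.
\end{itemize}
The witness $t'\in[0,t]$ in the definition of $\E^{(t)}$ then certifies that $e_{\mathrm{hist}}=((j,k_j),(i,k_i))\in\E^{(t)}$. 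For the flow value, the definition gives $z^{(t)}(e_{\mathrm{hist}})=(\epsilon/(Wmn))^2(1+\epsilon)^{k_j-k_i}$. By \Cref{obs:flow} applied at time $t'$, this equals $x^{(t')}(e)$. The value of $z$ depends only on the labels $k_j,k_i$, not on $t$, which is exactly why the record never changes.

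\textbf{No other history edges.} Take any $((j,k_j),(i,k_i))\in\E^{(t)}$. By definition there is some $t'\in[0,t]$ with $(j,i)\in E^{(t')}$, $\ell^{(t')}(j)=k_j$ and $\ell^{(t')}(i)=k_i$. So the edge is exactly of the form produced in the first part for that $t'$, and its $z$-value equals $x^{(t')}(j,i)$ by the same computation.

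\textbf{Main subtlety.} One point needs care: a history edge may arise from several time-steps $t'$, for example while the levels stay fixed. I would check that all such $t'$ give the same value $x^{(t')}(j,i)$. This holds by \Cref{obs:flow}, since the flow on an edge depends only on the current levels of its endpoints, so $z$ is well-defined. Beyond this, the argument is pure definition-chasing, and the only structural facts used are the monotonicity of levels and of the node sets.
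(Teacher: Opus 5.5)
Your proposal is correct and takes the same route as the paper, which proves this observation in one line by citing \Cref{def:history-graph} and \Cref{obs:flow}. Your write-up unpacks that citation: it uses the monotonicity of the node sets and levels to keep $(j,k_j)\in\J^{(t)}$ and $(i,k_i)\in\M^{(t)}$, and it checks that $z$ is well defined when several witnesses $t'$ give the same history edge.
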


\begin{proof}
Follows from \Cref{def:history-graph} and \Cref{obs:flow}.
\end{proof}

By \Cref{obs:history-edge}, the history graph $\G$ and the history flow $z$ evolve {\em monotonically} over time, in contrast with the input $(G, w)$ and the flow $x$. This is summarized in the corollary below.

\begin{corollary}
\label{cor:monotone}
No job-level node, machine-level node or history edge ever gets deleted from $\G$. Moreover, the history flow $z(e_{\mathrm{hist}})$ on a history edge $e_{\mathrm{hist}}$ gets fixed at the time  $e_{\mathrm{hist}}$ is inserted into $\G$, and it never changes subsequently in the future. 
\end{corollary}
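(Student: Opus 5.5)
The statement to prove is \Cref{cor:monotone}: nothing is ever removed from $\G$, and the history flow on a history edge is fixed forever once that edge appears. The plan is to read both claims directly off \Cref{def:history-graph}, using only two structural facts about the algorithm. First, levels never decrease: $\ell^{(t)}(v) \le \ell^{(t+1)}(v)$ for every job or machine $v$, since levels change only via $\ell(\cdot)\leftarrow\ell(\cdot)+1$ in \Cref{alg:preprocess} and \Cref{alg:delete-edge}. Second, the vertex sets only grow: $J^{(t)} \subseteq J^{(t+1)}$ and $M^{(t)} \subseteq M^{(t+1)}$, because restricting updates delete only edges and insert only jobs.

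\textbf{Nodes are never deleted.} If $(j,k_j)\in\J^{(t)}$, then $j \in J^{(t)} \subseteq J^{(t+1)}$ and $k_j \le \ell^{(t)}(j) \le \ell^{(t+1)}(j)$. Hence $(j,k_j)\in\J^{(t+1)}$. The same argument applies verbatim to machine-level nodes.

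\textbf{Edges are never deleted.} Suppose $((j,k_j),(i,k_i))\in\E^{(t)}$, witnessed by some $t'\in[0,t]$ with $(j,i)\in E^{(t')}$, $\ell^{(t')}(j)=k_j$ and $\ell^{(t')}(i)=k_i$. The same $t'$ lies in $[0,t+1]$, so it still witnesses the existential condition at time $t+1$. By the previous paragraph, both endpoints still belong to $\J^{(t+1)}\times\M^{(t+1)}$. Hence the edge belongs to $\E^{(t+1)}$. Induction on $t$ then gives monotonicity for all later time-steps.

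\textbf{The history flow never changes.} By \Cref{def:history-graph}, $z^{(t)}$ on the edge $((j,k_j),(i,k_i))$ equals $(\epsilon/(Wmn))^2(1+\epsilon)^{k_j-k_i}$. This expression depends only on the labels $k_j,k_i$ of the edge and not on $t$. So $z$ takes the same value at every time-step at which the edge is present, in particular at the time it is inserted. By \Cref{obs:history-edge} (equivalently \Cref{obs:flow}), this value equals $x^{(t')}(j,i)$ at the witnessing time $t'$. The value is well-defined even if several witnesses $t'$ exist, because it depends only on $(k_j,k_i)$.

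\textbf{Main subtlety.} The only point requiring care is the vertex-set monotonicity. One must note that machines are never removed and that jobs are only inserted in the decremental model. Apart from that, everything is a direct unfolding of the definitions.
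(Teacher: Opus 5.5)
Your proof is correct and takes essentially the same route as the paper. The paper derives the corollary directly from \Cref{obs:history-edge}, which is itself just \Cref{def:history-graph} combined with \Cref{obs:flow}, and you unfold that definition yourself; your explicit check that nodes persist, using monotone levels and the fact that restricting updates never delete jobs or machines, spells out a step the paper leaves implicit.
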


\paragraph{Notational Conventions.} While dealing with  the history graph $(\G, w)$ and the history flow $z$, we will use the same notational conventions introduced in \Cref{sec:prelim}. For example, the symbol $$w_z(i, k_i) := \sum_{(j, k_j) \in N_{\G}(i, k_i)} w(j, k_j) \cdot z\big( (j, k_j), (i, k_i) \big) = \sum_{(j, k_j) \in N_{\G}(i, k_i)} w(j) \cdot z\big( (j, k_j), (i, k_i) \big)$$ denotes the load of a machine-level node $(i, k_i)$. Whenever we use the phrases ``{\em total weighted flow absorbed by the machine-level node $(i, k_i)$'}' and ``{\em total flow sent out by the job-level node $(j, k_j)$}'', we respectively refer to  $w_z(i, k_i)$ and $z(j, k_j)$. In contrast, the phrases ``{\em total weighted flow absorbed by the machine $i$}'' and ``{\em total flow sent out by the job $j$}'' respectively refer to $w_x(i)$ and $x(j)$.

\subsection{Establishing the Correctness of Our Algorithm: Proof of \Cref{lm:upper:bound:level}}
\label{sec:history-graph:properties}

\paragraph{Key Notations.} We introduce some notations that will be used throughout the rest of this section.  
For every level $k \in \mathbb{Z}_{\geq 0}$,  we define  
\begin{eqnarray*}
J_{\geq k} & := & \left\{j \in J : \ell(j) \geq k\right\}. \\  
\J_{\geq k} & := & \left\{ (j, k_j) \in \J : k_j \geq k \right\}. \\
M_{\geq k} & := & \left\{ i \in M  : \ell(i) \geq k \right\}. 
\\ \M_{\geq k} & := &  \left\{ (i, k_i) \in \M : k_i \geq k \right\}.
\end{eqnarray*}
Next, recall the parameter $\alpha$ from \Cref{eq:epsilon}. For every $k \in \mathbb{Z}_{\geq 0}$, we define
\begin{eqnarray*}
\hat{J}_{\geq k} & := & \left\{j \in J_{\geq k} : \ell(j) \geq k+\alpha-1\right\}. \\ \hat{\J}_{\geq k} & := & \left\{ (j, k_j) \in \J_{\geq k} : j \in \hat{J}_{\geq k} \right\}. \\ 
\hat{M}_{\geq k} & := & \left\{ i \in M_{\geq k} : \ell(i) \geq k+\alpha-1 \right\}. \\  \hat{\M}_{\geq k} & := & \left\{ (i, k_i) \in \M_{\geq k} : i \in \hat{M}_{\geq k} \right\}.
\end{eqnarray*}

In words, the set $\hat{J}_{\geq k}$ consists of all jobs in $J_{\geq k}$ whose levels are at least $\alpha-1$ larger than the threshold $k$, and the set $\hat{\J}_{\geq k}$ consists of all the job-level nodes in $\J_{\geq k}$ that originate from  $\hat{J}_{\geq k}$. Thus, each job in $\hat{J}_{\geq k}$ contributes at least $\alpha$ job-level nodes to the set $\hat{\J}_{\geq k}$. The symbols $\hat{M}_{\geq k}$ and $\hat{\M}_{\geq k}$ are defined analogously.  It is easy to verify that
\begin{equation}
\label{eq:equivalence}
\left| \hat{\J}_{\geq k} \right| = \sum_{j \in \hat{J}_{\geq k}} \left(\ell(j) - k + 1 \right) \text{ and }
\left| \hat{\M}_{\geq k} \right| = \sum_{i \in \hat{M}_{\geq k}} \left(\ell(i) - k + 1 \right).
\end{equation}

We next classify each history edge as either {\bf big} or {\bf small}, as follows.
\begin{definition}
\label{def:big}
A history edge $e_{\mathrm{hist}} \in \E$ is {\bf big} if 
$z(e_{\mathrm{hist}}) \geq \left(\frac{\epsilon}{Wmn}\right)^{14}$, 
and {\bf small} otherwise.
\end{definition}

\paragraph{Roadmap for the Rest of \Cref{sec:history-graph:properties}.} 
We start by showing that the levels of any two neighboring jobs and machines are very close to each other, for most edges. Specifically, \Cref{cl:edge:1} implies that $\ell(i) \geq \ell(j) - L$ for all $(j, i) \in E$. In contrast, \Cref{cl:edge:2} implies that as long as any given edge $(j, i) \in E$ carries nontrivial amount of flow, we must have $\ell(j) \geq \ell(i) - 6L$. 

Next, in \Cref{cor:new:lm:job} we derive that for every level $k_j > 0$, each job-level node $(j, k)  \in \hat{\J}_{\geq k_j}$ sends out at most $2(1+\epsilon^2/10) \cdot w(j)$ units of weighted flow, on average. \Cref{cor:new:lm:job} is obtained from \Cref{new:lm:job}. The proof of \Cref{new:lm:job} is delegated to \Cref{sec:new:lm:job}.

In a similar vein, \Cref{cor:edge:2} shows that for every level $k_i > 6L + \alpha$, each machine-level node $(i, k) \in \hat{\M}_{\geq k_i}$ receives at least $2(1+\epsilon - \epsilon^2) \cdot \Gamma$ units of weighted flow, on average, from the job-level nodes in $\hat{\J}_{\geq k_i - 6L - \alpha}$. \Cref{cor:edge:2}, whose proof crucially uses \Cref{cl:edge:2}, is obtained from \Cref{new:lm:machine}. We delegate the proof of \Cref{new:lm:machine} to \Cref{sec:new:lm:machine}. 

 \Cref{new:lm:expansion:100} presents the first part of our {\em expansion argument}. In a bit more detail, \Cref{new:lm:expansion:100} upper bounds the number of machine-level nodes in $\hat{\M}_{\geq k}$ in terms of the total weight of the job-level nodes in $\hat{\J}_{\geq k-6L-\alpha}$. The proof of \Cref{new:lm:expansion:100} relies upon \Cref{cor:new:lm:job} and \Cref{cor:edge:2}.

 \Cref{new:cl:expansion} presents the second part of our expansion argument, by complementing  \Cref{new:lm:expansion:100}. Intuitively, \Cref{new:cl:expansion} upper bounds the total weight of the job-level nodes in $\hat{\J}_{\geq k}$ in terms of the number of machine-level nodes in $\hat{\M}_{\geq k-L}$. 

 In \Cref{cor:expansion}, we combine the statements of \Cref{new:lm:expansion:100} and \Cref{new:cl:expansion}, to argue that the size of the set $\hat{\M}_{\geq k}$ expands geometrically 
 whenever we decrease the value of $k$ by an additive $7L+\alpha$. 

 At the end of \Cref{cor:expansion}, we finally present the proof of \Cref{lm:upper:bound:level}. This proof relies upon the insight that the geometric expansion guaranteed by \Cref{cor:expansion} above cannot be repeated for more than logarithmic many iterations. 


\begin{claim}
\label{cl:edge:1}
For every history edge $\big( (j, k_j), (i, k_i)\big) \in \E$, we have $k_i \geq k_j - L$.
\end{claim}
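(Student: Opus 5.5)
The plan is to exploit the fact that the flow value recorded on any history edge is a genuine flow value that once existed in the current graph. By Corollary~\ref{obs:flow:upper:bound}, every such value is strictly less than one. Combined with the explicit formula for $z$, this forces the level gap $k_j - k_i$ to be at most $L$.

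\textbf{Step 1: realize the history edge.} Fix a history edge $\big((j,k_j),(i,k_i)\big)\in\E^{(t)}$. By Definition~\ref{def:history-graph} (equivalently Observation~\ref{obs:history-edge}), there exists a time-step $t'\le t$ with the following properties:
\begin{itemize}
\item $(j,i)\in E^{(t')}$;
\item $\ell^{(t')}(j)=k_j$ and $\ell^{(t')}(i)=k_i$;
\item $z^{(t)}$ of this history edge equals $x^{(t')}(j,i)$.
\end{itemize}

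\textbf{Step 2: bound the flow and extract the level gap.} By Observation~\ref{obs:flow}, at time $t'$ we have
\[
x^{(t')}(j,i)=\left(\tfrac{\epsilon}{Wmn}\right)^2(1+\epsilon)^{k_j-k_i}.
\]
By Corollary~\ref{obs:flow:upper:bound}, this quantity is strictly less than $1$. Taking $\log_{1+\epsilon}$ gives
\[
k_j-k_i<\log_{1+\epsilon}\left(\tfrac{Wmn}{\epsilon}\right)^2=L,
\]
using the definition of $L$ in \eqref{eq:epsilon}. Rearranging yields $k_i\ge k_j-L$, which is the claim; in fact we get it with strict inequality.

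\textbf{Main obstacle.} The only subtle point is making sure Corollary~\ref{obs:flow:upper:bound} applies at the intermediate time $t'$, and not just at the final time. This holds because Observation~\ref{obs:jobs:flow:upper:bound} is an invariant maintained at every time-step: $x(j)<1$ holds after initialization and after every boost.

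One should also note that the edge $(j,i)$ may have been deleted later. This causes no trouble, since the recorded value $z$ is fixed at insertion time by Corollary~\ref{cor:monotone}.

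Everything else is a direct calculation.
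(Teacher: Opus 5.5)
Your proof is correct and is essentially the paper's argument: you realize the history edge at some time $t'$ via Observation~\ref{obs:history-edge}, use Corollary~\ref{obs:flow:upper:bound} to get $x^{(t')}(j,i)<1$, and combine this with the explicit formula $\left(\epsilon/(Wmn)\right)^2(1+\epsilon)^{k_j-k_i}$ to bound $k_j-k_i$ by $L$. Your strict inequality $k_j-k_i<L$ is also valid, since $(1+\epsilon)^L=(Wmn/\epsilon)^2$ exactly by the definition of $L$.
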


\begin{proof}
Consider any time-step $t \in \mathbb{Z}_{\geq 0}$ and any history edge $e_{\mathrm{hist}} = \big( (j, k_j), (i, k_i)\big) \in \E^{(t)}$. Let $e = (j, i)$. By \Cref{obs:history-edge}, there is a time-step $t' \in [0, t]$ where $z^{(t)}(e_{\mathrm{hist}}) = x^{(t')}(e)$. By \Cref{obs:flow:upper:bound}, we have $x^{(t')}(e) < 1$. 
Finally, by \Cref{def:history-graph}, we have $z^{(t)}(e_{\mathrm{hist}}) = \left(\epsilon/(Wmn)\right)^2 \cdot (1+\epsilon)^{k_j - k_i}$. Putting everything together, we get $\left(\epsilon/(Wmn)\right)^2 \cdot (1+\epsilon)^{k_j - k_i} < 1$, which implies that  $k_j - k_i \leq \log_{(1+\epsilon)} (Wmn/\epsilon)^2 = L$ (see \Cref{eq:epsilon}).
\end{proof}

\begin{claim}
\label{cl:edge:2}
For every big history edge $\big( (j, k_j), (i, k_i)\big) \in \E$, we have  
$k_j \geq k_i - 6L$. 
\end{claim}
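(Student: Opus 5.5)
The claim should follow by direct computation from the closed form of the history flow, with no use of the algorithm's dynamics. Fix a big history edge $e_{\mathrm{hist}} = \big((j,k_j),(i,k_i)\big) \in \E$. By \Cref{def:history-graph} (equivalently, \Cref{obs:history-edge} combined with \Cref{obs:flow}), its history flow is exactly
\[
z(e_{\mathrm{hist}}) = \left(\frac{\epsilon}{Wmn}\right)^2 \cdot (1+\epsilon)^{k_j - k_i}.
\]
By \Cref{cor:monotone} this value never changes after the edge is created, so being big is a fixed property of the edge.

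The plan is to combine this formula with \Cref{def:big}. Being big gives $z(e_{\mathrm{hist}}) \geq (\epsilon/(Wmn))^{14}$. Dividing by $(\epsilon/(Wmn))^2$ yields
\[
(1+\epsilon)^{k_j - k_i} \geq \left(\frac{\epsilon}{Wmn}\right)^{12}.
\]
Taking $\log_{(1+\epsilon)}$ of both sides then gives
\[
k_j - k_i \geq -\log_{(1+\epsilon)}\left(\frac{Wmn}{\epsilon}\right)^{12}.
\]

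Finally, I match this against the parameter definition in \Cref{eq:epsilon}. There $L = \log_{(1+\epsilon)}(Wmn/\epsilon)^2$, so $\log_{(1+\epsilon)}(Wmn/\epsilon)^{12} = 6L$, and therefore $k_j \geq k_i - 6L$. This mirrors the proof of \Cref{cl:edge:1}, which used the upper bound $z < 1$ in place of the lower bound from bigness. I do not expect any genuine obstacle; the only care needed is that the exponents line up with the definition of $L$: $14 - 2 = 12 = 6 \cdot 2$.
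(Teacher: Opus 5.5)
Your proposal is correct and is essentially the paper's own proof: substitute the closed form $z(e_{\mathrm{hist}}) = (\epsilon/(Wmn))^2 (1+\epsilon)^{k_j - k_i}$ into the bigness threshold $(\epsilon/(Wmn))^{14}$, cancel, and take $\log_{(1+\epsilon)}$ to get $k_i - k_j \leq \log_{(1+\epsilon)}(Wmn/\epsilon)^{12} = 6L$. Your appeal to \Cref{cor:monotone} is harmless but not needed for the argument.
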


\begin{proof}
Consider any big history edge $e_{\mathrm{hist}} = \big( (j, k_j), (i, k_i)\big) \in \E$. Since $e_{\mathrm{hist}}$ is big, it follows that $z(e_{\mathrm{hist}}) \geq \left(\epsilon/(Wmn)\right)^{14}$. Moreover,  by \Cref{def:history-graph}, we have $z(e_{\mathrm{hist}}) = \left(\epsilon/(Wmn)\right)^2 \cdot (1+\epsilon)^{k_j - k_i}$. Thus, we conclude that $\left(\epsilon/(Wmn)\right)^{14} \leq \left(\epsilon/(Wmn)\right)^2 \cdot (1+\epsilon)^{k_j - k_i}$, and hence $(1+\epsilon)^{k_i - k_j} \leq \left(Wmn/\epsilon\right)^{12}$. In other words, this means that $k_i - k_j \leq  \log_{(1+\epsilon)}(Wmn/\epsilon)^{12} = 6 L$ (see \Cref{eq:epsilon}).
\end{proof}

\begin{lemma}
\label{new:lm:job}
For every  job $j \in J$ and level $0 < k_j < \ell(j)$, we have $$\sum_{k = k_j}^{\ell(j)} z(j, k) \leq 2 \cdot \left(\ell(j) - k_j+1\right) + \left(\frac{1}{\epsilon}\right).$$
\end{lemma}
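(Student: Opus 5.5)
We follow the decomposition sketched in the overview. Fix a job $j$ and a level $0<k_j<\ell(j)$, and set $q:=\ell(j)-k_j+1$. For each $r\in[k_j,\ell(j)]$, let $t_r$ be the time-step at which $j$ is boosted into level $r$. Let $t_{r+1}$ be the time-step at which it leaves level $r$, or the current time if $r=\ell(j)$. By \Cref{obs:history-edge}, the history edges out of $(j,r)$ are exactly the pairs $((j,r),(i,k_i))$ such that $(j,i)$ was present and $\ell(i)=k_i$ at some step in $[t_r,t_{r+1})$. Each such edge carries $z$-value equal to the value $x(j,i)$ at the first such moment. We split $z(j,r)=z^{\mathrm{init}}(j,r)+z^{\mathrm{new}}(j,r)$ as follows.
\begin{itemize}
\item $z^{\mathrm{init}}(j,r)$ is the contribution of the edges already present right after the boost at $t_r$. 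By \Cref{obs:flow} this equals $x^{(t_r)}(j)$.
\item $z^{\mathrm{new}}(j,r)$ is the contribution of the edges created later in the interval. Since no edges are inserted at an existing job, these arise only when a neighbouring machine $i$ is whacked while $j$ sits at level $r$; the whack creates a new history edge with value $x(j,i)/(1+\epsilon)$, the post-whack flow.
\end{itemize}
By \Cref{obs:jobs:flow:upper:bound}, $z^{\mathrm{init}}(j,r)<1$, so $\sum_r z^{\mathrm{init}}(j,r)<q$.

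For the new part we use a potential argument on $x(j)$ over the time window $[t_{k_j},\text{now}]$. Define $\mathrm{Loss}$ as the total decrease of $x(j)$ during this window, caused by whacks of neighbours or by deletions of incident edges. Define $\mathrm{Gain}$ as the total increase, which comes only from boosts of $j$.
\begin{itemize}
\item Each whack of a neighbour $i$ lowers $x(j)$ by $\frac{\epsilon}{1+\epsilon}x(j,i)$ and adds $\frac{1}{1+\epsilon}x(j,i)$ to $z^{\mathrm{new}}$. Deletions only add to $\mathrm{Loss}$. Hence $\sum_r z^{\mathrm{new}}(j,r)\le \frac1\epsilon\mathrm{Loss}$.
\item The boosts inside the window are those into levels $k_j+1,\dots,\ell(j)$. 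There are $q-1$ of them, since the boost into $k_j$ happens at time $t_{k_j}$ and we start counting just after it. Each boost is applied when $x(j)<(1+\epsilon)^{-1}$, so it raises $x(j)$ by less than $\epsilon/(1+\epsilon)$. Thus $\mathrm{Gain}<\frac{\epsilon}{1+\epsilon}(q-1)$.
\item Since $0\le x(j)<1$ throughout, $\mathrm{Loss}\le \mathrm{Gain}+x^{(t_{k_j})}(j)<\mathrm{Gain}+1$.
\end{itemize}
Combining these three bounds gives $\sum_r z^{\mathrm{new}}(j,r)< \frac{q-1}{1+\epsilon}+\frac1\epsilon\le q+\frac1\epsilon$. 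Adding the initial part yields the claimed bound $2q+1/\epsilon$.

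The main obstacle is bookkeeping rather than any estimate. First, we must check that every history edge out of $(j,r)$ is counted exactly once, either as initial or as new. Second, we must check that no history edge arises from job insertions or other events. The first point uses the fact that within $[t_r,t_{r+1})$ a new pair $(i,k_i)$ appears only when $\ell(i)$ increases. The second uses the fact that edges incident to an existing job are never inserted in the decremental model. The hypothesis $k_j>0$ guarantees that $t_{k_j}$ is a genuine boost time; otherwise we would need to handle the initialization instead. The strict inequality $k_j<\ell(j)$ is not actually needed.
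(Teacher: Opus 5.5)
Your proof is correct and follows essentially the same route as the paper. The paper likewise splits each $z(j,k)$ into its value at the boost time $t_k$ (at most $1$ by \Cref{obs:jobs:flow:upper:bound}) plus the later increase caused by whacks of neighbouring machines. It bounds the total increase by $\frac{1}{\epsilon}$ times the total loss of $x(j)$ over $[t_{k_j},t]$, which is at most the total gain plus $1$; the only difference is that you bound each boost's gain by $\frac{\epsilon}{1+\epsilon}$ instead of the paper's cruder $\epsilon$, which does not affect the stated bound.
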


We will prove the above lemma in \Cref{sec:new:lm:job}.
From this, we can conclude the following useful observation. Consider a level $k$ and a job $j$ whose current level is significantly higher than $k$, i.e., at least $k+\alpha-1$. 
When we average over the job-level nodes $(j, k),\dots,(j, \ell(j))$, each of them sends at most $2 \left(1 + \epsilon^2/10\right)$ times its weight units of weighted flow. This is summarized in the corollary below.

\thatchaphol{I would write $k$ instead of $k_j$ ($k_j$ is not determined by $j$ and so it is confusing.)}

\begin{corollary}
\label{cor:new:lm:job}
For every level $k_j > 0$ and job $j \in \hat{J}_{\geq k_j}$, we have $$\frac{\sum_{k = k_j}^{\ell(j)} w_z(j, k)}{\left(\ell(j) - k_j + 1\right)} \leq 2\left(1+\frac{\epsilon^2}{10}\right) \cdot w(j).$$ 
\end{corollary}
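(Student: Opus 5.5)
The plan is to derive the corollary directly from \Cref{new:lm:job}, using the definition of $\hat{J}_{\geq k_j}$ and the value of $\alpha$ from \Cref{eq:epsilon}. First, since every job-level node $(j,k)$ carries weight $w(j,k)=w(j)$, the total weighted flow it sends out is $w_z(j,k) = w(j)\cdot z(j,k)$. Hence the numerator equals $w(j)\sum_{k=k_j}^{\ell(j)} z(j,k)$, and it suffices to prove the unweighted bound
\[
\frac{\sum_{k=k_j}^{\ell(j)} z(j,k)}{\ell(j)-k_j+1} \leq 2\left(1+\frac{\epsilon^2}{10}\right).
\]

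Set $q := \ell(j)-k_j+1$. Since $j \in \hat{J}_{\geq k_j}$, we have $\ell(j) \geq k_j+\alpha-1$, so $q \geq \alpha = 5/\epsilon^3$. In particular $q \geq 2$, so $k_j < \ell(j)$, and together with $k_j>0$ the hypotheses of \Cref{new:lm:job} are met. The lemma then gives $\sum_{k=k_j}^{\ell(j)} z(j,k) \leq 2q + 1/\epsilon$.

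Dividing by $q$, the average is at most $2 + \frac{1}{\epsilon q} \leq 2 + \frac{1}{\epsilon\alpha} = 2 + \frac{\epsilon^2}{5} = 2\left(1+\frac{\epsilon^2}{10}\right)$. This is exactly the claimed bound.

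There is no real obstacle. The only points to verify are the strict inequality $k_j<\ell(j)$ needed to invoke \Cref{new:lm:job}, which follows from $\alpha \geq 2$ since $\epsilon<1/100$, and the identity $w_z(j,k)=w(j)z(j,k)$, which holds by the weight convention in \Cref{def:history-graph}.
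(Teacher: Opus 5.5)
Your proposal is correct and follows essentially the same route as the paper: factor out $w(j)$, apply \Cref{new:lm:job}, and use $\ell(j)-k_j+1 \geq \alpha = 5/\epsilon^3$ to get $2 + \frac{1}{\epsilon\alpha} = 2\left(1+\frac{\epsilon^2}{10}\right)$. Your explicit check that $k_j < \ell(j)$ holds, so that the lemma applies, is a detail the paper leaves implicit.
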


\begin{proof}
From \Cref{new:lm:job}, we infer that:
\begin{eqnarray*}
\frac{\sum_{k = k_j}^{\ell(j)} w_z(j, k)}{\left(\ell(j) - k_j + 1\right)} & = & w(j) \cdot \frac{\sum_{k = k_j}^{\ell(j)} z(j, k)}{\left(\ell(j) - k_j + 1\right)} \\
& \leq & w(j) \cdot \left( 2 + \frac{1}{\epsilon \left( \ell(j) - k_j+1\right)} \right) \\
& \leq & w(j) \cdot \left( 2 + \frac{1}{\epsilon \alpha} \right), \\
& \leq & w(j) \cdot 2\left(1 + \frac{\epsilon^2}{10} \right),
\end{eqnarray*}
where the penultimate inequality holds since $j \in \hat{J}_{\geq k_j}$ (i.e., $\ell(j) - k_j+1 \geq \alpha$) and the last inequality holds because of \Cref{eq:epsilon}. This concludes the proof.
\end{proof}

\begin{lemma}
\label{new:lm:machine}
For every machine $i \in M$ and level $0 < k_i < \ell(i)$, we have 
$$\sum_{k = k_i}^{\ell(i)} w_z(i, k) \geq 2 (1+\epsilon) \cdot \left(\ell(i) - k_i+1\right) \cdot \Gamma - \left(\frac{5}{\epsilon}\right) \cdot \Gamma.$$
\end{lemma}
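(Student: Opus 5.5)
The plan mirrors the machine-side argument sketched in the overview. For every level $k \in [k_i, \ell(i)]$, split $w_z(i,k)$ into two parts. The part $w_z^{\mathrm{init}}(i,k)$ is the weighted history flow into $(i,k)$ from history edges created at the instant $i$ enters level $k$, i.e.\ right after the whack that raised $\ell(i)$ to $k$. The part $w_z^{\mathrm{new}}(i,k)$ is the weighted history flow added later, while $i$ stays at level $k$. By \Cref{obs:history-edge} and \Cref{cor:monotone}, every edge $(j,i)$ present right after the whack gives a history edge into $(i,k)$ whose $z$-value equals the current $x(j,i)$. So $w_z^{\mathrm{init}}(i,k)$ is at least the load $w_x(i)$ right after that whack.

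\textbf{Initial part.} The whack was performed because $w_x(i) > \beta(1+\epsilon)\Gamma$, and it divides the load by $(1+\epsilon)$. Hence $w_z^{\mathrm{init}}(i,k) \ge \beta\Gamma$ for every $k \ge k_i > 0$. Summing over the $q := \ell(i)-k_i+1$ levels gives $\sum_k w_z^{\mathrm{init}}(i,k) \ge \beta\Gamma q$.

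\textbf{New part.} While $i$ sits at level $k$, new history edges into $(i,k)$ are created exactly when a neighbouring job $j$ is boosted (or inserted). Each such boost raises $w_x(i)$ by $\epsilon \cdot w(j)x^{\mathrm{old}}(j,i)$ and creates a history edge of weighted value $(1+\epsilon)w(j)x^{\mathrm{old}}(j,i)$. Writing $\mathrm{Gain}$ for the total increase of $w_x(i)$ from boosts over the period when $\ell(i)\ge k_i$, this yields
\[ \sum_k w_z^{\mathrm{new}}(i,k) \ge \tfrac{1+\epsilon}{\epsilon}\,\mathrm{Gain}. \]
Now let $\mathrm{Loss}$ be the total decrease of $w_x(i)$ over the same period, from whacks and from edge deletions. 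Edge deletions only help here, since they enlarge $\mathrm{Loss}$ without touching $z$. Telescoping the load from the moment $i$ reaches level $k_i$ to the present gives $\mathrm{Gain} \ge \mathrm{Loss} - w_x^{\max}(i)$, where $w_x^{\max}(i)$ bounds the load at the starting moment. Each of the $q-1$ whacks performed from level $k_i$ up to $\ell(i)$ removes $\frac{\epsilon}{1+\epsilon}$ of a load exceeding $\beta(1+\epsilon)\Gamma$. Thus $\mathrm{Loss} \ge (q-1)\epsilon\beta\Gamma$, and so $\sum_k w_z^{\mathrm{new}} \ge (1+\epsilon)\beta\Gamma(q-1) - \frac{1+\epsilon}{\epsilon}w_x^{\max}(i)$.

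\textbf{Upper bound on the load.} I need $w_x(i) \le 2\beta\Gamma$ at all times, and this is the step I expect to need care. Whacking is prioritized, so a boost happens only when every machine has load at most $\beta(1+\epsilon)\Gamma$. A boost multiplies at most the flows into $i$ by $(1+\epsilon)$, so afterwards $w_x(i) \le \beta(1+\epsilon)^2\Gamma$. A subsequent boost cannot occur before $i$ is brought back down by whacks, and job insertions add only negligible flow $\le (\epsilon/Wmn)^2 W m$. Hence $w_x(i) \le \beta(1+\epsilon)^2\Gamma + o(\epsilon\Gamma)$ throughout, using $\Gamma \ge 1/n$ from \eqref{eq:gamma:100}. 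I would verify this as a small invariant by induction over time-steps.

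\textbf{Combining.} Adding the two parts gives
\[ \sum_k w_z(i,k) \ge \beta\Gamma q + (1+\epsilon)\beta\Gamma(q-1) - \tfrac{(1+\epsilon)2\beta}{\epsilon}\Gamma \ge (2+\epsilon)\beta\Gamma q - \tfrac{3\beta}{\epsilon}\Gamma, \]
up to lower-order terms. Since $\beta = 1+10\epsilon$, we have $(2+\epsilon)\beta \ge 2(1+\epsilon)$, and the additive term $\frac{3\beta}{\epsilon}$ together with the $(1+\epsilon)\beta$ from the $q-1$ is at most $\frac{5}{\epsilon}$ for $\epsilon<1/100$. This gives the claimed bound. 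The routine constant-checking I would leave to the end.
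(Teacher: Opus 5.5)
Your proposal is correct and takes essentially the same route as the paper. Both proofs split each $w_z(i,k)$ into the load right after the whack (which exceeds $\beta\Gamma$) plus the later boost contributions, charge the latter at rate $(1+\epsilon)/\epsilon$ against the boost gains, and telescope gains against whack losses. The differences are cosmetic: you bound the load by $\beta(1+\epsilon)^2\Gamma\le 2\beta\Gamma$ and each whack's loss by $\epsilon\beta\Gamma$, where the paper uses $3\Gamma$ and $\epsilon\Gamma$.

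One bookkeeping point: since your Gain counts only boosts, the telescoping identity must also subtract the gains from job insertions. Their total is at most $\epsilon^2\Gamma$, as in the paper's separate bound on $W^\star$, and your remaining slack of roughly $2/\epsilon$ absorbs it easily.
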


We will prove the above lemma in \Cref{sec:new:lm:machine}. Given this, we can conclude the following useful observation that complements \Cref{cor:new:lm:job}. Consider a level $k$ and a machine $i$ whose current level is significantly higher than $k$, i.e., at least $k+\alpha-1$. 
When we average over machine-level nodes $(i, k),\dots,(i, \ell(i))$, each of them receives at least $2 \left(1 + \epsilon - \epsilon^2\right)\Gamma$ units of weighted flow \emph{only from} job-level nodes whose current level is at least $k-6L-1$. This is captured by the corollary below.

\thatchaphol{I would write $k$ instead of $k_i$ ($k_i$ is not determined by $i$ and so it is confusing).}

\begin{corollary}
\label{cor:edge:2}
Consider any level $k_i > 6L+\alpha$ and any machine $i \in \hat{M}_{\geq k_i}$. Then, we must have $$\frac{\sum_{k = k_i}^{\ell(i)} \  \left( \sum_{(j, k_j) \in N_{\G}(i, k) \, \cap \, \hat{\J}_{\geq k_i -6L-\alpha}} w(j, k_j) \cdot z\big( (j, k_j), (i, k) \big) \right)}{\left(\ell(i) - k + 1\right)} \geq 2(1+\epsilon-\epsilon^2) \cdot \Gamma.$$ 
\end{corollary}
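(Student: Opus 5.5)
We derive the corollary from \Cref{new:lm:machine} in two steps. First we lower bound the full average. Then we remove the contribution of history edges whose job endpoint lies outside $\hat{\J}_{\geq k_i-6L-\alpha}$, showing it is negligible. (We read the denominator as $\ell(i)-k_i+1$, matching \Cref{cor:new:lm:job}.)

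\emph{Step 1: the full average.} Since $i\in\hat M_{\geq k_i}$, we have $\ell(i)-k_i+1\geq\alpha$, and $0<k_i<\ell(i)$ holds because $\alpha>1$. \Cref{new:lm:machine} then gives
\[
\frac{\sum_{k=k_i}^{\ell(i)} w_z(i,k)}{\ell(i)-k_i+1}\;\geq\; 2(1+\epsilon)\Gamma-\frac{5\Gamma}{\epsilon\alpha}\;=\;2(1+\epsilon)\Gamma-\epsilon^2\Gamma,
\]
using $\alpha=5/\epsilon^3$. This exceeds the target $2(1+\epsilon-\epsilon^2)\Gamma$ by $\epsilon^2\Gamma$, which is the slack we spend in Step 2.

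\emph{Step 2: discarding the other edges.} Fix $k\in[k_i,\ell(i)]$ and a history edge $((j,k_j),(i,k))$ with $(j,k_j)\notin\hat{\J}_{\geq k_i-6L-\alpha}$; note that $k_i-6L-\alpha>0$. We claim every such edge is small.

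Suppose instead it is big. \Cref{cl:edge:2} gives $k_j\geq k-6L\geq k_i-6L$, so $(j,k_j)\in\J_{\geq k_i-6L-\alpha}$. The node $(j,k_j)$ exists, hence $\ell(j)\geq k_j\geq k_i-6L=(k_i-6L-\alpha)+\alpha>(k_i-6L-\alpha)+\alpha-1$. Therefore $j\in\hat J_{\geq k_i-6L-\alpha}$, and so $(j,k_j)\in\hat{\J}_{\geq k_i-6L-\alpha}$, a contradiction. This is the reason for the $\alpha$ shift in the threshold: a big-edge neighbour's level is automatically at least $\alpha$ above the threshold.

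Each small edge carries weighted flow less than $W(\epsilon/(Wmn))^{14}$. History edges are determined by triples (edge, job level, machine level), so there are at most $m\cdot(10\kappa(7L+\alpha)+1)^2$ of them. Hence the total discarded weighted flow, summed over all $k$, is at most
\[
m\,(11\kappa(7L+\alpha))^2\,W\left(\frac{\epsilon}{Wmn}\right)^{14}.
\]

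\emph{Main obstacle: checking the error fits in the slack.} We need this bound to be at most $\epsilon^2\Gamma$; dividing by $\ell(i)-k_i+1\geq1$ only helps. By \eqref{eq:gamma:100}, $\Gamma\geq1/n$. Since $\kappa,L=O(\log(Wmn/\epsilon)/\epsilon)$ and $\alpha=5/\epsilon^3$, the polylogarithmic and $1/\epsilon$ factors are absorbed by the huge power $(\epsilon/(Wmn))^{14}$, which also covers the extra $\epsilon^{-2}$ and $n$. So the discarded mass is at most $\epsilon^2\Gamma$. Subtracting it from the bound in Step 1 yields the claimed $2(1+\epsilon-\epsilon^2)\Gamma$.
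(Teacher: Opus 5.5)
Your proof is correct and takes essentially the same route as the paper. You apply \Cref{new:lm:machine}, which loses $5/(\epsilon\alpha)=\epsilon^2$ per level. You then use \Cref{cl:edge:2} to show that every history edge into $(i,k)$ from outside $\hat{\J}_{\geq k_i-6L-\alpha}$ is small, and absorb the small-edge mass into a second $\epsilon^2\Gamma$ using $\Gamma\geq 1/n$.

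The differences are minor. You bound the small-edge mass globally, by counting all history edges (at most $m$ times the square of the number of levels), whereas the paper bounds it per machine-level node via $|\J_2(k)|\leq nk$. You also spell out the membership check for big-edge neighbours, which the paper only asserts ``by definition''.
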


\begin{proof}
Consider any level $k \in \left[k_i, \ell(i)\right]$, and focus on the machine-level node $(i, k) \in \hat{\M}_{\geq k_i}$. We define
$$\J_1(k) := \{ (j, k_j) \in N_{\G}(i, k) : k_j \geq k - 6L\} \text{ and } \J_2(k) := \{ (j, k_j) \in N_{\G}(i, k) : k_j < k - 6L\}.$$
The set of neighbors of the machine-level node $(i, k)$ is partitioned into the subsets $\J_1(k)$ and $\J_2(k)$. \Cref{cl:edge:2} implies that  $\big((j, k_j), (i, k)\big)$ is a {\em small} history edge for every job-level node $(j, k_j) \in \J_2(k)$. 
This means that $z\left((j, k_j), (i, k) \right) < \left(\epsilon/(Wmn)\right)^{14}$ for all $(j, k_j) \in \J_2(k)$. 
Next, from \Cref{assume:label} and \Cref{eq:epsilon}, we  infer that 
\begin{eqnarray*}
|\J_2(k)| & \leq & n \cdot k \leq n \cdot  10\kappa (7L+\alpha) \\
& \leq & 100 \cdot n L \kappa \alpha \\
& \leq & 100 \cdot n \cdot \left(\frac{Wmn}{\epsilon}\right)^8 \cdot \frac{5}{\epsilon^3} \\
& = & 500 \cdot  n \cdot \frac{(Wmn)^8}{\epsilon^{11}}.
\end{eqnarray*}

Accordingly, we get
\begin{eqnarray}
\sum_{(j, k_j) \in \J_2(k)} w(j, k_j) \cdot z\big( (j, k_j), (i, k) \big) & \leq & |\J_2(k)| \cdot W \cdot \left(\frac{\epsilon}{W m n}\right)^{14} \nonumber  \\ 
& \leq & 500 \cdot  n \cdot \frac{(Wmn)^8}{\epsilon^{11}} \cdot W \cdot \left(\frac{\epsilon}{W m n}\right)^{14}  \nonumber \\
& \leq & \left(\frac{Wmn}{\epsilon}\right)^{11} \cdot \left(\frac{\epsilon}{W m n}\right)^{14} \label{eq:parameter:1}\\ 
& = & \left(\frac{\epsilon}{W m n}\right)^{3} \nonumber \\
& \leq &  \epsilon^2 \Gamma. \label{eq:cor:edge:200}
\end{eqnarray}

In the above derivation, \Cref{eq:parameter:1} holds as long as $Wmn$ is larger than some absolute constant (independent of $\epsilon$), and the last step holds since $\Gamma \geq 1/n$ (see \Cref{eq:gamma:100}). 
At this point, since the set $N_{\G}(i, k)$ is partitioned into $\J_1(k)$ and $\J_2(k)$,  \Cref{eq:cor:edge:200} implies that
\begin{eqnarray*}
\sum_{(j, k_j) \in \J_1(k)} w(j, k_j) \cdot z\big( (j, k_j), (i, k) \big) & = & w_z(i, k) - \sum_{(j, k_j) \in \J_2(k)} w(j, k_j) \cdot z\big( (j, k_j), (i, k) \big) \\
& \geq & w_z(i, k) - \epsilon^2 \cdot \Gamma. 
\end{eqnarray*}
Next, observe that since $k \geq k_i$, we have $\J_1(k) \subseteq N_{\G}(i, k) \, \cap \, \hat{\J}_{\geq k_i-6L-\alpha}$, by definition. Thus, we get:
\begin{equation}
\label{eq:cor:edge:2:100}
\sum_{(j, k_j) \in N_{\G}(i, k) \, \cap \, \hat{\J}_{\geq k_i-6L-\alpha}} w(j, k_j) \cdot z\big( (j, k_j), (i, k) \big)  \geq w_z(i, k) -  \epsilon^2 \cdot \Gamma. 
\end{equation}
Summing \Cref{eq:cor:edge:2:100} over all $k \in \left[k_i, \ell(i)\right]$, we get: 
\begin{equation}
\label{eq:cor:edge:2:101}
\sum_{k = k_i}^{\ell(i)} \  \left( \sum_{(j, k_j) \in N_{\G}(i, k) \, \cap \, \hat{\J}_{\geq k_i - 6L-\alpha}} w(j, k_j) \cdot z\big( (j, k_j), (i, k) \big) \right) \geq \sum_{k = k_i}^{\ell(i)} w_z(i, k) -  \epsilon^2 \Gamma \cdot \left(\ell(i) - k_i +1\right). 
\end{equation}


Now, diving both sides of \Cref{eq:cor:edge:2:101} by $\left( \ell(i) - k_i+1\right)$ and applying \Cref{new:lm:machine}, we get:
\begin{eqnarray*}
\frac{\sum_{k = k_i}^{\ell(i)} \  \left( \sum_{(j, k_j) \in N_{\G}(i, k) \, \cap \, \hat{\J}_{\geq k_i-6L-\alpha}} w(j, k_j) \cdot z\big( (j, k_j), (i, k) \big) \right)}{\left(\ell(i) - k_i + 1\right)} & \geq & 2(1+\epsilon)\Gamma -  \epsilon^2 \Gamma - \frac{5 \Gamma}{\epsilon \left(\ell(i) - k_i + 1\right)}  \\
& \geq & 2(1+\epsilon)\Gamma -   \epsilon^2 \Gamma  -\frac{5\Gamma}{\epsilon \alpha} \\
& \geq & 2(1+\epsilon - \epsilon^2) \cdot \Gamma,
\end{eqnarray*}
where the penultimate inequality holds since $i \in \hat{M}_{\geq k_i}$ (i.e., $\ell(i) - k_i+1 \geq \alpha$) and the last inequality holds because of \Cref{eq:epsilon}. This concludes the proof.
\end{proof}

\begin{lemma}
\label{new:lm:expansion:100}
For every level $k > 6L+\alpha$, we have $w\left( \hat{\J}_{\geq k-6L-\alpha}\right) \geq \left(1+\frac{\epsilon}{2}\right) \cdot \Gamma \cdot \left| \hat{\M}_{\geq k} \right|$. 
\end{lemma}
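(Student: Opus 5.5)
We prove this by double counting the weighted history flow on the edges of $\G$ between $\hat{\J}_{\geq k-6L-\alpha}$ and $\hat{\M}_{\geq k}$. Let
\[
F := \sum_{(i,k') \in \hat{\M}_{\geq k}} \ \sum_{(j,k_j) \in N_{\G}(i,k') \cap \hat{\J}_{\geq k-6L-\alpha}} w(j,k_j)\cdot z\big((j,k_j),(i,k')\big).
\]
We bound $F$ from below using \Cref{cor:edge:2} and from above using \Cref{cor:new:lm:job}, and then compare the two bounds.

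\emph{Lower bound.} Group the nodes of $\hat{\M}_{\geq k}$ by their machine $i \in \hat{M}_{\geq k}$. Each such machine contributes exactly the nodes $(i,k),\dots,(i,\ell(i))$. Since $k > 6L+\alpha$, \Cref{cor:edge:2} applies with $k_i = k$. (We read its denominator as $\ell(i)-k_i+1$, which is how it is used in its proof.) It shows that machine $i$ contributes at least $2(1+\epsilon-\epsilon^2)\Gamma\cdot(\ell(i)-k+1)$ to $F$. Summing over $i$ and using \eqref{eq:equivalence} gives
\[
F \geq 2(1+\epsilon-\epsilon^2)\,\Gamma\,\big|\hat{\M}_{\geq k}\big|.
\]

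\emph{Upper bound.} Enlarge the sum to include all weighted flow leaving the nodes of $\hat{\J}_{\geq k-6L-\alpha}$; this is valid because all $z$-values are nonnegative. Now group by job $j \in \hat{J}_{\geq k'}$, where $k' := k-6L-\alpha > 0$. Each such job contributes the nodes $(j,k'),\dots,(j,\ell(j))$, and \Cref{cor:new:lm:job} with $k_j = k'$ bounds their total outgoing weighted flow by $2(1+\epsilon^2/10)\,w(j)\,(\ell(j)-k'+1)$. Since every node $(j,r)$ has weight $w(j)$, summing over $j$ and using \eqref{eq:equivalence} gives
\[
F \leq 2(1+\epsilon^2/10)\, w\big(\hat{\J}_{\geq k-6L-\alpha}\big).
\]

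\emph{Combining.} The two bounds give
\[
w\big(\hat{\J}_{\geq k-6L-\alpha}\big) \geq \frac{1+\epsilon-\epsilon^2}{1+\epsilon^2/10}\,\Gamma\,\big|\hat{\M}_{\geq k}\big|.
\]
It remains to check that $(1+\epsilon-\epsilon^2) \geq (1+\epsilon/2)(1+\epsilon^2/10)$ for $\epsilon < 1/100$. The right-hand side is $1+\epsilon/2+\epsilon^2/10+\epsilon^3/20$, so the inequality reduces to $\epsilon/2 \geq 1.1\epsilon^2+\epsilon^3/20$, which clearly holds in this range.

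There is no real obstacle in this argument. The only care needed is bookkeeping: the neighborhoods counted in \Cref{cor:edge:2} are restricted to $\hat{\J}_{\geq k-6L-\alpha}$, so every unit counted on the machine side is also counted on the job side. The upper bound may overcount flow sent to machine-level nodes outside $\hat{\M}_{\geq k}$, but that only helps the inequality.
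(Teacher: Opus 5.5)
Your proposal is correct and follows the paper's own argument. The paper likewise double counts the weighted history flow from $\hat{\J}_{\geq k-6L-\alpha}$ into $\hat{\M}_{\geq k}$, bounding it above by \Cref{cor:new:lm:job} and below by \Cref{cor:edge:2}, and then checks that $(1+\epsilon-\epsilon^2)/(1+\epsilon^2/10) \geq 1+\epsilon/2$. Your version spells out the bookkeeping the paper compresses into ``a simple counting argument'': grouping nodes by job or machine via \eqref{eq:equivalence}, reading the denominator of \Cref{cor:edge:2} as $\ell(i)-k_i+1$, and noting that each history edge between the two sets is counted once on each side.
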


\begin{proof} 
 \Cref{cor:new:lm:job} implies that on average a job-level node $(j, k_j) \in \hat{\J}_{\geq k-6L-\alpha}$ sends out at most $2 \left(1 + \epsilon^2/10\right) \cdot w(j) = 2 \left(1+\epsilon^2/10\right) \cdot w(j, k_j)$ units of weighted flow. On the other hand, \Cref{cor:edge:2} implies that on average a machine-level node $(i, k_i) \in \hat{\M}_{\geq k}$ receives at least $2(1+\epsilon - \epsilon^2) \cdot \Gamma$ units of weighted flow  from the job-level nodes in $\hat{\J}_{\geq k-6L-\alpha}$.  Thus,  a simple counting argument \thatchaphol{I feel this is too fast. \Cref{cor:new:lm:job,cor:edge:2} are complicated expressions. How do you conclude this exactly?} implies that $2(1+\epsilon^2/10) \cdot w(\hat{\J}_{\geq k-6L-\alpha})  \geq 2(1+\epsilon-\epsilon^2) \cdot \Gamma \cdot |\hat{\M}_{\geq k}|$, or equivalently,
\begin{eqnarray*}
w(\hat{\J}_{\geq k-6L-\alpha})  \geq \frac{\left(1+\epsilon-\epsilon^2\right)}{\left(1+\frac{\epsilon^2}{10}\right)} \cdot \Gamma \cdot |\hat{\M}_{\geq k}| \geq \left(1 +\frac{\epsilon}{2}\right) \cdot \Gamma \cdot |\hat{\M}_{\geq k}|,
\end{eqnarray*} 
where the last inequality holds since $\epsilon < 1/100$ (see \Cref{eq:epsilon}). 
\end{proof}

\begin{lemma}
\label{new:cl:expansion}
As long as $\opt_f(G, w) \leq \Gamma$,   we must have $\Gamma \cdot \left| \hat{\M}_{\geq k-L} \right| \geq  w\left(\hat{\J}_{\geq k}\right)$ for all $k > L$.
\end{lemma}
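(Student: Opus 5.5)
We use an optimal fractional assignment $x^*$ of the current graph $G$ with load at most $\Gamma$. It gives a flow-counting version of the injection argument in the overview: we charge weighted mass from job-level nodes in $\hat{\J}_{\geq k}$ to machine-level nodes in $\hat{\M}_{\geq k-L}$ along current edges, shifting levels down by exactly $L$.

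\textbf{Setting up the charge.} Assume $\opt_f(G,w)\le\Gamma$ and fix such an $x^*$, so that $x^*(j)=1$ for every $j\in J$ and $w_{x^*}(i)\le\Gamma$ for every $i\in M$. By \eqref{eq:equivalence},
\[
w(\hat{\J}_{\geq k})=\sum_{j\in\hat J_{\ge k}} w(j)\,(\ell(j)-k+1)=\sum_{j\in\hat J_{\ge k}}\sum_{i\in N(j)} w(j)x^*(j,i)\,(\ell(j)-k+1).
\]
For every current edge $(j,i)\in E$ with $j\in\hat J_{\ge k}$, distribute the mass $w(j)x^*(j,i)$ once to each machine-level node $(i,r-L)$ with $r\in[k,\ell(j)]$.

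\textbf{Validity of the targets.} We must check that each target $(i,r-L)$ lies in $\hat{\M}_{\ge k-L}$. The edge $(j,i)$ is present now, with current levels $\ell(j)$ and $\ell(i)$, so by \Cref{def:history-graph} it yields a history edge $((j,\ell(j)),(i,\ell(i)))$. \Cref{cl:edge:1} (or directly \Cref{obs:flow:upper:bound}) then gives $\ell(i)\ge\ell(j)-L$. Since $j\in\hat J_{\ge k}$, we have $\ell(j)\ge k+\alpha-1$, hence $\ell(i)\ge (k-L)+\alpha-1$, so $i\in\hat M_{\ge k-L}$. Also $k>L$ ensures $k-L\ge 1$. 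For $r\in[k,\ell(j)]$ we get $r-L\in[k-L,\ell(j)-L]\subseteq[k-L,\ell(i)]$, so $(i,r-L)$ is a machine-level node in $\hat{\M}_{\ge k-L}$.

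\textbf{Bounding the received mass.} Fix a node $(i,s)\in\hat{\M}_{\ge k-L}$. For each job $j$, it receives $w(j)x^*(j,i)$ at most once, namely from $r=s+L$ and only if that $r$ lies in $[k,\ell(j)]$. Its total received mass is therefore at most $\sum_{j\in N(i)} w(j)x^*(j,i)=w_{x^*}(i)\le\Gamma$. Summing over all targets,
\[
w(\hat{\J}_{\ge k})\le\Gamma\,|\hat{\M}_{\ge k-L}|.
\]

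\textbf{Main subtlety.} The point needing care is that $x^*$ must be an assignment for the \emph{current} graph, and the level comparison must use the current levels along current edges. This is why we use $\ell(i)\ge\ell(j)-L$ at the present time rather than a historical edge. The argument never uses history flow $z$ at all; it only uses the counting identity \eqref{eq:equivalence}.
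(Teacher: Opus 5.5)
Your proof is correct and is essentially the paper's argument. The paper does the same double counting aggregated per machine: it weights each $i\in\hat{M}_{\geq k-L}$ by $\ell(i)-(k-L)+1$ and uses $\ell(i)-(k-L)+1\geq \ell(j)-k+1$ for every current edge $(j,i)$ with $j\in\hat{J}_{\geq k}$, which follows from $\ell(i)\geq\ell(j)-L$. You instead unpack this into an explicit injective charging $r\mapsto r-L$ onto individual machine-level nodes, and both versions rest on the same two facts: $N_G(j)\subseteq\hat{M}_{\geq k-L}$ and $w_{x^\star}(i)\le\Gamma$.
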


\begin{proof}
Throughout the proof, we will focus on a specific time-step $t$ such that $\opt_f^{(t)}(G, w) \leq \Gamma$. Accordingly, for ease of notation, throughout the proof we will omit the superscript $t$ from all objects.

By \Cref{obs:history-edge} and \Cref{cl:edge:1}, we have $\ell(i) \geq \ell(j) - L$ for all $(j, i) \in E$. Hence, we get:
\begin{eqnarray}
\label{eq:expansion:10}
N_G(j) & \subseteq & \hat{M}_{\geq k-L} \qquad \text{ for all jobs } j \in \hat{J}_{\geq k}. \\
\label{eq:expansion:11}
\left( \ell(i) - (k-L) +1 \right) & \geq & \left( \ell(j) - k+1\right)  \text{ for all jobs } j \in \hat{J}_{\geq k} \text{ and machines } i \in N_G(j).
\end{eqnarray}

Let $x^\star : E \rightarrow [0, 1]$ be a fractional assignment in $(G, w)$ with load at most $\Gamma$.  Such an assignment exists because  $\opt_f(G, w) \leq \Gamma$. In particular, this means that $w_{x^\star}(j) = w(j)$ for all jobs $j \in J$, and $w_{x^\star}(i) \leq \Gamma$ for all machines $i \in M$. Now, consider  any job $j \in \hat{J}_{\geq k}$. By definition, we have $\sum_{i \in N_G(j)} w(j) \cdot x^\star(j,i) = w_{x^\star}(j) = w(j)$. At this point,  \Cref{eq:expansion:11} implies that:
\begin{equation}
\label{eq:expansion:12}
\sum_{i \in N_G(j)} \left( \ell(i) - (k-L)+1\right) \cdot w(j) \cdot x^\star(j,i) \geq (\ell(j) - k+1) \cdot w(j) \ \ \text{ for all jobs } j \in \hat{J}_{\geq k}.
\end{equation}

Armed with all these observations, we now derive that:
\begin{eqnarray}
\Gamma \cdot |\hat{\M}_{k-L}| & = & \sum_{i \in \hat{M}_{\geq k-L}}  \left(\ell(i) - (k-L) + 1 \right) \cdot \Gamma \label{eq:derive:1} \\
& \geq & \sum_{i \in \hat{M}_{\geq k-L}}  \left(\ell(i) - (k-L) + 1 \right) \cdot w_{x^\star}(i) \label{eq:derive:2} \\
& = & \sum_{i \in \hat{M}_{\geq k-L}} \  \sum_{j \in N_G(i)} \left(\ell(i) - (k-L) + 1 \right) \cdot  w(j) \cdot x^\star(j,i) \label{eq:derive:3} \\
& \geq & \sum_{i \in \hat{M}_{\geq k-L}} \ \sum_{j \in N_G(i) \cap \hat{J}_{\geq k}} \left(\ell(i) - (k-L) + 1 \right) \cdot  w(j) \cdot x^\star(j,i) \nonumber \\
& = & \sum_{j \in \hat{J}_{\geq k}} \ \sum_{i \in N_G(j) \cap \hat{M}_{\geq k-L}}   \left(\ell(i) - (k-L) + 1 \right) \cdot  w(j) \cdot x^\star(j,i) \label{eq:derive:5} \\
& = & \sum_{j \in \hat{J}_{\geq k}} \ \sum_{i \in N_G(j)}   \left(\ell(i) - (k-L) + 1 \right) \cdot  w(j) \cdot x^\star(j,i) \label{eq:derive:6} \\
& \geq & \sum_{j \in \hat{J}_{\geq k}}   \left(\ell(j) - k+1 \right) \cdot  w(j) \label{eq:derive:7} \\
& = & w\left( \hat{\J}_{\geq k} \right). \label{eq:derive:8}
\end{eqnarray}

In the above derivation, \Cref{eq:derive:1} holds because of \Cref{eq:equivalence}. \Cref{eq:derive:2} holds because $x^\star$ is a fractional assignment with load at most $\Gamma$, and hence $w_{x^\star}(i) \leq \Gamma$ for all machines $i \in M$. \Cref{eq:derive:3} follows from the definition of the load $w_{x^\star}(i)$ on a machine $i \in M$ w.r.t.~$x^\star$. \Cref{eq:derive:5} follows from changing the order of the two summations. \Cref{eq:derive:6} follows from \Cref{eq:expansion:10}. \Cref{eq:derive:7} follows from \Cref{eq:expansion:12}. Finally, \Cref{eq:derive:8} holds because of \Cref{eq:equivalence}. The lemma follows from \Cref{eq:derive:8}.
\end{proof}

\begin{corollary}
\label{cor:expansion}
If $\opt_f(G, w) \leq \Gamma$, then $\left| \hat{\M}_{\geq k-(7L+\alpha)} \right| \geq (1+\epsilon/2) \cdot \left| \hat{\M}_{\geq k} \right|$ for all $k > 7L+\alpha$.
\end{corollary}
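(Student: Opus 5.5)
The plan is to chain \Cref{new:lm:expansion:100} and \Cref{new:cl:expansion}, with the second lemma applied at a shifted threshold. Fix $k > 7L+\alpha$, so in particular $k > 6L+\alpha$. Applying \Cref{new:lm:expansion:100} at level $k$ gives
\[
w\left(\hat{\J}_{\geq k-6L-\alpha}\right) \geq \left(1+\frac{\epsilon}{2}\right)\cdot \Gamma \cdot \left|\hat{\M}_{\geq k}\right|.
\]
Next, I would apply \Cref{new:cl:expansion} with threshold $k' := k-6L-\alpha$. Its hypothesis $k' > L$ holds exactly because $k > 7L+\alpha$, and $\opt_f(G,w)\leq \Gamma$ is assumed. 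Since $k'-L = k-(7L+\alpha)$, this yields
\[
\Gamma\cdot\left|\hat{\M}_{\geq k-(7L+\alpha)}\right| \geq w\left(\hat{\J}_{\geq k-6L-\alpha}\right).
\]

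Combining the two displays gives $\Gamma\cdot|\hat{\M}_{\geq k-(7L+\alpha)}| \geq (1+\epsilon/2)\cdot\Gamma\cdot|\hat{\M}_{\geq k}|$. Dividing by $\Gamma$ is valid since $\Gamma \geq 1/n > 0$ by \Cref{eq:gamma:100}, and the claim follows.

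The only care needed is to apply both lemmas at the same time-step $t$ at which $\opt_f^{(t)}(G,w)\leq\Gamma$. \Cref{new:cl:expansion} is stated for the current time, and \Cref{new:lm:expansion:100} holds at every time-step, so both can be evaluated at $t$. The remaining check is the threshold arithmetic ($k>6L+\alpha$ and $k'>L$), which is routine. I expect no real obstacle here; the substantive work lies in the two lemmas being invoked.
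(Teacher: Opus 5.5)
Your proposal is correct and matches the paper's proof: apply \Cref{new:lm:expansion:100} at level $k$ and \Cref{new:cl:expansion} at threshold $k-6L-\alpha$ (whose hypothesis $k-6L-\alpha > L$ is exactly $k > 7L+\alpha$), chain the two inequalities through $w(\hat{\J}_{\geq k-6L-\alpha})$, and divide by $\Gamma>0$. Your explicit checks of the threshold conditions and of evaluating both lemmas at the same time-step are details the paper leaves implicit.
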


\begin{proof}
From \Cref{new:lm:expansion:100} and \Cref{new:cl:expansion}, we infer that
$$\Gamma \cdot \left| \hat{\M}_{\geq k-7L-\alpha}\right| \geq w\left(\hat{J}_{\geq k-6L-\alpha}\right) \geq \left( 1+\frac{\epsilon}{2}\right) \cdot \Gamma \cdot \left| \hat{\M}_{\geq k}\right|.$$
Dividing both sides of the resulting inequality by $\Gamma$, we get $\left| \hat{\M}_{\geq k-7L-\alpha}  \right| \geq (1+\epsilon/2) \cdot \left| \hat{\M}_{\geq k}\right|$. 
\end{proof}

\paragraph{Proof of \Cref{lm:upper:bound:level}.}  For the sake of contradiction, suppose that $\opt_f(G, w) \leq \Gamma$, and 
\begin{itemize}
\item (a) either there is a machine $i \in M$ at level $\ell(i) = 10 \kappa (7L+\alpha)$, 
\item  (b) or there is a job $j \in J$ at level $\ell(j) = 10\kappa(7L+\alpha)$. 
\end{itemize}
If we are in case (a), then $i \in \hat{M}_{\geq  10\kappa(7L+\alpha) - \alpha+1} \subseteq \hat{M}_{\geq 5\kappa(7L+\alpha)}$, and hence
\begin{equation}
\label{eq:expansion:1}
\left| \hat{\M}_{\geq 5\kappa(7L+\alpha)} \right| \geq 1.
\end{equation}

Otherwise, if we are in case (b), then \Cref{obs:history-edge} and \Cref{cl:edge:1} imply that $\ell(i) \geq \ell(j) - L$ for all machines $i \in N_G(j)$. Clearly, we have $N_G(j) \neq \emptyset$, for otherwise the input $(G, w)$ does not admit any valid assignment. Accordingly, we get 
$$\emptyset \subset N_G(j) \subseteq M_{\geq \ell(j) - L} = M_{\geq 10\kappa(7L+\alpha) - L} \subseteq \hat{M}_{\geq 10\kappa(7L+\alpha) - L - \alpha+1} \subseteq \hat{M}_{\geq 5\kappa(7L+\alpha)}.$$
This implies that $\hat{\M}_{\geq 5\kappa(7L+\alpha)} \neq \emptyset$. In other words, \Cref{eq:expansion:1} holds even under case (b).

\medskip
Now, starting from \Cref{eq:expansion:1}, we invoke \Cref{cor:expansion} for 
$\kappa$  successive iterations,  to get:
\begin{equation}
\label{eq:expansion:2}
\left| \hat{\M}_{\geq  4\kappa  (7L+\alpha)} \right| \geq (1+\epsilon/2)^{\kappa} \cdot \left| \hat{\M}_{\geq 5\kappa(7L+\alpha)} \right| \geq (1+\epsilon/2)^{\kappa}.
\end{equation}

In contrast, by \Cref{assume:label}, every machine $i \in M$ contributes at most $10\kappa(7L+\alpha)$ machine-level nodes to the set $\hat{\M}_{\geq  4\kappa  (7L+\alpha)}$. This implies that 
\begin{equation}
\label{eq:expansion:5}
\left| \hat{\M}_{\geq 4\kappa (7L+\alpha)} \right| \leq 10 m \kappa(7L+\alpha).
\end{equation}


From \Cref{eq:expansion:2}, \Cref{eq:expansion:5} and \Cref{eq:epsilon}, we derive that 
\begin{equation}
\label{eq:contradict}
10m \kappa(7L+\alpha) \geq (1+\epsilon/2)^{\kappa} = (Wmn/\epsilon)^{6}.
\end{equation}

However, \Cref{eq:epsilon} implies that \Cref{eq:contradict} cannot hold, as long as $mnW$ is larger than some absolute constant (independent of $\epsilon$). This leads to the desired contradiction.
\qed 

\subsubsection{Bounding the Total Flow out of Job-Level Nodes: Proof of \Cref{new:lm:job}}
\label{sec:new:lm:job}

Throughout the proof, we fix a time-step $t$, job $j \in J^{(t)}$ and level $0 < k_j < \ell^{(t)}(j)$. 
We will show that
$$\sum_{k = k_j}^{\ell^{(t)}(j)} z^{(t)}(j, k) \leq 2 \cdot \left(\ell^{(t)}(j) - k_j+1\right) + \left(\frac{1}{\epsilon}\right).$$
For every level $0 < k_j \leq k  \leq \ell^{(t)}(j)$, let $t_k$ denote the time-step at which the job $j$ entered level $k$. To ensure consistency of notations, we let $t_{\ell^{(t)}(j)+1} := t$. For every level $0 < k_j \leq k \leq \ell^{(t)}(j)$, we define
\begin{equation}
\label{eq:flow:partition}
z_{\gain}(j, k) := z^{(t_k)}(j, k) \text{ and } z_{\loss}(j, k) := z^{(t_{k+1})}(j, k) - z^{(t_k)}(j, k).
\end{equation}

To interpret these notations, consider the time-interval $[t_k, t_{k+1}]$, and recall that the value of $z(j, k)$ can only increase over time throughout this interval (see \Cref{obs:history-edge}). At time-step $t_k$, the job $j$ gets boosted for the $k^{th}$ time and increases its level from $k-1$ to $k$. The symbol $z_{\gain}(j, k)$ denotes the total flow sent out by the job $j$ immediately after this event. The subscript ``$\gain$'' refers to the fact that the job $j$ {\em gains} flow due to this event at time-step $t_k$. Subsequently, as long as the job $j$ continues to remain at level $k$, it can potentially {\em lose} parts of the flow it is sending out, due to one of the following two types of events. 
\begin{itemize}
    \item (i) An edge incident on  $j$ gets deleted from $G$. This  does not lead to any change in $z(j, k)$.
    \item (ii) A machine $i \in N_G(j)$ gets whacked. This leads to a {\em loss} in the flow sent out by $j$ and an increase in the value of $z(j, k)$, along with the insertion of a new history edge incident on $(j, k)$.
\end{itemize}

The symbol $z_{\loss}(j, k)$ denotes the total increase in the value of $z(j, k)$ during the time-interval $[t_k, t_{k+1}]$, due to all events of type (ii). The subscript ``$\loss$'' refers to the fact that the job $j$ {\em loses} flow during all these concerned events.

If $k < \ell^{(t)}(j)$, then it is easy to verify that $t_{k+1} \leq t$ and that the value of $z(j, k)$ remains unchanged during the time-interval $\left[t_{k+1}, t\right]$; this is because no history edge incident on $(j, k)$ gets inserted into $\G$ when the job $j$ is at level $\geq k+1$. In contrast, if $k = \ell^{(t)}(j)$, then we have defined $t_{k+1} := t$ for notational consistency. Accordingly, from \Cref{eq:flow:partition} we infer that $z^{(t)}(j, k) = z_{\gain}(j, k) + z_{\loss}(j, k)$ for all  $k_j \leq k \leq \ell^{(t)}(j)$. Summing this equality over all $k \in \left[ k_j, \ell^{(t)}(j) \right]$, we get:
\begin{equation}
\label{eq:flow:partition:3}
\sum_{k=k_j}^{\ell^{(t)}(j)} z^{(t)}(j,k) = \sum_{k=k_j}^{\ell^{(t)}(j)} z_{\gain}(j, k) + \sum_{k=k_j}^{\ell^{(t)}(j)} z_{\loss}(j, k).
\end{equation}

We will now upper bound the two summations on the right hand side of \Cref{eq:flow:partition:3}, one after another. In \Cref{cl:flow:partition:1}, we upper bound the first of these summations.

\begin{claim}
\label{cl:flow:partition:1}
We have $\sum_{k=k_j}^{\ell^{(t)}(j)} z_{\gain}(j, k) \leq \left( \ell^{(t)}(j) - k_j + 1 \right)$.
\end{claim}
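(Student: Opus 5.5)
The plan is to bound each term separately, showing $z_{\gain}(j,k) < 1$ for every level $k \in [k_j, \ell^{(t)}(j)]$, and then sum. The sum has exactly $\ell^{(t)}(j) - k_j + 1$ terms, so this gives the claim (even with strict inequality).

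Fix such a level $k$; note $k \ge k_j > 0$, so $j$ actually entered level $k$ via a boost at time-step $t_k$. The first step is to identify $z_{\gain}(j,k) = z^{(t_k)}(j,k)$ with the current flow $x^{(t_k)}(j)$. For this we use \Cref{obs:history-edge} together with \Cref{def:history-graph}. A history edge $((j,k),(i,k_i))$ lies in $\E^{(t_k)}$ iff there is some $t' \le t_k$ with $(j,i) \in E^{(t')}$, $\ell^{(t')}(j) = k$ and $\ell^{(t')}(i) = k_i$. Since levels are monotone and $j$ first reaches level $k$ at time $t_k$, the only such $t'$ is $t' = t_k$ itself. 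Hence the history edges incident on $(j,k)$ at the end of time-step $t_k$ are in bijection with the edges $(j,i) \in E^{(t_k)}$, namely $((j,k),(i,\ell^{(t_k)}(i)))$. By \Cref{obs:history-edge} each such history edge carries flow exactly $x^{(t_k)}(j,i)$. Distinct machines $i$ give distinct history edges, so nothing is double counted. Summing over $i \in N_{G^{(t_k)}}(j)$ yields $z^{(t_k)}(j,k) = x^{(t_k)}(j)$.

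The second step is to invoke \Cref{obs:jobs:flow:upper:bound}, which gives $x^{(t_k)}(j) < 1$. Concretely, $j$ is boosted at $t_k$ only because $x(j) < (1+\epsilon)^{-1}$ held just before, and the boost multiplies $x(j)$ by $(1+\epsilon)$. Therefore $z_{\gain}(j,k) < 1$, and summing over $k$ finishes the proof.

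The only delicate point is the first step: ruling out history edges at level $k$ that were created before $t_k$. Monotonicity of $\ell(j)$ handles this. A second point is that time-step $t_k$ consists of the boost event alone, so no whack occurs in the same step. The "end of time-step" convention therefore pins down $z^{(t_k)}(j,k)$ as exactly the flow immediately after the boost.
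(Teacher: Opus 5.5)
Your proof is correct and follows the paper's argument: bound each $z_{\gain}(j,k)$ by the flow $x^{(t_k)}(j) < 1$ that $j$ sends out right after being boosted into level $k$ (Observation~\ref{obs:jobs:flow:upper:bound}), then sum over the $\ell^{(t)}(j)-k_j+1$ levels. Your check, using monotonicity of $\ell(j)$, that $z^{(t_k)}(j,k) = x^{(t_k)}(j)$ spells out a step the paper only asserts ``by definition''.
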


\begin{proof}
Consider any level $k \in \left[k_j, \ell^{(t)}(j)\right]$. By definition, the value of $z_{\gain}(j, k)$ equals the total amount of flow sent out by the job $j$ when it first enters level $k$. So, \Cref{obs:jobs:flow:upper:bound} implies that $z_{\gain}(j, k) \leq 1$. The claim follows from summing this inequality over all $k \in \left[k_j, \ell^{(t)}(j)\right]$.
\end{proof}

It now remains to upper bound the second summation on the right hand side of \Cref{eq:flow:partition:3}. Towards this end, we define two quantities $F_{\gain}$ and $F_{\loss}$, which respectively correspond to the total gain and total loss in the flow $x(j)$ during the time-interval $\left[ t_{k_j}, t\right]$, as described below.

\begin{wrapper}
At time-step $t_{k_j}$, we initialize $F_{\gain} \leftarrow 0$ and $F_{\loss} \leftarrow 0$. Subsequently, at each time-step $t' \in \left( t_{k_j}, t\right]$, we update the values of $F_{\gain}$ and $F_{\loss}$ as per the following rules.
\begin{enumerate}
\item \label{case-a} Suppose that the job $j$ gets boosted during time-step $t'$, which implies that $x^{(t')}(j) > x^{(t'-1)}(j)$. In this case, we set $F_{\gain} \leftarrow F_{\gain} + \left(x^{(t')}(j) - x^{(t'-1)}(j) \right)$. 
\item \label{case-b} Suppose that some machine $i \in N_G(j)$ gets whacked during time-step $t'$, which implies that $x^{(t')}(j) < x^{(t'-1)}(j)$. In this case, we set $F_{\loss} \leftarrow F_{\loss} + \left(x^{(t'-1)}(j) - x^{(t')}(j) \right)$. 
\item \label{case-c} Suppose that some edge $(j, i) \in E$ incident on $j$ gets deleted during time-step $t'$, which implies that $x^{(t')}(j) < x^{(t'-1)}(j)$. In this case, we set $F_{\loss} \leftarrow F_{\loss} + \left(x^{(t'-1)}(j) - x^{(t')}(j) \right)$.
\item Otherwise, we have $x^{(t')}(j) = x^{(t'-1)}(j)$, and  $F_{\gain}, F_{\loss}$ also remain unchanged.
\end{enumerate}
Henceforth, unless explicitly specified otherwise, the symbols $F_{\loss}$ and $F_{\gain}$ will refer to their values at the end of time-step $t$.
\end{wrapper}

Next, we upper bound $F_{\loss}$ in terms of $F_{\gain}$, as stated in the claim below.

\begin{claim}
\label{cl:loss-vs-gain:1}
We have $F_{\loss} \leq F_{\gain} + 1$.
\end{claim}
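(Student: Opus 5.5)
The plan is a telescoping argument for the quantity $x(j)$ over the interval $\left(t_{k_j}, t\right]$. By construction of $F_{\gain}$ and $F_{\loss}$, every time-step $t' \in \left(t_{k_j}, t\right]$ at which $x(j)$ changes falls into exactly one of cases~\ref{case-a}, \ref{case-b}, \ref{case-c}. In case~\ref{case-a} the change $x^{(t')}(j)-x^{(t'-1)}(j)$ is added to $F_{\gain}$. In cases~\ref{case-b} and~\ref{case-c} its negative is added to $F_{\loss}$. Summing the per-step changes therefore gives the identity
\[
x^{(t)}(j) - x^{(t_{k_j})}(j) \;=\; F_{\gain} - F_{\loss}.
\]

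Before using this identity, I would check that the case analysis is exhaustive, i.e.\ that no other event changes $x(j)$. The possible events are the following.
\begin{itemize}
\item Boosting another job leaves $x(j)$ unchanged.
\item Whacking a machine not adjacent to $j$ leaves $x(j)$ unchanged.
\item Deleting an edge not incident on $j$ leaves $x(j)$ unchanged.
\item Inserting a different job leaves $x(j)$ unchanged.
\end{itemize}
Job $j$ itself is already present throughout the interval, so it is not inserted during it. Hence every change of $x(j)$ is one of the three recorded cases, and the telescoping identity is valid.

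Rearranging the identity gives
\[
F_{\loss} = F_{\gain} + x^{(t_{k_j})}(j) - x^{(t)}(j).
\]
By \Cref{obs:jobs:flow:upper:bound}, $x^{(t_{k_j})}(j) < 1$ and $x^{(t)}(j) \ge 0$. Therefore $F_{\loss} \le F_{\gain} + 1$, which is the claim.

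I expect no real obstacle here. The only point needing care is the exhaustiveness check above: each time-step consists of a single event, and one must confirm that none of the unrecorded event types alters $x(j)$.
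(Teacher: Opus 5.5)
Your proposal is correct and matches the paper's proof: both telescope to obtain $x^{(t)}(j) = x^{(t_{k_j})}(j) + F_{\gain} - F_{\loss}$, then use \Cref{obs:jobs:flow:upper:bound} to bound $x^{(t_{k_j})}(j) - x^{(t)}(j) \le 1$. Your explicit check that no unrecorded event changes $x(j)$ spells out a step the paper leaves implicit in ``from the definitions.''
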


\begin{proof}
From the definitions of $F_{\loss}$ and $F_{\gain}$, we conclude that $x^{(t)}(j) = x^{(t_{k_j})}(j) + F_{\gain} - F_{\loss}$, or equivalently, $F_{\loss} = F_{\gain} + \left( x^{(t_{k_j})}(j)  - x^{(t)}(j)\right)$. The claim now follows from \Cref{obs:jobs:flow:upper:bound}.
\end{proof}

Next, we relate $F_{\gain}$ to the number of times the job $j$ gets boosted during the time-interval $(t_{k_j}, t]$. 

\begin{claim}
\label{cl:loss-vs-gain:2}
We have $F_{\gain} \leq \epsilon \cdot \left( \ell^{(t)}(j) - k_j \right)$.
\end{claim}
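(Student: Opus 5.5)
The plan is to charge $F_{\gain}$ one boost at a time. By the update rules defining $F_{\gain}$, it increases only at time-steps $t' \in (t_{k_j}, t]$ at which the job $j$ is boosted. Every other event (a whack of a neighboring machine, an edge deletion, a job insertion, or an update touching other vertices) leaves $F_{\gain}$ unchanged. So I would first count the boosts of $j$ in $(t_{k_j}, t]$, and then bound the gain contributed by each one.

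For the count: the level $\ell(j)$ rises by exactly one at each boost and never changes otherwise. At time $t_{k_j}$ it equals $k_j$, and at time $t$ it equals $\ell^{(t)}(j)$. Hence the number of boosts of $j$ in $(t_{k_j}, t]$ is exactly $\ell^{(t)}(j) - k_j$. The boost at time $t_{k_j}$ itself is excluded, since $F_{\gain}$ is initialized to $0$ at the end of that step.

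For the per-boost bound, I would use that the algorithm boosts $j$ at time $t'$ only if $x^{(t'-1)}(j) < (1+\epsilon)^{-1}$. Also, $\boost$ multiplies every incident flow value by $(1+\epsilon)$, so $x^{(t')}(j) = (1+\epsilon)\, x^{(t'-1)}(j)$. The increment added to $F_{\gain}$ is therefore
\[
x^{(t')}(j) - x^{(t'-1)}(j) = \epsilon \cdot x^{(t'-1)}(j) < \frac{\epsilon}{1+\epsilon} \leq \epsilon .
\]
Summing over the $\ell^{(t)}(j) - k_j$ boosts gives $F_{\gain} \leq \epsilon \cdot (\ell^{(t)}(j) - k_j)$. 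This argument even yields the slightly stronger factor $\epsilon/(1+\epsilon)$.

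There is no real obstacle here. The only point needing care is the bookkeeping at the endpoints: the interval is half-open at $t_{k_j}$, and each time-step performs exactly one event, so a boost step changes $x(j)$ only through the boost. Both facts follow directly from the time-step convention in the definition of the history graph.
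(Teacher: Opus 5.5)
Your proof is correct and matches the paper's: both count the $\ell^{(t)}(j)-k_j$ boosts of $j$ in $(t_{k_j},t]$ and bound each increment of $F_{\gain}$ by $\epsilon\cdot x^{(t'-1)}(j)$. The only difference is that the paper bounds $x^{(t'-1)}(j)$ by $1$ via \Cref{obs:jobs:flow:upper:bound}, while you use the boost precondition $x^{(t'-1)}(j)<(1+\epsilon)^{-1}$ directly, which gives the slightly sharper per-boost factor $\epsilon/(1+\epsilon)$.
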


\begin{proof}
The value of $F_{\gain}$ is initialized at $0$ when the job $j$ enters level $k_j$. Subsequently, the value of $F_{\gain}$ increases if and only if the job $j$ moves up from some level $k \in \left[k_j, \ell^{(t)}(j) \right)$ to level $k+1$. Consider the time-step $t_{k+1}$ at which such an event occurs. When the job $j$ gets boosted at time-step $t_{k+1}$, we set $x(j, i) \leftarrow (1+\epsilon) \cdot x(j, i)$ for every machine $i \in N^{(t_{k+1})}_G(j)$. Let $\Delta_{\gain}(t_{k+1})$ denote the increase in the value of $F_{\gain}$ at time-step $t_{k+1}$.  Accordingly, we get:
\begin{equation}
\Delta_{\gain}\left(t_{k+1}\right)  =  x^{(t_{k+1})}(j) - x^{(t_{k+1}-1)}(j) 
 =  \epsilon \cdot x^{(t_{k+1}-1)}(j) 
 \leq  \epsilon. \label{eq:loss-vs-gain:10}
\end{equation}

The last inequality follows from \Cref{obs:jobs:flow:upper:bound}. Since $F_{\gain} = \sum_{k = k_j}^{\ell^{(t)}(j)-1} \Delta_{\gain}\left(t_{k+1}\right)$, we obtain the claim by summing \Cref{eq:loss-vs-gain:10} over all $k \in \left[k_j, \ell^{(t)}(j) \right)$.
\end{proof}

Finally, we upper bound the sum $\sum_{k=k_j}^{\ell^{(t)}(j)} z_{\loss}(j, k)$ in terms of $F_{\loss}$.

\begin{claim}
\label{cl:loss-vs-gain:3}
We have $\sum_{k=k_j}^{\ell^{(t)}(j)} z_{\loss}(j, k) \leq (1/\epsilon) \cdot F_{\loss}$.
\end{claim}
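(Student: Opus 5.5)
The plan is to charge each increment of $\sum_k z_{\loss}(j,k)$ to a simultaneous decrement of $x(j)$ that is recorded in $F_{\loss}$, event by event. By definition, $z_{\loss}(j,k)$ is the total increase of $z(j,k)$ over $[t_k,t_{k+1}]$. By \Cref{obs:history-edge} and \Cref{def:history-graph}, the value $z(j,k)$ increases only when a new history edge incident on $(j,k)$ is inserted. While $\ell(j)=k$, this happens only when a machine $i\in N_G(j)$ gets whacked, i.e., at an event of type~\ref{case-b}. Insertions of edges incident on $j$ do not occur after $j$ is inserted, and deletions create no history edges. Boosts of $j$ change $j$'s level, so the history edge created at $t_{k+1}$ belongs to $(j,k+1)$ and is already counted in $z_{\gain}(j,k+1)$.

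Fix a time-step $t'\in(t_k,t_{k+1}]$ at which some $i\in N_G(j)$ is whacked, with $\ell(j)=k$ and $\ell^{(t'-1)}(i)=k_i$. Right after the whack, the new history edge $((j,k),(i,k_i+1))$ appears, possibly together with edges that already existed. Its flow is $z=x^{(t')}(j,i)=x^{(t'-1)}(j,i)/(1+\epsilon)$ by \Cref{obs:flow}. So $z(j,k)$ increases by at most $x^{(t'-1)}(j,i)/(1+\epsilon)$; it increases by exactly this amount if the edge is new, and by zero otherwise. Meanwhile, $x(j)$ decreases by exactly
\[
x^{(t'-1)}(j,i)-\frac{x^{(t'-1)}(j,i)}{1+\epsilon}=\frac{\epsilon}{1+\epsilon}\,x^{(t'-1)}(j,i),
\]
and this decrease is added to $F_{\loss}$. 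Therefore the increase in $z(j,k)$ is at most $(1/\epsilon)$ times the increase in $F_{\loss}$ at this step.

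Summing over all such whack events for all levels $k\in[k_j,\ell^{(t)}(j)]$ gives
\[
\sum_k z_{\loss}(j,k)\le\frac{1}{\epsilon}\sum_{\text{type (ii)}}\Delta F_{\loss}\le\frac{1}{\epsilon}\,F_{\loss}.
\]
The second inequality holds because events of type~\ref{case-c} only add nonnegative amounts to $F_{\loss}$. These events occur in disjoint time intervals, all contained in $(t_{k_j},t]$, which is exactly the window over which $F_{\loss}$ is accumulated.

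The main obstacle is boundary bookkeeping. First, one must check that the intervals $[t_k,t_{k+1}]$ for distinct $k$ do not double-count an event. The whack at $t_k$ itself cannot happen, since $t_k$ is a boost of $j$, so each type~\ref{case-b} event is attributed to exactly one level. Second, one must check that the history edge created by the boost at $t_{k+1}$ does not contribute to $z_{\loss}(j,k)$; this follows since it is incident on $(j,k+1)$, not $(j,k)$. Third, one should note that several machines whacked consecutively are separate time-steps and are handled one at a time.
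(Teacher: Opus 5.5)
Your proposal is correct and matches the paper's proof: each whack of a neighbor $i$ while $j$ is at level $k$ increases $z(j,k)$ by $\lambda/(1+\epsilon)$ and $F_{\loss}$ by $\epsilon\lambda/(1+\epsilon)$, and these are the only events that increase $z_{\loss}$, so charging each unit of $z_{\loss}$ to $\epsilon$ units of $F_{\loss}$ gives the factor $1/\epsilon$. Your extra boundary bookkeeping is sound, and your hedge that the history edge might already exist is unnecessary: $(i,k_i+1)$ is a fresh machine-level node, so the increase is always exactly $\lambda/(1+\epsilon)$.
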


\begin{proof}
Consider any time-step $t' \in (t_{k_j}, t]$ with $\ell^{(t')}(j) = k' \in [k_j, \ell^{(t)}(j)]$ (say), which involves the whacking of some machine $i \in N_{G}^{(t')}(j)$. During this event, we set $x(j, i) \leftarrow (1+\epsilon)^{-1} \cdot x(j, i)$, and hence $x^{(t')}(j, i) \leftarrow (1+\epsilon)^{-1} \cdot x^{(t'-1)}(j, i)$. Let $x^{(t'-1)}(j, i) = \lambda$ (say), so that  $x^{(t')}(j, i) = \lambda/(1+\epsilon)$. Due to this event, the value of $z_{\loss}(j, k')$ increases by $\lambda/(1+\epsilon)$, whereas the value of $F_{\loss}$ increases by $\lambda - \lambda/(1+\epsilon) = \epsilon \cdot \lambda/(1+\epsilon)$.  Moreover, the value of $z_{\loss}(j, k')$ increases if and only if some machine neighboring the job $j$ gets whacked. In other words, for all $k' \in [k_j, \ell^{(t)}(j)]$, every unit increase in $z_{\loss}(j, k')$ can be charged to an $\epsilon$ unit increase in $F_{\loss}$. This implies the claim.
\end{proof}

We are now ready to upper bound the second summation on the right hand side of \Cref{eq:flow:partition:3}.

\begin{corollary}
\label{cor:loss-vs-gain:4}
We have $\sum_{k=k_j}^{\ell^{(t)}(j)} z_{\loss}(j, k) \leq \left(\ell^{(t)}(j) - k_j\right) + (1/\epsilon)$.
\end{corollary}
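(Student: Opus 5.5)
The corollary follows by chaining the three preceding claims. I will not touch the history graph again: everything has already been reduced to the scalar quantities $F_{\loss}$ and $F_{\gain}$.

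The chain runs as follows.
\begin{itemize}
\item \Cref{cl:loss-vs-gain:3} gives $\sum_{k=k_j}^{\ell^{(t)}(j)} z_{\loss}(j,k) \leq (1/\epsilon)\cdot F_{\loss}$.
\item \Cref{cl:loss-vs-gain:1} gives $F_{\loss} \leq F_{\gain} + 1$, so the sum is at most $(1/\epsilon)\cdot F_{\gain} + 1/\epsilon$.
\item \Cref{cl:loss-vs-gain:2} gives $F_{\gain} \leq \epsilon\cdot(\ell^{(t)}(j)-k_j)$, so $(1/\epsilon)\cdot F_{\gain} \leq \ell^{(t)}(j)-k_j$.
\end{itemize}
Combining these yields exactly $\sum_{k=k_j}^{\ell^{(t)}(j)} z_{\loss}(j,k) \leq (\ell^{(t)}(j)-k_j) + 1/\epsilon$, which is the claimed bound.

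The only point needing care is consistency of the time windows. All three claims must refer to the same interval $(t_{k_j}, t]$ and to the same values of $F_{\gain}$ and $F_{\loss}$ at the end of time-step $t$. Each claim is stated for this window, so the bounds compose directly.

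I would also note why \Cref{cl:loss-vs-gain:3} is only an inequality. Edge deletions (case~\ref{case-c}) increase $F_{\loss}$ without increasing any $z_{\loss}(j,k)$, and they only help the upper bound. There is no real obstacle here; the substantive work was already done in the three claims.
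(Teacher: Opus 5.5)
Your proposal is correct and is the same argument as the paper's proof: both chain \Cref{cl:loss-vs-gain:3}, \Cref{cl:loss-vs-gain:1} and \Cref{cl:loss-vs-gain:2} to get $\sum_{k=k_j}^{\ell^{(t)}(j)} z_{\loss}(j,k) \leq (1/\epsilon) F_{\loss} \leq (1/\epsilon) F_{\gain} + 1/\epsilon \leq (\ell^{(t)}(j)-k_j) + 1/\epsilon$. Your added remarks, that all three claims use the same window $(t_{k_j}, t]$ and that edge deletions make \Cref{cl:loss-vs-gain:3} a strict inequality in general, are accurate.
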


\begin{proof}
From \Cref{cl:loss-vs-gain:1}, \Cref{cl:loss-vs-gain:2} and \Cref{cl:loss-vs-gain:3}, we infer that 
$$\sum_{k=k_j}^{\ell^{(t)}(j)} z_{\loss}(j, k) \leq (1/\epsilon) \cdot F_{\loss} \leq (1/\epsilon) \cdot F_{\gain} + (1/\epsilon) \leq \left(\ell^{(t)}(j) - k_j\right) + (1/\epsilon).$$
This concludes the proof of the corollary.
\end{proof}

At this point, \Cref{new:lm:job} follows from \Cref{eq:flow:partition:3}, \Cref{cl:flow:partition:1} and \Cref{cor:loss-vs-gain:4}. \qed

\subsubsection{Bounding the Total Weighted Flow into Machine-Level Nodes: Proof of \Cref{new:lm:machine}}
\label{sec:new:lm:machine}

Throughout the proof, we fix a time-step $t$, machine $i \in M$ and level $0 < k_i < \ell^{(t)}(i)$. We will show:
$$\sum_{k = k_i}^{\ell^{(t)}(i)} w_z^{(t)}(i, k) \geq 2 (1+\epsilon) \cdot \left(\ell^{(t)}(i) - k_i+1\right) \cdot \Gamma - \left(\frac{5}{\epsilon}\right) \cdot \Gamma.$$

We start with two useful observations which bound the total weighted flow absorbed by $i$.

\begin{observation}
\label{obs:whack:machine}
Immediately after the machine $i$ gets whacked, we must have $w_x(i) > \beta \cdot \Gamma$.
\end{observation}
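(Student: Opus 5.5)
The plan is to read this off directly from the algorithm's whack rule together with the multiplicative form of \Cref{alg:whack}. A machine $i$ is whacked only inside the while-loops of \Cref{alg:preprocess} or \Cref{alg:delete-edge}, at Line~\ref{line:whack:preprocess} or Line~\ref{line:whack:update}. Both are reached only under the branch condition $w_x(i) > \beta(1+\epsilon)\cdot\Gamma$, evaluated on the flow $x$ just before the whack.

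Next I will compute exactly how the whack changes the load of $i$. The whack replaces $x(j,i)$ by $(1+\epsilon)^{-1}x(j,i)$ for every $j \in N(i)$, and it touches no other edge incident on $i$. Job weights are fixed, and the whack is a single time-step that involves no edge deletion. So the load after the whack is
\[
w_x(i) \;=\; \sum_{j\in N(i)} w(j)\,(1+\epsilon)^{-1} x^{\mathrm{old}}(j,i) \;=\; (1+\epsilon)^{-1} w_{x^{\mathrm{old}}}(i).
\]
Substituting the branch condition then gives $w_x(i) > (1+\epsilon)^{-1}\beta(1+\epsilon)\Gamma = \beta\Gamma$, which is the claim.

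The only point needing care is the meaning of ``immediately after''. I will interpret it in the time-step framework of \Cref{sec:history-graph}, where each whack is its own time-step. With that reading, nothing else (no boost of a neighbor, no deletion) intervenes between checking the condition and finishing the whack. I do not expect any genuine obstacle here, since the observation is essentially a restatement of the priority rule and the scaling factor.
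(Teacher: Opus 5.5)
Your proposal is correct and follows the paper's proof: you combine the whack condition $w_x(i) > \beta(1+\epsilon)\cdot\Gamma$ with the fact that a whack scales every incident flow, and hence the load, by exactly $(1+\epsilon)^{-1}$, which yields $w_x(i) > \beta\Gamma$ at the end of the whack time-step. Your reading of ``immediately after'' through the time-step framework is also how the paper argues it, namely as $w_x^{(t')}(i) = (1+\epsilon)^{-1} w_x^{(t'-1)}(i)$.
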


\begin{proof}
The machine $i$ gets whacked only if $w_x(i) > \beta (1+\epsilon) \cdot \Gamma$ (see Line~\ref{line:whack:preprocess} in \Cref{alg:preprocess}, and Line~\ref{line:whack:update} in \Cref{alg:delete-edge}). Further, whenever the machine $i$ gets whacked, we set $x(j, i) \leftarrow x(j, i) \cdot (1+\epsilon)^{-1}$ for all $j \in N_G(i)$. Now, suppose that the machine $i$ gets whacked at a time-step (say) $t'$. Then, we must have $w_x^{(t')}(i) = w_x^{(t'-1)}(i) \cdot (1+\epsilon)^{-1} > \beta (1+\epsilon) \cdot \Gamma \cdot (1+\epsilon)^{-1} > \beta \cdot \Gamma$.
\end{proof}

\begin{observation}
\label{obs:machine:bound}
We always have $0 \leq w_x(i) \leq 3 \Gamma$.
\end{observation}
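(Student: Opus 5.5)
We argue by induction over time-steps that the invariant $0 \le w_x(i) \le 3\Gamma$ holds at the end of every time-step $t$. Nonnegativity is immediate from \Cref{obs:flow}, since every flow value is a positive number. For the upper bound, we go through the four types of events and check how each can change $w_x(i)$.

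At time $0$ and right after preprocessing initializes $x$, every edge carries flow $(\epsilon/(Wmn))^2$. Hence
\[
w_x(i) \le mW\left(\frac{\epsilon}{Wmn}\right)^2 \le \frac{1}{n} \le \Gamma,
\]
where the last step uses \Cref{eq:gamma:100}. Three of the event types cannot break the bound. An edge deletion only removes flow, so $w_x(i)$ does not increase. A whack of $i$ decreases $w_x(i)$, and a whack of any other machine leaves $w_x(i)$ unchanged. When a job $j$ is inserted, $w_x(i)$ increases by at most
\[
W\left(\frac{\epsilon}{Wmn}\right)^2 (1+\epsilon)^{-\ell(i)} \le \frac{\epsilon}{n} \le \epsilon\Gamma.
\]

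The crucial event is a boost of a job $j \in N(i)$, which multiplies $x(j,i)$ by $(1+\epsilon)$. Here we use the priority rule in \Cref{alg:preprocess} and \Cref{alg:delete-edge}. A boost is performed only when no machine has $w_x(\cdot) > \beta(1+\epsilon)\Gamma$, so just before the boost we have $w_x(i) \le \beta(1+\epsilon)\Gamma$. The boost raises $w_x(i)$ by $\epsilon\, w(j)\, x(j,i)$. By \Cref{obs:flow:upper:bound}, $x(j,i) < 1$, and trivially $w(j)\,x(j,i) \le w_x(i)$. Therefore after the boost
\[
w_x(i) \le (1+\epsilon)\beta(1+\epsilon)\Gamma = (1+10\epsilon)(1+\epsilon)^2\Gamma < 3\Gamma,
\]
using $\epsilon < 1/100$.

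The remaining step is to ensure that several increases cannot stack before the loop repairs the load. A boost is always immediately preceded by a state in which $w_x(i) \le \beta(1+\epsilon)\Gamma$, so the boost case is fine on its own. A job insertion, however, occurs at the start of an update, before the while loop runs. Just before it, the previous update's loop has terminated, so $w_x(i) \le \beta(1+\epsilon)\Gamma$; edge deletions only lower the load. An insertion therefore gives at most $\beta(1+\epsilon)\Gamma + \epsilon\Gamma < 3\Gamma$, which handles the case where the previous state had a valid load.

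The main obstacle is this sequencing: the state entering a boost or an insertion must genuinely have load at most $\beta(1+\epsilon)\Gamma$. One point needs care. A single insertion can add up to $W\cdot|N(j)|$ times the tiny per-edge flow, but this sum is still bounded by $mW(\epsilon/(Wmn))^2 \le \epsilon\Gamma$, so the bound above holds. After the insertion, the loop whacks before it boosts, and whacks only decrease loads. Consequently $w_x(i)$ never exceeds $\max\{\beta(1+\epsilon)^2,\ \beta(1+\epsilon)+\epsilon\}\cdot\Gamma$, which is less than $3\Gamma$. This completes the induction.
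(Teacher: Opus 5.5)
Your proof is correct and takes essentially the same route as the paper. You bound the initial load by $\Gamma$, observe that deletions and whacks never increase $w_x(i)$, and use the whack-before-boost priority rule so that every increase starts from a load of at most $\beta(1+\epsilon)\Gamma$; a boost multiplies the load by at most $1+\epsilon$, and an insertion adds at most $\epsilon\Gamma$. Your explicit handling of insertions that occur before the while loop spells out a step the paper leaves implicit. Your aside about $W\cdot|N(j)|$ is unnecessary, since an inserted job affects $w_x(i)$ only through the single edge $(j,i)$.
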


\begin{proof}
At the start of preprocessing (specifically, after Line~\ref{line:level:init:1} in \Cref{alg:preprocess}), 
we have $w_x(i) = \sum_{j \in N_G(i)} w(j) \cdot x(j,i) \leq \sum_{j \in N_G(i)} W \cdot x(j,i)  = \sum_{j \in N_G(i)} W \cdot \left(\frac{\epsilon}{Wmn}\right)^2 \leq  n  W \cdot \left(\frac{\epsilon}{Wmn}\right)^2 \leq \Gamma$,  where the last inequality holds since $\Gamma \geq 1/n$ (see \Cref{eq:gamma:100}). 

Before proceeding any further, recall that our algorithm prioritizes whacking machines over boosting jobs (see the {\bf while} loops in \Cref{alg:preprocess} and \Cref{alg:delete-edge}). In a bit more detail, as long as there exists at least one machine with load more than $\beta(1+\epsilon) \cdot \Gamma$, our algorithm does not boost any job.  Furthermore, whacking a machine $i' \neq i$ does not change the load of the machine $i$.

 Next, note that whenever a job $j \in N_G(i)$ gets boosted at (say) a time-step $t'$, the value of $w_x(i)$ increases by $w(j) \cdot \left(x^{(t')}(j, i) - x^{(t'-1)}(j,i)\right) = \epsilon \cdot w(j) \cdot x^{(t'-1)}(j, i) \leq \epsilon \cdot w_x^{(t'-1)}(i) \leq \epsilon \cdot \beta(1+\epsilon) \cdot \Gamma \leq 2\epsilon \cdot \Gamma$, where the penultimate inequality follows from the discussion in the preceding paragraph and the last inequality follows from \Cref{eq:epsilon}. 
 Similarly, whenever a job $j \in N_G(i)$ gets inserted, the value of $w_x(i)$ increases by $\left(\frac{\epsilon}{Wmn}\right)^2 \cdot (1+\epsilon)^{-\ell(i)} \leq  \epsilon^2 \Gamma$ (see Line~\ref{line:level:init:2} of \Cref{alg:delete-edge}), where the last inequality holds since $\Gamma \geq 1/n$ (see \Cref{eq:gamma:100}). 
 In contrast, deletion of an edge incident on the machine $i$ {\em decreases} the value of $w_x(i)$. To summarize, the value of $w_x(i)$ starts at $\leq \Gamma$, and increases in increments of at most an additive $\max(2\epsilon \Gamma, \epsilon^2 \Gamma) \leq \Gamma$.


According, we infer that we always have $w_x(i) \leq \beta(1+\epsilon) \cdot \Gamma + \Gamma \leq 3\Gamma$ 
(see \Cref{eq:epsilon}).
\end{proof}

For every level $0 < k_i \leq k  \leq \ell^{(t)}(i)$, let $t_k$ be the time-step at which the machine $i$ entered level $k$. To ensure consistency of notations, we let $t_{\ell^{(t)}(i)+1} := t$. For every level $0 < k_i \leq k \leq \ell^{(t)}(i)$, we define:
\begin{equation}
\label{eq:flow:partition:machine}
w_{z_{\loss}}(i, k) := w_z^{(t_k)}(i, k) \text{ and } w_{z_{\gain}}(j, k) := w_z^{(t_{k+1})}(i, k) - w_z^{(t_k)}(i, k).
\end{equation}

To interpret these notations, observe that at the time-step $t_k$, the machine $i$ gets whacked for the $k^{th}$ time and increases its level from $k-1$ to $k$. The symbol $w_{z_{\loss}}(i, k)$ denotes the total weighted flow absorbed by $i$ immediately after this event. The subscript ``$\loss$'' refers to the fact that the machine $i$ {\em loses} flow due to this event. Subsequently, as long as the machine $i$ remains at level $k$, it {\em gains} flow only if either (i) a job $j \in N_G(i)$ gets boosted or (ii) a job $j$ adjacent to the machine $i$ gets inserted. Any such event leads to an increase in the value of $w_z(i, k)$, accompanied by the insertion of a new history edge incident on $(i, k)$ in $\G$.

The symbol $w_{z_{\gain}(i, k)}$ denotes the total increase in the value of $w_z(i, k)$ during the time-interval $[t_k, t_{k+1}]$ due to all events described above.\footnote{We remark that, as long as the machine $i$ remains at level $k$, it {\em loses} flow whenever an edge $(j, i) \in E$ incident on $i$ gets deleted. However, these edge deletions do not change the value of $w_z(i, k)$ and do not influence our definition of $w_{z_{\loss}}(i,k)$.}
 The subscript ``$\gain$'' refers to the fact that the machine $i$ {\em gains} weighted flow during all these concerned events.

If $k < \ell^{(t)}(i)$, then it is easy to verify that $t_{k+1} \leq t$, and that the value of $w_z(i, k)$ remains unchanged during the time-interval $\left[t_{k+1}, t\right]$; this is because no history edge incident on $(i, k)$ gets inserted into $\G$ when the machine $i$ is at level $\geq k+1$. In contrast, if $k = \ell^{(t)}(i)$, then we have defined $t_{k+1} := t$. Accordingly, from \Cref{eq:flow:partition:machine} we infer that $w_z^{(t)}(i, k) = w_{z_{\gain}}(i, k) + w_{z_{\loss}}(i, k)$ for all  $k_i \leq k \leq \ell^{(t)}(i)$. Summing this equality over all $k \in \left[ k_i, \ell^{(t)}(i) \right]$, we get:
\begin{equation}
\label{eq:flow:partition:3:machine}
\sum_{k=k_i}^{\ell^{(t)}(i)} w_z^{(t)}(i,k) =  \sum_{k=k_i}^{\ell^{(t)}(i)} w_{z_{\loss}}(i, k) + \sum_{k=k_i}^{\ell^{(t)}(i)} w_{z_{\gain}}(i, k).
\end{equation}

We will now lower bound the two summations on the right hand side of \Cref{eq:flow:partition:3:machine}, one after another. In \Cref{cl:machine:1}, we lower bound the first of these summations.

\begin{claim}
\label{cl:machine:1}
We have $\sum_{k=k_i}^{\ell^{(t)}(i)} w_{z_{\loss}}(i, k) > \beta \left(\ell^{(t)}(i) - k_i + 1\right) \cdot \Gamma$.
\end{claim}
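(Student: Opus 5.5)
The plan is to bound each summand separately and then add the bounds. Fix a level $k \in [k_i, \ell^{(t)}(i)]$. Since $k \geq k_i > 0$, the time-step $t_k$ is well defined: it is the step at which $i$ is whacked for the $k$-th time and moves from level $k-1$ to level $k$. By \Cref{eq:flow:partition:machine}, $w_{z_{\loss}}(i,k) = w_z^{(t_k)}(i,k)$. So it suffices to show that $w_z^{(t_k)}(i,k) > \beta \Gamma$ for every such $k$. Summing over the $\ell^{(t)}(i) - k_i + 1$ levels in $[k_i, \ell^{(t)}(i)]$ then gives the claim.

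To prove $w_z^{(t_k)}(i,k) > \beta\Gamma$, I would identify $w_z^{(t_k)}(i,k)$ with the current load $w_x^{(t_k)}(i)$.
\begin{itemize}
\item At the end of time-step $t_k$, every edge $(j,i) \in E^{(t_k)}$ has endpoint levels $\ell^{(t_k)}(j)$ and $\ell^{(t_k)}(i) = k$.
\item By \Cref{obs:history-edge}, each such edge therefore gives a history edge $\big((j,\ell^{(t_k)}(j)),(i,k)\big) \in \E^{(t_k)}$ with $z$-value exactly $x^{(t_k)}(j,i)$.
\item Distinct edges $(j,i)$ give distinct history edges, because their job endpoints differ.
\end{itemize}
Summing weights over these history edges shows $w_z^{(t_k)}(i,k) \geq \sum_{j \in N_G^{(t_k)}(i)} w(j)\, x^{(t_k)}(j,i) = w_x^{(t_k)}(i)$.

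The step I would check most carefully is that there are no other history edges at node $(i,k)$ at time $t_k$. Such an edge would need some earlier time $t' < t_k$ with $\ell^{(t')}(i) = k$, which is impossible because $i$ first reaches level $k$ at $t_k$. So in fact equality holds, although only the inequality is needed.

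Finally, \Cref{obs:whack:machine} says that immediately after $i$ is whacked its load satisfies $w_x(i) > \beta\Gamma$. Applied to the whack at step $t_k$, this gives $w_z^{(t_k)}(i,k) \geq w_x^{(t_k)}(i) > \beta\Gamma$, and summing over $k$ completes the proof.
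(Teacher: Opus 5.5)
Your proposal is correct and follows the paper's proof: you identify $w_{z_{\loss}}(i,k)=w_z^{(t_k)}(i,k)$ with the load $w_x^{(t_k)}(i)$ immediately after the $k$-th whack, apply \Cref{obs:whack:machine}, and sum over $k\in[k_i,\ell^{(t)}(i)]$. The paper simply asserts this identification, whereas you justify it via \Cref{obs:history-edge} and the fact that $i$ first reaches level $k$ at time $t_k$.
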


\begin{proof}
Consider any level $0 < k_i \leq k \leq \ell^{(t)}(i)$. \Cref{eq:flow:partition:machine} implies that $w_{z_{\loss}}(i,k)$ equals the value of $w_x(i)$ immediately after the machine $i$ gets whacked for the $k^{th}$ time. Hence, by \Cref{obs:whack:machine}, we have $w_{z_{\loss}}(i, k) > \beta \cdot \Gamma$. The claim follows from summing this inequality over all $k \in \left[k_i, \ell^{(t)}(i)\right]$.
\end{proof}

It now remains to lower bound the second summation on the right hand side of \Cref{eq:flow:partition:machine}. Towards this end, we define three quantities $W_{\gain}$, $W_{\gain}^\star$ and $W_{\loss}$, 
which respectively correspond to the total gain in the flow $w_x(i)$ during the time-interval $\left[ t_{k_i}, t\right]$ due to boosting of jobs, the total gain in the flow $w_x(i)$ during the time-interval $\left[ t_{k_i}, t\right]$ due to insertions of jobs, and the total loss in the flow $w_x(i)$ during the time-interval $\left[ t_{k_i}, t\right]$, as described below.

\begin{wrapper}
At time-step $t_{k_i}$, we initialize $W_{\gain} \leftarrow 0$, $W_{\gain}^\star \leftarrow 0$ and $W_{\loss} \leftarrow 0$. Subsequently, at each time-step $t' \in \left( t_{k_i}, t\right]$, we update the values of $W_{\gain}, W_{\gain}^\star, W_{\loss}$ as per the following rules.
\begin{enumerate}
\item \label{case-a} Suppose that the machine $i$ gets whacked during time-step $t'$, which implies that $w_x^{(t')}(i) < w_x^{(t'-1)}(i)$. In this case, we set $W_{\loss} \leftarrow W_{\loss} + \left(w_x^{(t'-1)}(i) - w_x^{(t')}(i) \right)$. 
\item \label{case-b} Suppose that some job $j \in N^{(t')}_G(i)$ gets boosted during time-step $t'$, which implies that $w_x^{(t')}(i) > w_x^{(t'-1)}(j)$. In this case, we set $W_{\gain} \leftarrow W_{\gain} + \left(w_x^{(t')}(i) - w_x^{(t'-1)}(i) \right)$. 
\item \label{case-insert} Suppose that some job $j$ gets inserted during time-step $t'$, with $(j, i) \in E^{(t')}$. This implies that $w_x^{(t')}(i) > w_x^{(t'-1)}(j)$. In this case, we set $W_{\gain}^\star \leftarrow W_{\gain}^\star + \left(w_x^{(t')}(i) - w_x^{(t'-1)}(i) \right)$. 
\item \label{case-c} Suppose that some edge $(j, i) \in E^{(t'-1)} \setminus E^{(t')}$ gets deleted during time-step $t'$, which implies that $w_x^{(t')}(i) < w_x^{(t'-1)}(i)$. In this case, we set $W_{\loss} \leftarrow W_{\loss} + \left(w_x^{(t'-1)}(i) - w_x^{(t')}(i) \right)$.
\item Otherwise, we have $w_x^{(t')}(i) = w_x^{(t'-1)}(i)$, and  $W_{\gain}, W_{\gain}^\star, W_{\loss}$ also remain unchanged.
\end{enumerate}
Henceforth, unless explicitly specified otherwise, we use the symbols $W_{\gain}, W_{\gain}^\star, W_{\loss}$ to refer to their values at the end of time-step $t$.
\end{wrapper}

Next, we lower bound $W_{\gain} + W_{\gain}^\star$ in terms of $W_{\loss}$, as stated in the claim below.

\begin{claim}
\label{cl:loss-vs-gain:100}
We have $W_{\gain} + W_{\gain}^\star \geq W_{\loss} - 3 \Gamma$.
\end{claim}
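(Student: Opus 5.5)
The plan is to mirror the proof of \Cref{cl:loss-vs-gain:1}: write the load $w_x(i)$ at the end of time-step $t$ as its value at time $t_{k_i}$ plus all increments minus all decrements, and then use \Cref{obs:machine:bound} to bound the net change.

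First I would check that the four bookkeeping rules cover every event during $(t_{k_i}, t]$ that changes $w_x(i)$. By \Cref{obs:flow}, the flow $x(j,i)$ on an edge incident on $i$ changes only in the following cases:
\begin{itemize}
\item $i$ is whacked, which decreases $w_x(i)$ and is accounted for in $W_{\loss}$;
\item an adjacent job $j$ is boosted, which increases $w_x(i)$ and is accounted for in $W_{\gain}$;
\item a new job adjacent to $i$ is inserted, which increases $w_x(i)$ and is accounted for in $W_{\gain}^\star$;
\item an edge incident on $i$ is deleted, which decreases $w_x(i)$ and is accounted for in $W_{\loss}$.
\end{itemize}
Whacking a machine $i' \neq i$, boosting a non-neighbor, or deleting an edge not incident on $i$ leaves $w_x(i)$ unchanged. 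Each time-step consists of a single event, so the rules are mutually exclusive and exhaustive.

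Telescoping over $(t_{k_i}, t]$ then gives the identity
\[ w_x^{(t)}(i) = w_x^{(t_{k_i})}(i) + W_{\gain} + W_{\gain}^\star - W_{\loss}. \]
Rearranging,
\[ W_{\gain} + W_{\gain}^\star = W_{\loss} - \left(w_x^{(t_{k_i})}(i) - w_x^{(t)}(i)\right). \]

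It remains to bound the bracketed term. By \Cref{obs:machine:bound}, $w_x^{(t_{k_i})}(i) \leq 3\Gamma$ and $w_x^{(t)}(i) \geq 0$, so the bracket is at most $3\Gamma$, which yields the claim.

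The only mild obstacle is the bookkeeping: I need to confirm that at time-step $t_{k_i}$ itself, where the whack putting $i$ at level $k_i$ occurs, the counters start from zero and that whack is not counted. This is consistent with initializing at $t_{k_i}$ and summing only over $t' \in (t_{k_i}, t]$. I also need to note that the statement's typo $w_x^{(t'-1)}(j)$ should read $w_x^{(t'-1)}(i)$, which is harmless.
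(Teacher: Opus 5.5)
Your proposal is correct and matches the paper's proof: the paper also derives $w_x^{(t)}(i) = w_x^{(t_{k_i})}(i) + W_{\gain} + W_{\gain}^\star - W_{\loss}$ from the bookkeeping rules, then bounds $w_x^{(t_{k_i})}(i) - w_x^{(t)}(i)$ by $3\Gamma$ using \Cref{obs:machine:bound}. Your explicit check that the four rules cover every event that changes $w_x(i)$ is a detail the paper leaves implicit.
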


\begin{proof}
From the definitions of $W_{\loss}, W_{\gain}$ and $W_{\gain}^\star$, we conclude that $w_x^{(t)}(i) = w_x^{(t_{k_i})}(i) + W_{\gain} + W_{\gain}^\star - W_{\loss}$, or equivalently, $W_{\gain} + W_{\gain}^\star = W_{\loss} + \left( w_x^{(t)}(i)  - w_x^{(t_{k_i})}(i)\right)$. 
The claim now follows from \Cref{obs:machine:bound}.
\end{proof}

Moving forward, we show that the value of $W_{\gain}^\star$ is negligibly small.

\begin{claim}
\label{cl:insert:200}
We have $W^\star_{\gain} \leq \epsilon^2 \Gamma$.
\end{claim}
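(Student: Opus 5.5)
The plan is a direct per-event accounting. By definition, $W^\star_{\gain}$ only increases at time-steps $t' \in (t_{k_i}, t]$ at which some job $j$ with $(j,i) \in E^{(t')}$ is inserted. I will bound the increment at each such step and then bound how many such steps there can be.

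First, the size of one increment. When a job $j$ is inserted, Line~\ref{line:level:init:2} of \Cref{alg:delete-edge} sets $x(j,i) = \left(\frac{\epsilon}{Wmn}\right)^2 (1+\epsilon)^{-\ell(i)}$. The flows on all other edges incident on $i$ are unchanged. Hence
\[
w_x^{(t')}(i) - w_x^{(t'-1)}(i) = w(j)\left(\frac{\epsilon}{Wmn}\right)^2 (1+\epsilon)^{-\ell(i)} \le W \left(\frac{\epsilon}{Wmn}\right)^2 ,
\]
using $w(j) \le W$ and $\ell(i) \ge 0$. Note that the corresponding bound inside the proof of \Cref{obs:machine:bound} omits the factor $w(j)$, so I will keep it explicitly here.

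Second, the number of increments. Each job is inserted at most once, and each insertion contributes exactly one edge $(j,i)$ to machine $i$. So the number of contributing time-steps is at most the number of jobs ever present, which is at most $n$. (Alternatively, it is at most $m$, since each such edge is a distinct edge that ever appears; either count suffices.)

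Combining the two bounds gives
\[
W^\star_{\gain} \le n \cdot W \left(\frac{\epsilon}{Wmn}\right)^2 = \frac{\epsilon^2}{W m^2 n} \le \frac{\epsilon^2}{n} \le \epsilon^2 \Gamma,
\]
where the last step uses $\Gamma \ge 1/n$ from \Cref{eq:gamma:100}.

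There is no real obstacle here. The only care needed is to check that no other event type contributes to $W^\star_{\gain}$: boosts go into $W_{\gain}$, and whacks and deletions go into $W_{\loss}$. It also matters that an insertion increments only the single edge $(j,i)$, not anything proportional to the current load, so the increments do not compound.
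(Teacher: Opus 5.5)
Your proof is correct and follows the paper's own argument: bound each insertion's contribution using the initialization on Line~\ref{line:level:init:2}, multiply by the at most $n$ job insertions, and finish with $\Gamma \geq 1/n$. Keeping the factor $w(j) \le W$ is more careful than the paper, whose increment bound leaves out the job weight. As you show, the slack in $(\epsilon/(Wmn))^2$ absorbs that factor, so the stated bound still holds.
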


\begin{proof}
At time-step $t_{k_i}$, we have $W_{\gain}^\star = 0$. Subsequently, during the time-interval $(t_{k_i}, t]$, whenever a job $j$ adjacent to the machine $i$ gets inserted, the value of $W_{\gain}^\star$ increases by $\left(\frac{\epsilon}{Wmn}\right)^2 \cdot (1+\epsilon)^{-\ell(i)} \leq \left(\frac{\epsilon}{Wmn}\right)^2$ (see Line~\ref{line:level:init:2} of \Cref{alg:delete-edge}).  Since there can be at most $n$ job insertions during the time-interval $(t_{k_i}, t]$, at time-step $t$ we must have $W_{\gain}^\star \leq n \cdot \left(\epsilon/(Wmn)\right)^2 \leq \epsilon^2/n \leq \epsilon^2 \Gamma$, where the last inequality holds since $\Gamma \geq 1/n$ (see \Cref{eq:gamma:100}). 
\end{proof}

\begin{corollary}
\label{cor:loss-vs-gain:100}
We have $W_{\gain} \geq W_{\loss} - (3+\epsilon^2) \Gamma$.
\end{corollary}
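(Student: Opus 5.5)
This follows directly from \Cref{cl:loss-vs-gain:100} and \Cref{cl:insert:200}. By \Cref{cl:loss-vs-gain:100} we have $W_{\gain} + W_{\gain}^\star \geq W_{\loss} - 3\Gamma$. Rearranging gives
\[
W_{\gain} \geq W_{\loss} - 3\Gamma - W_{\gain}^\star .
\]
We then substitute the upper bound $W_{\gain}^\star \leq \epsilon^2 \Gamma$ from \Cref{cl:insert:200}. This yields $W_{\gain} \geq W_{\loss} - 3\Gamma - \epsilon^2\Gamma = W_{\loss} - (3+\epsilon^2)\Gamma$, as claimed.

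There is no real obstacle. The one point to check is that both claims refer to the same quantities $W_{\gain}, W_{\gain}^\star, W_{\loss}$, evaluated at the end of time-step $t$ over the interval $[t_{k_i}, t]$. This holds by the convention stated in the definition box, so the two inequalities can be combined directly.
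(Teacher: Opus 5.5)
Your proposal is correct and is exactly the paper's argument: the corollary follows by rearranging \Cref{cl:loss-vs-gain:100} and substituting the bound $W_{\gain}^\star \leq \epsilon^2 \Gamma$ from \Cref{cl:insert:200}. Your check that both claims refer to the same quantities at the end of time-step $t$ is the only point of care, and it holds by the stated convention.
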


\begin{proof}
Follows from \Cref{cl:loss-vs-gain:100} and \Cref{cl:insert:200}.
\end{proof}

Next, we relate $W_{\loss}$ to the number of times $i$ gets whacked during the time-interval $(t_{k_i}, t]$. 

\begin{claim}
\label{cl:loss-vs-gain:101}
We have $W_{\loss} \geq \epsilon \Gamma \cdot \left( \ell^{(t)}(i) - k_i \right)$.
\end{claim}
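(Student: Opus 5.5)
The plan is to mirror \Cref{cl:loss-vs-gain:2} on the machine side, with whacks now playing the role of the loss events. The value of $W_{\loss}$ starts at $0$ at time-step $t_{k_i}$ and never decreases: both contributing rules, whacking $i$ and deleting an edge incident on $i$, only add nonnegative amounts. It therefore suffices to lower bound the contribution of whacks alone and to ignore edge deletions.

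The key steps are as follows.
\begin{itemize}
\item During $(t_{k_i}, t]$, the machine $i$ moves from level $k_i$ to level $\ell^{(t)}(i)$. So it is whacked exactly $\ell^{(t)}(i) - k_i$ times, namely at the time-steps $t_{k+1}$ for $k \in [k_i, \ell^{(t)}(i))$. Each of these lies in $(t_{k_i}, t]$ by the convention $t_{\ell^{(t)}(i)+1} := t$ and the assumption $k_i < \ell^{(t)}(i)$.
\item Fix one such whack at time-step $t' = t_{k+1}$. By Line~\ref{line:whack:preprocess} of \Cref{alg:preprocess} and Line~\ref{line:whack:update} of \Cref{alg:delete-edge}, the machine is whacked only if $w_x^{(t'-1)}(i) > \beta(1+\epsilon)\Gamma$.
\item Since every incident flow is divided by $(1+\epsilon)$, we get $w_x^{(t')}(i) = w_x^{(t'-1)}(i)/(1+\epsilon)$.
\item The increase of $W_{\loss}$ at this step is therefore
\[
w_x^{(t'-1)}(i)\Bigl(1 - \tfrac{1}{1+\epsilon}\Bigr) = \tfrac{\epsilon}{1+\epsilon}\, w_x^{(t'-1)}(i) > \epsilon \beta \Gamma \geq \epsilon \Gamma,
\]
using $\beta \geq 1$.
\item Summing over the $\ell^{(t)}(i) - k_i$ whacks gives $W_{\loss} \geq \epsilon\beta\Gamma\,(\ell^{(t)}(i)-k_i) \geq \epsilon\Gamma\,(\ell^{(t)}(i)-k_i)$, which is the claim.
\end{itemize}

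I do not expect a real obstacle here. The only point needing care is the bookkeeping that every whack of $i$ at levels $k_i,\dots,\ell^{(t)}(i)-1$ falls inside the window $(t_{k_i}, t]$ over which $W_{\loss}$ is accumulated. A second point is that the load threshold used is the pre-whack load exceeding $\beta(1+\epsilon)\Gamma$, rather than the post-whack bound of \Cref{obs:whack:machine}. Both follow directly from the definitions of $t_k$ and of the while loop.
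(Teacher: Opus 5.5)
Your proof is correct and is essentially the paper's argument. Each of the $\ell^{(t)}(i)-k_i$ whacks at time-steps $t_{k_i+1},\dots,t_{\ell^{(t)}(i)}$ adds $\frac{\epsilon}{1+\epsilon}w_x^{(t'-1)}(i)=\epsilon\, w_x^{(t')}(i)$ to $W_{\loss}$. The paper bounds this using the post-whack load $w_x^{(t')}(i)>\beta\Gamma$ from \Cref{obs:whack:machine}, whereas you use the equivalent pre-whack threshold directly; you also get the sign of the difference right, which the paper's display reverses.

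One small nitpick: these whack times lie in $(t_{k_i},t]$ simply because $i$ has already reached level $\ell^{(t)}(i)$ by time $t$. The convention $t_{\ell^{(t)}(i)+1}:=t$ plays no role here.
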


\begin{proof}
The value of $W_{\loss}$ is initialized at $0$ when the machine $i$ enters level $k_i$. Subsequently, whenever the machine $i$ moves up from some level $k \in \left[k_i, \ell^{(t)}(i) \right)$ to level $k+1$, the value of $W_{\loss}$ increases. Consider the time-step $t_{k+1}$ at which such an event occurs. When the machine $i$ gets whacked at time-step $t_{k+1}$, we set $x(j, i) \leftarrow (1+\epsilon)^{-1} \cdot x(j, i)$ for every job $j \in N_G^{(t_{k+1})}(i)$. This implies that $w_x^{(t_{k+1})}(i) = (1+\epsilon)^{-1} \cdot w_x^{(t_{k+1}-1)}(i)$. Furthermore, by \Cref{obs:whack:machine}, we have $w_x^{(t_{k+1})}(i) > \beta \Gamma > \Gamma$.

Let $\Delta_{\loss}(t_{k+1})$ denote the increase in the value of $W_{\loss}$, which occurs due to the machine $i$ getting whacked at time-step $t_{k+1}$.  From the discussion in the preceding paragraph, we infer that:
\begin{equation}
\Delta_{\loss}\left(t_{k+1}\right)  =  w_x^{(t_{k+1})}(i) - w_x^{(t_{k+1}-1)}(i) 
 =  \epsilon \cdot w_x^{(t_{k+1})}(i) 
 \geq  \epsilon \Gamma. \label{eq:loss-vs-gain:200}
\end{equation}

Since $W_{\loss} \geq \sum_{k = k_i}^{\ell^{(t)}(i)-1} \Delta_{\loss}\left(t_{k+1}\right)$,  the claim follows from \Cref{eq:loss-vs-gain:200}.
\end{proof}

Finally, we lower bound  $\sum_{k=k_i}^{\ell^{(t)}(i)} w_{z_{\gain}}(j, k)$ in terms of $W_{\gain}$.

\begin{claim}
\label{cl:loss-vs-gain:300}
We have $\sum_{k=k_i}^{\ell^{(t)}(i)} w_{z_{\gain}}(i, k) \geq \frac{(1+\epsilon)}{\epsilon} \cdot W_{\gain}$.
\end{claim}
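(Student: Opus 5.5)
The plan is to mirror the charging argument of \Cref{cl:loss-vs-gain:3}, this time on the machine side. The key is to track, event by event, how $W_{\gain}$ and $\sum_k w_{z_{\gain}}(i,k)$ change together during the interval $(t_{k_i}, t]$.

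First, I would list the events in this interval that change $W_{\gain}$. By the rules in the box, $W_{\gain}$ changes only at a time-step $t'$ at which some job $j \in N^{(t')}_G(i)$ gets boosted. Fix such a step, let $k' := \ell^{(t')}(i) \in [k_i, \ell^{(t)}(i)]$, and write $\lambda := x^{(t'-1)}(j,i)$. The boost sets $x^{(t')}(j,i) = (1+\epsilon)\lambda$. Therefore $W_{\gain}$ increases by exactly $w(j)\bigl((1+\epsilon)\lambda - \lambda\bigr) = \epsilon\, w(j)\,\lambda$.

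Second, I would compute the simultaneous change in $w_z(i,k')$. After the boost, the original edge $(j,i)$ is present with levels $(\ell^{(t')}(j), k')$, where $\ell^{(t')}(j)$ is a level $j$ has never held before. By \Cref{obs:history-edge}, the history edge $\bigl((j,\ell^{(t')}(j)),(i,k')\bigr)$ therefore appears at time $t'$ and is new. Its history flow is $x^{(t')}(j,i) = (1+\epsilon)\lambda$. So this step adds $w(j)(1+\epsilon)\lambda$ to $w_z(i,k')$, and this amount counts towards $w_{z_{\gain}}(i,k')$, because it occurs while $i$ sits at level $k'$, inside $[t_{k'}, t_{k'+1}]$.

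The ratio between the two increments is $(1+\epsilon)\lambda / (\epsilon\lambda) = (1+\epsilon)/\epsilon$. Summing over all boost events in $(t_{k_i}, t]$ thus gives $\sum_{k=k_i}^{\ell^{(t)}(i)} w_{z_{\gain}}(i,k) \ge \frac{1+\epsilon}{\epsilon} W_{\gain}$. The inequality may be strict, since job insertions adjacent to $i$ also add new history edges and hence only increase the left-hand side. No other event decreases any $w_z(i,k)$, by \Cref{cor:monotone}.

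The main point to check carefully is that distinct boost events contribute disjoint history edges, so that we never double-count the same history edge for two different increments of $W_{\gain}$. Injectivity holds because each boost event creates the pair (job $j$, new level of $j$), and a given job reaches a given level at most once. The edge for that pair is attached to the current level $k'$ of $i$, so the boost events charged to $w_{z_{\gain}}(i,k')$ are exactly those occurring while $i$ is at level $k'$.

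A boundary case also needs care: a boost occurring at the same time-step as the whack $t_{k'}$. This cannot happen, because each time-step consists of a single event. Hence every boost counted in $W_{\gain}$ lies strictly inside some level interval of $i$, and its history edge is counted in the corresponding $w_{z_{\gain}}(i,k')$ rather than in $w_{z_{\loss}}(i,k')$.
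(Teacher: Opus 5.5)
Your proof is correct and is essentially the paper's own charging argument. Each boost of a neighbor $j$ of $i$ while $i$ is at level $k'$ raises $W_{\gain}$ by $\epsilon\, w(j)\lambda$ and adds a fresh history edge carrying weighted flow $(1+\epsilon)\, w(j)\lambda$ into $(i,k')$, so summing the ratio $(1+\epsilon)/\epsilon$ over these events gives the claim. Your extra checks make explicit what the paper leaves implicit: that distinct boosts give distinct history edges, and that no boost occurs at a whack time $t_{k'}$. You also correctly write $w(j)$ where the paper's proof writes $w(i)$.
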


\begin{proof}
Consider any time-step $t' \in (t_{k_i}, t]$ with $\ell^{(t')}(i) = k' \in [k_i, \ell^{(t)}(i)]$ (say), which involves the boosting of some job $j \in N_{G}^{(t')}(i)$. During this event, we set $x(j, i) \leftarrow (1+\epsilon) \cdot x(j, i)$. Let $w(i) \cdot x^{(t'-1)}(j, i) = \lambda$ (say), so that  $w(i) \cdot x^{(t')}(j, i) = (1+\epsilon) \lambda$. Due to this event, the value of $w_{z_{\gain}}(j, k')$ increases by $(1+\epsilon) \lambda$, whereas the value of $W_{\gain}$ increases by $(1+\epsilon)\lambda - \lambda = \epsilon \lambda$.  Moreover, the value of $W_{\gain}$ increases if and only if some job neighboring the machine $i$ gets boosted. In other words, every unit increase in the value of $W_{\gain}$ can be charged to a $(1+\epsilon)/\epsilon$ unit increase in the value of $w_{z_{\gain}}(i, k')$,  for some $k' \in [k_j, \ell^{(t)}(i)]$. This implies the claim.
\end{proof}

We are now ready to lower bound the second summation on the right hand side of \Cref{eq:flow:partition:3:machine}.

\begin{corollary}
\label{cor:loss-vs-gain:400}
We have $\sum_{k=k_i}^{\ell^{(t)}(i)} w_{z_{\gain}}(i, k) \geq (1+\epsilon) \cdot \left(\ell^{(t)}(i) - k_i + 1\right) \cdot \Gamma  - (5/\epsilon) \cdot \Gamma$.
\end{corollary}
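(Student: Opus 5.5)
The bound follows by chaining three earlier results, in the same way \Cref{cor:loss-vs-gain:4} was derived on the job side. First, \Cref{cl:loss-vs-gain:300} lower bounds the quantity we care about by the total boost-gain:
\[
\sum_{k=k_i}^{\ell^{(t)}(i)} w_{z_{\gain}}(i, k) \geq \frac{1+\epsilon}{\epsilon}\, W_{\gain}.
\]
Next, \Cref{cor:loss-vs-gain:100} replaces $W_{\gain}$ by the total loss, at the cost of an additive term:
\[
W_{\gain} \geq W_{\loss} - (3+\epsilon^2)\Gamma .
\]
Finally, \Cref{cl:loss-vs-gain:101} lower bounds the loss through the whacks of $i$ during $(t_{k_i}, t]$:
\[
W_{\loss} \geq \epsilon\Gamma\,(\ell^{(t)}(i) - k_i).
\]

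Substituting the third bound into the second and then into the first gives
\[
\sum_{k=k_i}^{\ell^{(t)}(i)} w_{z_{\gain}}(i, k) \geq \frac{1+\epsilon}{\epsilon}\Big(\epsilon\Gamma(\ell^{(t)}(i)-k_i) - (3+\epsilon^2)\Gamma\Big) = (1+\epsilon)\Gamma\,(\ell^{(t)}(i)-k_i) - \frac{(1+\epsilon)(3+\epsilon^2)}{\epsilon}\,\Gamma .
\]

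It remains to convert $(\ell^{(t)}(i)-k_i)$ into $(\ell^{(t)}(i)-k_i+1)$. I would add and subtract $(1+\epsilon)\Gamma$, which turns the additive loss into
\[
\Big(\frac{(1+\epsilon)(3+\epsilon^2)}{\epsilon} + (1+\epsilon)\Big)\Gamma .
\]
Since $\epsilon < 1/100$, we have $(1+\epsilon)(3+\epsilon^2) \le 3.04$ and $(1+\epsilon)\epsilon \le 0.02$. Hence the additive loss is at most $(3.06/\epsilon)\Gamma \le (5/\epsilon)\Gamma$, which is exactly the claimed bound.

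The only step needing care is this final constant check, and it has ample slack, so I expect no real obstacle. I would also verify that no sign condition is violated if the bracketed quantity $\epsilon\Gamma(\ell^{(t)}(i)-k_i) - (3+\epsilon^2)\Gamma$ is negative. It is not: the chain consists of linear inequalities with a positive multiplier $(1+\epsilon)/\epsilon$, so it holds regardless of sign. In that case the right-hand side is simply negative, and the bound holds trivially because the left-hand side is nonnegative.
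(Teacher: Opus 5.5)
Your proof is correct and follows the paper's argument: chain \Cref{cl:loss-vs-gain:300}, \Cref{cor:loss-vs-gain:100} and \Cref{cl:loss-vs-gain:101}, then absorb the extra $(1+\epsilon)\Gamma$ from replacing $\ell^{(t)}(i)-k_i$ by $\ell^{(t)}(i)-k_i+1$ into the additive term using $\epsilon<1/100$. The only difference is cosmetic: the paper absorbs the constants in two steps (first to $4/\epsilon$, then to $5/\epsilon$), whereas you do it in one.
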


\begin{proof}
From \Cref{cor:loss-vs-gain:100}, \Cref{cl:loss-vs-gain:101} and \Cref{cl:loss-vs-gain:300}, we infer that 
\begin{eqnarray*}
\sum_{k=k_i}^{\ell^{(t)}(i)} w_{z_{\gain}}(i, k) & \geq & \frac{(1+\epsilon)}{\epsilon} \cdot W_{\gain} \\ 
& \geq &  \frac{(1+\epsilon)}{\epsilon} \cdot \left( W_{\loss} -  (3+\epsilon^2) \cdot \Gamma \right) \\
& \geq &  \frac{(1+\epsilon)}{\epsilon} \cdot  W_{\loss} -  \frac{4}{\epsilon} \cdot  \Gamma  \\
& \geq & (1+\epsilon) \cdot \Gamma \cdot \left( \ell^{(t)}(i) -  k_i \right) - \frac{4}{\epsilon} \cdot  \Gamma \\
& \geq &  (1+\epsilon) \cdot \Gamma \cdot \left( \ell^{(t)}(i) -  k_i + 1 \right) - \frac{5}{\epsilon} \cdot  \Gamma, 
\end{eqnarray*}
where the third and the last inequality hold because $\epsilon < 1/100$ (see \Cref{eq:epsilon}). 
\end{proof}

At this point, \Cref{new:lm:machine} follows from \Cref{cl:machine:1}, \Cref{cor:loss-vs-gain:400} and \Cref{eq:epsilon}. 
\qed

%% file: Rounding.tex
\section{Fully Dynamic Rounding for Load Balancing}

\label{sec:rounding}

In this section, we present our algorithm for rounding a dynamic $\epsilon$-fractional assignment. We restate the theorem summarizing the guarantees of our rounding framework for convenience.

\thmrounding*



Our rounding framework consists of two components. First, we show that the online load balancing rounding framework of Krishnaswamy et al.~\cite{Krishnaswamy0S23} can be seamlessly extended to the dynamic setting using folklore dynamic data structures. Their algorithm, for a given load balancing instance $(G = (J \cup M,E),w)$ and an $\epsilon$-fractional assignment $x$, maintains a dynamic graph $H = (J \cup S, E_H)$ between jobs and \emph{slots}. We refer to such a structure as a \emph{slot-decomposition}, following the terminology of Shmoys and Tardos \cite{ShmoysT93}; it is formally defined in \Cref{def:slot-decomposition}.

In $H$, each machine $i \in M$ is represented by a set of vertices $S(i)$, which we refer to as \emph{slots}. Let $\mu^*$ be a job-perfect matching of $H$, that is, a matching covering all jobs in $J$. Consider the integral assignment that assigns every job $j \in J$ to the machine associated with the slot matched to $j$ in $\mu^*$. The data structure ensures that such an assignment has low maximum load. Furthermore, the graph $H$ is robust under updates to $x$.

In the dynamic setting, we are not able to explicitly maintain $H$. Instead, we maintain an implicit representation of $H$. This representation explicitly maintains the nodes of $H$ and, for any job–machine pair $j,i$, can efficiently return the set of slots in $S(i)$ to which $j$ is adjacent in $H$. 

The second component of our framework enables us to efficiently maintain a job-perfect matching of $H$ as it undergoes updates. To achieve this, we exploit the fact that the fractional assignment $x$, when projected onto the edges of $H$, assigns weight at least $1-\epsilon$ to nodes in $J$ and at most $1-2\epsilon$ to nodes in $S$. \Cref{lem:slot-decomposition:matching} shows that, given access to such a fractional matching, we can match any unmatched node in $J$ via a random augmenting walk defined by $x$ in $O(\log (n)/\epsilon)$ steps. The following definitions are used to formalize the slot-decomposition framework of Krishnaswamy et al. \cite{Krishnaswamy0S23}.

\begin{definition}

For an ordered set $S = \{s_1, \dots, s_k\}$ of vertices with weights $w:S \rightarrow \mathbb{R}$, the interval of $s_i \ | \ i\in [k]$ in $S$ with respect to $w$ is defined as

$$I(S,w,s_i) = [\sum_{i' < i} w(s_{i'}), \sum_{i' \leq i} w(s_{i'})].$$ 

\end{definition}

For a fractional assignment $x$ of a load balancing instance $(G = (J \cup M,E), w)$ and a machine $i \in M$, let $x[i]: N_G(i) \rightarrow [0,1]$ denote the weight function that assigns weight $x(j,i)$ to each job $j \in N_G(i)$.

\begin{definition}

\label{def:slot-decomposition}

A slot-decomposition of a load balancing instance $(G = (J \cup M,E), w)$ with a corresponding $\epsilon$-fractional assignment $x$ is an assignment of sets of ordered vertices $S(i)$ (to which we will refer to as slots) to the machines $i \in M$ and an associated weight function $w_S:S \rightarrow [0,1]$ satisfying that $\sum_{s \in S(i)} w(s) \geq x(i)$ for all $i \in M$.

The graph induced by the slot-decomposition is a bipartite graph $H = (J \cup S, E_H)$ between jobs $J$ and slots $S = \cup_{i \in M}S(i)$. There exists an edge between slot $s \in S(i)$ and job $j \in J$ in $E_H$ if and only if $I(S(i),w_S,s) \cap I(N_G(i),x[i],j) \neq \emptyset$, where $N_G(i)$ is ordered non-increasingly with respect to $w$.

Define the cost $c(s)$ of a slot $s \in S(i)$ of a slot-decomposition to be $W$ if it is the first slot of $S(i)$ and otherwise to be the weight of its largest-weight neighbor in $J$ (with respect to $w$) in the induced graph $H$. We call the slot-decomposition $\epsilon$-approximate if the following conditions hold:
\begin{itemize}
\item $w_S(s) \leq 1 - 2\epsilon$ for all slots $s \in S$,
\item $\sum_{s \in S(i)} c(s) \leq (1+5\epsilon) w_x(i) + W$ for all $i \in M$.
\end{itemize}




\end{definition}

The following observation shows that a left-perfect matching of the graph induced by an $\epsilon$-approximate slot-decomposition induces an integral assignment with low maximum load in the corresponding load balancing instance. 

\begin{observation}
\label{obs:slot-decomposition:approximation}

Let $x$ be an $\epsilon$-fractional assignment to a load balancing instance $(G = (J \cup M,E),w)$ with maximum load $\Gamma$ and maximum job weight $W$. Assume $H$ is the graph induced by an $\epsilon$-approximate slot-decomposition of $x$. Let $\mu^*_H$ be a job-perfect matching of $H$. Then the assignment that assigns each job in $J$ to the machine associated with its matched slot in $H$ has load at most $\Gamma(1+5\epsilon) + W$.

\end{observation}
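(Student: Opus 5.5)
The argument is a per-machine charging: we bound the load each machine $i$ receives in the integral assignment by the total cost of its slots, and then apply the second condition of an $\epsilon$-approximate slot-decomposition.

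Fix a machine $i \in M$ and let $J_i \subseteq J$ be the jobs that $\mu^*_H$ matches to slots in $S(i)$. Since $\mu^*_H$ is a matching, each slot $s \in S(i)$ is matched to at most one job, so $|J_i| \le |S(i)|$. The jobs of $J_i$ are exactly the jobs assigned to $i$, so the load of $i$ in the induced integral assignment is
\[
\sum_{j \in J_i} w(j) = \sum_{s \in S(i)\,:\, s \text{ matched}} w(\mu^*_H(s)).
\]

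Next we compare each term with the cost of the slot it is matched to. If $s$ is the first slot of $S(i)$, then $c(s) = W \ge w(\mu^*_H(s))$, because every job weight lies in $[1,W]$. Otherwise $c(s)$ is, by definition, the largest weight of any $H$-neighbor of $s$. The matched job $\mu^*_H(s)$ is an $H$-neighbor of $s$, since matching edges are edges of $H$, so again $w(\mu^*_H(s)) \le c(s)$. All costs are nonnegative, so unmatched slots only add to the right-hand side. Summing over the slots of $S(i)$ gives
\[
\sum_{j \in J_i} w(j) \le \sum_{s \in S(i)} c(s) \le (1+5\epsilon)\, w_x(i) + W \le (1+5\epsilon)\Gamma + W,
\]
where the second inequality is the $\epsilon$-approximation condition and the last uses $w_x(i) \le \Gamma$.

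We also need the output to be a valid assignment. Every job is matched, because $\mu^*_H$ is job-perfect. Each job $j$ is sent to a machine $i$ with $(j,i) \in E$: an edge between $j$ and a slot of $S(i)$ requires the interval $I(N_G(i), x[i], j)$ to be defined, and this needs $j \in N_G(i)$.

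There is no real obstacle here. The only point needing care is the first slot, whose cost is set to $W$ precisely so that it bounds any job matched to it. The neighbor-based cost definition makes the bound for every other slot immediate.
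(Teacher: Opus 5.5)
Your proof is correct and takes essentially the same route as the paper. You bound each matched job's weight by the cost of its slot ($c(s)=W$ for the first slot, the largest-weight $H$-neighbor otherwise), sum over $S(i)$, and apply the second condition of an $\epsilon$-approximate slot-decomposition together with $w_x(i)\le\Gamma$. Your extra check that each job goes to a machine adjacent to it in $G$ is a harmless addition that the paper leaves implicit.
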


\begin{proof}
Consider the load on any machine $i \in M$ under such an assignment. For machine $i \in M$ let $S(i,l)$ stand for the $l$-th slot of $S(i)$ according to the ordering of $S(i)$. Slots $S(i,l)$ for $l>1$ may be matched in $\mu^*_H$ only to jobs with weight at most $c(S(i,l))$. Slot $S(i,1)$ may be matched to a job with weight at most $W$. Hence, the load of $i$ is at most $(1+5\epsilon)w_x(i) + W \leq (1+5\epsilon)\Gamma + W$.
\end{proof}





For a dynamic fractional assignment $x$, let $\Delta(x,l_1)$ stand for the total change of $x$ in terms of $l_1$ distance throughout a series of updates. The following lemma states that there exists a near-optimal algorithm for maintaining an $\epsilon$-approximate slot decomposition of a fractional assignment of a dynamic load-balancing instance. Its proof is deferred to \Cref{sec:proof:dynamic-slot}.

\begin{lemma}

\label{lem:slot-decomposition:dynamic}

Let $x$ be a fractional solution to a dynamic load balancing instance $(G = (L \cup R,E),w)$ undergoing edge and vertex updates. There exists a deterministic dynamic algorithm for maintaining an $\epsilon$-approximate dynamic slot-decomposition $(S,w_S)$ of $x$ with $\tilde{O}(1/\epsilon^2)$ amortized update time per edge update to $x$. The graph $H$ induced by the decomposition satisfies the following properties:

\begin{itemize}
    \item $|N_H(j) \cap S(i)| = O(1)$ for any pair $j \in J, i\in M$,
    \item An update to edge $(j,i)$ in $x$ may only change edges of $H$ incident to $j$, or edges incident to slots whose weights change or which are inserted into or deleted from the decomposition due to the update.
\end{itemize}

The algorithm allows for queries of the form $(j,i)$ returning the set $N_H(j) \cap S(i)$ and 

$$|I(N_H(j),x[i],j) \cap I(S(i),w_H,s)|$$

(i.e. the lengths of the corresponding intersections) for every slot $s \in N_H(j) \cap S(i)$ in $\tilde{O}(1)$ time. The total number of slots undergoing insertions and deletions in the underlying decomposition is $\tilde{O}(\Delta(x,l_1))$ throughout the update sequence.

\end{lemma}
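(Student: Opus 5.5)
We will maintain, for each machine $i$, an explicit slot-decomposition $S(i)$ built greedily in the Shmoys--Tardos/KLS style from the ordered list $N_G(i)$, where jobs are sorted non-increasingly by weight. The list $N_G(i)$ is kept in a balanced search tree keyed by $(w(j),j)$. Each node stores its value $x(j,i)$ and subtree sums, so that for any job $j$ the interval $I(N_G(i),x[i],j)$ can be computed in $\tilde O(1)$ time via prefix sums.

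The slots of $i$ are also kept as an ordered list in a balanced tree with prefix sums of $w_S$. Given the job interval, we locate the slots whose intervals intersect it by two predecessor searches. This answers the query $(j,i)$, returning $N_H(j)\cap S(i)$ together with the intersection lengths in $\tilde O(1)$ time.

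To guarantee $|N_H(j)\cap S(i)|=O(1)$ and $w_S(s)\le 1-2\epsilon$, we enforce the invariant that every slot except possibly the last has weight in $[1-3\epsilon,1-2\epsilon]$. The total slot weight is kept at least $x(i)$ by padding the last slot. Since $x(j,i)\le 1$ and consecutive slots each have weight at least $1-3\epsilon>1/2$, a job interval of length at most $1$ meets at most $O(1)$ slots.

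For the cost bound we reuse the standard Shmoys--Tardos argument. The cost of slot $l>1$ is the largest weight among jobs meeting it, which is at most the smallest weight among jobs meeting slot $l-1$. Hence
\[
c(S(i,l))\cdot w_S(S(i,l-1))\le \sum_j w(j)\, |I(\cdot,j)\cap I(S(i,l-1))| .
\]
Dividing by $w_S\ge 1-3\epsilon$ and summing over $l$ gives
\[
\sum_{l>1} c(S(i,l)) \le (1-3\epsilon)^{-1} w_x(i)\le(1+5\epsilon)w_x(i).
\]
Adding $c(S(i,1))=W$ then yields the required bound $(1+5\epsilon)w_x(i)+W$.

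The main obstacle is stability: a change of $x(j,i)$ by $\delta$ shifts the prefix sums of every later job, so naively all later slots move. We will not make slot boundaries fixed. Instead, slots are positions in the ordered slot list, and the edges of $H$ are defined implicitly by interval overlap. Under this representation a shift of later job intervals changes edges of $H$ incident to many jobs, but it creates no slot insertions or deletions. This matches the second bullet, which allows changes to edges incident to $j$ and to slots that are modified. The weights $w_S$ stay unchanged except at a slot we resize.

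Concretely, upon an update of $x(j,i)$ we change only the weight of the last slot of $S(i)$ so that $\sum_s w_S(s)\ge x(i)$ is restored. When the last slot would exceed $1-2\epsilon$, we split it; when the total slack exceeds one slot, we merge or delete the last slot. Rebuilding is lazy, in the spirit of KLS: a slot is inserted or deleted only after the total $\ell_1$ change of $x$ at $i$ since the last such event reaches $\Omega(\epsilon)$. This charges $O(1/\epsilon)$ slot updates per unit of $\ell_1$ change, giving $\tilde O(\Delta(x,\ell_1))$ slot changes, with $\epsilon$-factors absorbed in $\tilde O$.

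The $\tilde O(1/\epsilon^2)$ amortized update time comes from the tree updates plus the amortized rebuild work. We expect the delicate point to be verifying that padding the last slot does not break the cost bound, even though the last slot may contain padding weight not backed by $x$.
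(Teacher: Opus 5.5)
Your construction violates the second bullet of the lemma. The sentence claiming that it ``matches the second bullet'' misreads what the bullet permits.

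You keep every slot of $S(i)$ except the last at a fixed weight, and absorb a change of $x(j,i)$ by $\delta$ into the last slot. Then for every job $j'$ after $j$ in the order of $N_G(i)$, the interval $I(N_G(i),x[i],j')$ is translated by $\delta$, while the intervals of all other slots stay put. So any $j'$ with an endpoint within $\delta$ of a slot boundary gains or loses an $H$-edge to a slot whose weight did not change. Such an edge is incident neither to $j$ nor to a reweighted, inserted or deleted slot, which is exactly what the bullet forbids. That no slot is inserted or deleted is beside the point, and your lazy rebuilding does not address it.

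The property is not cosmetic. In the proof of \Cref{thm:rounding} it is what guarantees that one update to $x$ unmatches only $\tilde{O}(1/\epsilon^2)$ jobs. In your scheme, a single update to a job early in $N_G(i)$ can invalidate $\Omega(|S(i)|)$ matched job--slot pairs at $i$.

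The missing idea is to make job and slot prefix sums move in lockstep. The paper applies the change $\delta$ simultaneously to $x(j,i)$ and to the weight of the leftmost slot $s\in N_H(j)\cap S(i)$. Then every job interval and every slot interval to the right of $j$ and $s$ shifts by the same amount, and all other intersections are preserved.

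Since $s$ can now be any slot, the weight invariants can no longer be repaired only at the end of $S(i)$. The paper therefore groups slots into consecutive buckets of weight in $[1/\epsilon,4/\epsilon]$. When a slot leaves $[1-4\epsilon,1-2\epsilon]$, or a bucket leaves its weight range, it resets, splits or merges only the affected bucket(s) while preserving their total weight, so no boundary outside them moves. This gives locality and $O(1/\epsilon)$ slot updates per $\Omega(\epsilon)$ of $\ell_1$ change, hence the $\tilde{O}(1/\epsilon^2)$ bound.

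It also requires a different cost argument from yours, because every bucket, not only the final position, may end in a light slot. The first slot of each bucket is charged to the whole preceding bucket, of weight at least $1/\epsilon$. This adds an extra $\epsilon\, w_x(i)$, which is absorbed in the $(1+5\epsilon)$ factor.
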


The following lemma allows us to use the data structure described in \Cref{lem:slot-decomposition:dynamic} to maintain a left-perfect matching of the induced job-slot graph $H$ efficiently. Its proof is deferred to \Cref{sec:rounding:matching}.

\begin{lemma}

\label{lem:slot-decomposition:matching}

Let $G = (L \cup R, E)$ be a bipartite graph undergoing $m$ edge updates and $n$ vertex updates, where each update is either an insertion or a deletion. Let $x$ be a dynamic fractional matching such that, at all times, $x(l) \ge 1 - \epsilon$ for all $l \in L$ and $x(r) \le 1 - 2\epsilon$ for all $r \in R$.

There exists a randomized algorithm which, given access to an oracle $\mathcal{O}$ that, on query $l \in L$, returns a neighbor $r \in N_E(l)$ with probability $x(l,r)/x(l)$, maintains an $L$-perfect matching in $G$ in $\tilde{O}(m/\epsilon)$ total update time and $\tilde{O}((D+n)/\epsilon)$ queries to $\mathcal{O}$ with high probability, where $D$ denotes the total number of deletions of edges from the maintained matching. The algorithm functions against an adaptive adversary.

\end{lemma}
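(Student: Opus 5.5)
We maintain a matching $\mu$ that is $L$-perfect at the end of every update. Updates are handled by a single routine: whenever a vertex $l \in L$ becomes unmatched, we run a \emph{random augmenting walk} from it. An unmatched left vertex arises because its matched edge is deleted, its matched right vertex is deleted, or it is newly inserted. The walk sets $l_0 := l$. At step $s$ it queries $\mathcal{O}$ on $l_s$ and obtains a neighbor $r_s$. If $r_s$ is unmatched, we augment along the alternating path $l_0 r_0 l_1 r_1 \cdots l_s r_s$. Otherwise we set $l_{s+1} := \mu(r_s)$ and continue; if the walk revisits a vertex we shortcut the resulting cycle. Edge insertions need no action, and deletions of edges outside $\mu$ cost $O(1)$. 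The number of walks is therefore at most $D + n$. It remains to show that each walk has length $O(\log(n)/\epsilon)$ with high probability, regardless of the current state. Given that, the query bound is $\tilde{O}((D+n)/\epsilon)$, and the time bound follows because each step costs $\tilde{O}(1)$ and $D+n \le m+n$.

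The core is a potential argument on the walk's distribution. Fix the current graph, $x$, and $\mu$ (all chosen adaptively). Let $p_s$ be the distribution of $l_s$ conditioned on the walk not having stopped. We compare the walk to the fractional flow $x$. From a left vertex $l$, the walk moves to $r$ with probability $x(l,r)/x(l) \le x(l,r)/(1-\epsilon)$. The probability mass that reaches a matched right vertex $r$ is then transferred in full to its single partner $\mu(r)$.

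Define $q_s(l) := \Pr[l_s = l,\ \text{not stopped}]$. Then
\[
q_{s+1}(\mu(r)) = \sum_{l} q_s(l)\,\frac{x(l,r)}{x(l)}.
\]
We compare $q_s$ with the potential $\Phi_s := \max_l q_s(l)$. That alone is insufficient, so instead we measure $\Phi_s := \sum_l q_s(l)\, x(l)$. Mass at $l$ generates $\sum_r q_s(l)\, x(l,r)/x(l)$ at right vertices. The mass landing on $r$ is bounded by $\max_{l} \frac{q_s(l)}{x(l)} \cdot x(r)$, and the slack $x(r) \le 1-2\epsilon$ against $x(l) \ge 1-\epsilon$ gives a contraction factor $(1-2\epsilon)/(1-\epsilon) \le 1-\epsilon$.

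Concretely, we track $\rho_s := \max_l q_s(l)/x(\mu\text{-partner side})$, more precisely the max ratio $q_s(l)/x(l)$ normalized so that $q_{s+1}(\mu(r)) \le \rho_s\, x(r) \le \rho_s (1-2\epsilon)$. Hence
\[
\rho_{s+1} \le \rho_s\,\frac{1-2\epsilon}{1-\epsilon} \le (1-\epsilon)\,\rho_s .
\]
Since $\rho_0 \le 1/(1-\epsilon)$, after $s = O(\log(n)/\epsilon)$ steps every $q_s(l) \le n^{-c-1}$. Summing over the at most $n$ left vertices, the probability that the walk survives is at most $n^{-c}$. I expect this contraction of the max-ratio to be the main obstacle: one must check that the worst case over $l$ really propagates, i.e. $q_{s+1}(\mu(r))/x(\mu(r)) \le \rho_s\, x(r)/x(\mu(r))$, which again uses $x(\mu(r)) \ge 1-\epsilon$. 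The argument also has to handle cycle shortcutting, which only removes mass and so preserves the bound.

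Adaptivity is handled by observing that the bound holds for every fixed state of $(G, x, \mu)$, so a union bound over the at most $\mathrm{poly}(n)$ walks applies even against an adaptive adversary. Fresh randomness is used in each walk, and the analysis never relies on the adversary being ignorant of past coins. Finally, we account for time: each step is one oracle query plus $O(1)$ matching lookups, and augmentation costs the path length. Summing over all walks, this yields $\tilde{O}((D+n)/\epsilon)$ queries and $\tilde{O}(m/\epsilon)$ total time with high probability.
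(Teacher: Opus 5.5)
Your proposal is correct and follows the paper's proof: a random augmenting walk driven by $\mathcal{O}$, the max-ratio potential $\max_l q_s(l)/x(l)$ contracting by $(1-2\epsilon)/(1-\epsilon)\le 1-\epsilon$ per step (via $x(r)\le 1-2\epsilon$ and $x(\mu(r))\ge 1-\epsilon$), a union bound over the at most $n$ left vertices, and a count of at most $D+n$ walks. Your version is slightly more careful than the paper's in two places: you take the maximum over all of $L$, so $\rho_0 = 1/x(l_0)$ is well defined for the unmatched start vertex, and you loop-erase the walk; you should delete the two abandoned potentials that precede the definition of $\rho_s$.
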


We are now ready to prove \Cref{thm:rounding}. By \Cref{lem:slot-decomposition:dynamic}, we can maintain an $\epsilon$-approximate slot-decomposition of a dynamic fractional assignment in $\tilde{O}((Q+U)/\epsilon^2)$ total update time. We will use \Cref{lem:slot-decomposition:matching} to maintain a job-perfect matching $M_H^*$ in the bipartite graph $H$ induced by this decomposition. At time $t$, \Cref{obs:slot-decomposition:approximation}, applied with the current maximum job weight $W_t$, shows that this matching corresponds to an integral assignment of load at most $(1+5\epsilon)\Gamma_t+W_t$.

In order to do this, we first argue that \Cref{lem:slot-decomposition:dynamic} offers the query access required by \Cref{lem:slot-decomposition:matching} to maintain a job-perfect matching of $H$. Consider the fractional matching $y : E_H \rightarrow [0,1]$ defined as follows for any pair $j \in J$ and $s \in S(i)$, where $i \in M$:

$$y(j,s) = |I(N_H(j),x[i],j) \cap I(S(i),w_H,s)|$$

By the properties of \Cref{lem:slot-decomposition:dynamic}, we have that $y(j) \geq 1-\epsilon$ for any $j \in J$ and $y(s) \leq 1-2\epsilon$ for any slot $s \in S$. For a given job $j\in J$, we can sample neighboring slots in two steps. First, we sample a machine $i \in N_G(j)$ such that any such machine is sampled with probability $x(j,i)/x(j)$. To do this in $\tilde{O}(1)$ time, we can maintain dynamic samplers over the set $N_G(j)$ with weights $x(j,\cdot)$ in $\tilde{O}(1)$ update time using folklore data structures.

Once machine $i \in N_G(j)$ is sampled, we use the data structure of \Cref{lem:slot-decomposition:dynamic} to find the set $N_H(j) \cap S(i)$ as well as the weight $y(j,s)$ for each $s \in N_H(j) \cap S(i)$. We then sample slot $s \in N_H(j) \cap S(i)$ with probability $y(j,s)/x(j,i)$. Note that, as $|N_H(j) \cap S(i)| = O(1)$, this takes $O(1)$ time. The probability that we have sampled a specific slot works out to be: 

$$\left(\frac{x(j,i)}{x(j)}\right)\cdot \left(\frac{y(j,s)}{x(j,i)}\right) = \frac{y(j,s)}{x(j)} = \frac{y(j,s)}{y(j)}$$
as required.

Due to an update to $x$, at most $\tilde{O}(1/\epsilon^2)$ slots change in $H$ by \Cref{lem:slot-decomposition:dynamic}. The same lemma guarantees that only jobs adjacent to changed slots can have their neighborhoods in $H$ modified. Thus, one update to $x$ can unmatch at most $\tilde{O}(1/\epsilon^2)$ jobs in $M_H^*$. By \Cref{lem:slot-decomposition:matching}, the total update time is $\tilde{O}((Q+U)/\epsilon^3)$ with high probability.

\begin{observation}

For our specific application, we know that $\Delta(x,l_1) = \tilde{O}(n)$. This implies that only $\tilde{O}(n)$ jobs get unmatched in $M_H^*$ due to updates to $S,w_H$. However, as any edge deletion may unmatch an additional single job in $M_H^*$, this speedup gets absorbed in the notation. This would not be the case if the underlying dynamic load balancing problem were incremental.

\end{observation}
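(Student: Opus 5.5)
The plan is to bound $\Delta(x,l_1)$ by a potential argument on the total flow $\sum_{j} x(j)$. I will split the $l_1$ change of $x$ into increases and decreases and charge the decreases to the increases. The bound then turns into a bound on unmatched jobs through \Cref{lem:slot-decomposition:dynamic}.

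First I bound the total \emph{increase}. Flow on edges increases only in two ways.
\begin{itemize}
\item When a job $j$ is boosted, $x(j)$ grows by exactly $\epsilon\cdot x(j)<\epsilon$, using \Cref{obs:jobs:flow:upper:bound}.
\item When a job is inserted, the new flow it introduces is at most $m(\epsilon/(Wmn))^2\le 1/n$.
\end{itemize}
By \Cref{assume:label}, each of the at most $n$ jobs is boosted at most $10\kappa(7L+\alpha)=\tilde O(1/\epsilon^4)$ times. So the total increase is at most $\epsilon\cdot n\cdot 10\kappa(7L+\alpha)+n\cdot(1/n)=\tilde O(n/\epsilon^3)$.

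Next I bound the total \emph{decrease}, which comes from whacks and edge deletions. The quantity $\Phi:=\sum_{j}x(j)$ satisfies $\Phi\ge 0$ at all times, and $\Phi\le 1$ right after preprocessing starts (the initial flow is $m(\epsilon/(Wmn))^2$). Every unit of decrease lowers $\Phi$ by one unit, and every unit of increase raises it by one unit. Hence
\[
\text{total decrease}\le \text{total increase}+1,
\]
and so $\Delta(x,l_1)=\tilde O(n)$, treating $\epsilon$ as hidden in the $\tilde O$ as the observation does.

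With this bound in hand, \Cref{lem:slot-decomposition:dynamic} says that only $\tilde O(\Delta(x,l_1))=\tilde O(n)$ slots are inserted or deleted, or have their weights changed. By the second bullet of that lemma, edges of $H$ change only at such slots or at the job $j$ of the updated pair. A changed slot can unmatch at most its single matched partner, which gives $\tilde O(n)$ unmatchings from slot changes.

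The remaining unmatchings come from deletions of an original edge $(j,i)$. Such a deletion can destroy the matched $H$-edge of $j$ even when no slot changes, which costs one extra unmatching per deletion. There are at most $m$ such deletions, so
\[
D=\tilde O(n)+m.
\]
In \Cref{lem:slot-decomposition:matching} the total update time is already $\tilde O(m/\epsilon)$ from the $m$ edge updates. The $\tilde O((D+n)/\epsilon)$ oracle cost is therefore also $\tilde O(m/\epsilon)$, so the saving from $\Delta=\tilde O(n)$ is absorbed, as claimed.

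In an incremental setting there are no deletions, so $D$ would be governed by the slot changes alone. The step I expect to need the most care is the charging of decreases to increases. One must check that a deletion of an edge $(j,i)$ removes exactly $x(j,i)$ from $\Phi$, which counts toward $\Delta$ as a decrease and is therefore dominated. One must also make sure that the $\tilde O$ hides only the $\epsilon$-factors coming from the level bound.
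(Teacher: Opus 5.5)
The paper records this as a remark without proof, so there is no argument of its own to compare yours against. Your bound on $\Delta(x,l_1)$ is sound and is the natural justification. Each event moves edge values in one direction only, so $\Delta(x,l_1)$ equals the total increase plus the total decrease. Increases come only from boosts (less than $\epsilon$ each, and at most $10\kappa(7L+\alpha)$ per job by \Cref{assume:label}) and from initialization and job insertions (together at most $2$). Nonnegativity of $\sum_j x(j)$ then charges every decrease caused by whacks and deletions to these increases.

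Be explicit, however, that this yields $\Delta(x,l_1)=\tilde O(n/\epsilon^{3})$. The paper's $\tilde O$ hides only polylogarithmic factors in $1/\epsilon$, so reading your bound as $\tilde O(n)$ means suppressing $\poly(1/\epsilon)$; do not attribute that to the observation. The level cap is the only handle on the number of boosts, so I do not see how to avoid the $\epsilon^{-3}$.

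The gap is in converting $\Delta$ into unmatched jobs, in two places.

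First, \Cref{lem:slot-decomposition:dynamic} bounds only slot \emph{insertions and deletions} by $\tilde O(\Delta(x,l_1))$, not weight changes. Every one of the $U$ updates to $x$ changes $w_S$ of at least one slot, and here $U$ can be far larger than $n$, since each boost touches every edge of the job. So if your rule that ``a changed slot unmatches at most its partner'' is applied to weight-changed slots, you get a bound of order $U$, not $\tilde O(n)$. The missing ingredient is the rigid-shift argument from the proof of \Cref{cl:dynamic-slot:stability}. When $j$'s interval and the adjusted slot change by the same $\delta$, all intervals to their right shift together, so no job other than $j$ gains or loses an $H$-edge. Third-party unmatchings therefore come only from slots deleted by resets, merges and splits, and there are $\tilde O(\Delta)$ of those.

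Second, your claim that ``the remaining unmatchings come from deletions'' does not follow from the lemma. Its second bullet lets the edges of $j$ itself change on \emph{every} update to $x(j,i)$, and this really happens. In \Cref{alg:decrease-slot}, a decrease of $x(j,i)$ caused by a whack shrinks $j$'s overlap with the slot being adjusted by $\delta$. Once that overlap reaches zero, $j$ loses the edge, and if $j$ was matched to that slot it becomes unmatched without any deletion.

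What your argument actually supports is $D\le \tilde O(\Delta(x,l_1))+U$. The ``absorbed'' conclusion still holds, because the right yardstick is the $\tilde O((m+U)/\epsilon^3)$ bound of \Cref{thm:rounding}. (The $m$ in \Cref{lem:slot-decomposition:matching} counts edge updates of $H$, not edges of $G$.) But $D=\tilde O(n)+m$ is not established by your argument, and the same caveat weakens your remark about the incremental case.
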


\subsection{Proof of \Cref{lem:slot-decomposition:dynamic}}

\label{sec:proof:dynamic-slot}

Fix a machine $i \in M$. We describe how the algorithm maintains the relevant information about $H$ in the neighborhood of $S(i)$ and the slots and weights of $S(i)$. The algorithm is identical from the perspective of all machines.

\begin{definition}
An interval-segment tree is a dynamic data structure that maintains two ordered sets $A = \{a_1,\dots,a_k\}$ and $B = \{b_1,\dots,b_l\}$ with weights $w_A: A \rightarrow \mathbb{R}^{\geq 0}$ and $w_B: B \rightarrow \mathbb{R}^{\geq 0}$. It supports updates in which elements may be inserted into or deleted from $A$ and $B$. 

Given a query element $b \in B$, it returns the elements of $A$ whose intervals in $A$ with respect to $w_A$ intersect the interval of $b$ in $B$ with respect to $w_B$, together with the lengths of these intersections.

\end{definition}

An interval-segment tree for sets with at most $n$ elements can be implemented with $O(\log n)$ update time and $O(\log n + k)$ query time, where $k$ corresponds to the number of elements returned by the query, using a folklore interval-tree data structure. For each machine $i \in M$, the algorithm maintains an interval-segment tree between the two sets $S(i)$ and $N_G(j)$ with weight functions $w_S$ and $x[i]$. Here $N_G(i)$ is ordered non-increasingly with respect to $x[i]$. 

The algorithm makes sure that $x(i) = \sum_{s\in S(i)}w_S(s)$ once an update has been processed. The algorithm partitions the slots of $S(i)$ into consecutive sets $\mathcal{B}_1,\dots,\mathcal{B}_k$, which we call \emph{buckets}. With a slight overload of notation, we use $w_S$ to define the weights of these buckets. The algorithm ensures that $w_S(\mathcal{B}) = \sum_{s\in \mathcal{B}}w_S(s)$ for any bucket $\mathcal{B} \subseteq S(i)$ once an update has been processed. 

If $x(i) \leq 1/\epsilon$ then $S(i)$ consists of a single bucket of slots. Otherwise, the algorithm ensures that every bucket of $S(i)$ has weight in $[1/\epsilon,4/\epsilon]$. Each bucket will consist of $O(1/\epsilon)$ slots. In the ordering of $S(i)$ all but the last slot of each bucket will have weight in $[1-4\epsilon,1-2\epsilon]$. The last slot of the bucket can have weight in $[0,1-2\epsilon]$.

\textbf{Resetting a Bucket. } Throughout the maintenance of the slots of $S(i)$, whenever a bucket's structure changes by a certain amount, the algorithm resets the bucket. When bucket $\mathcal{B}$ is reset, its slots are replaced by $\left\lfloor w_S(\mathcal{B}_i)/(1-3\epsilon) \right\rfloor$ slots of weight $1-3\epsilon$ and a single slot of weight $w_S(\beta) -(1-3\epsilon)\left\lfloor w(\mathcal{B})/(1-3\epsilon) \right\rfloor$ (the last unique slot is inserted only if it would have weight greater than $0$). Observe that, during a call to \Cref{alg:reset-bucket}, the weight of the bucket does not change.

\begin{algorithm}[H]
\caption{$\resetbucket (i,\mathcal{B})$ \label{alg:reset-bucket}}
  \DontPrintSemicolon
  \For{$s \in \mathcal{B}$}{
        Delete $s$ from $S(i)$ and $\mathcal{B}$\;
  }
  \For{$i \in [\left\lfloor w_S(\mathcal{B})/(1-3\epsilon) \right\rfloor]$}{
        Insert a slot with weight $1-3\epsilon$ into $\mathcal{B}$
  }
  \If{$(1-3\epsilon)\left\lfloor w_S(\mathcal{B})/(1-3\epsilon) \right\rfloor \neq w_S(\mathcal{B}) $}{
        Insert a slot of weight $w_S(\mathcal{B})-(1-3\epsilon)\left\lfloor w_S(\mathcal{B})/(1-3\epsilon) \right\rfloor$ as the last slot of $\mathcal{B}$
  }
\end{algorithm}

\textbf{Increase of $x(j,i)$.} Assume an edge $(j,i) \in E$ increases in weight under $x$ by some amount $\Delta$. If its weight was previously $0$ (or the edge was not present in the graph), the algorithm first inserts $j$ into the interval-segment tree of $m$ with weight $0$. By querying the segment-tree data structure, the algorithm finds the left-most slot $s$ neighboring $j$. The algorithm then increases the weight of $j$ in the interval-segment tree while increasing the weight of $s$ by the same amount until both have been increased by $\Delta$. Throughout this increment, the following two events may occur (possibly multiple times).

Firstly, $s$ can reach weight $1-2\epsilon$. In this case, the algorithm resets the bucket containing $s$. Second, the bucket containing $s$ can reach weight $4/\epsilon$. In this case, the algorithm splits this bucket into two, each of weight $2/\epsilon$, then resets these buckets.

\begin{algorithm}[H]
\caption{$\increaseslot (j,i,\Delta)$ \label{alg:increase-slot}}
  \DontPrintSemicolon
  \If{$x(j,i)$ was $0$ before the update}{
    Add $j$ into the interval segment-tree of $i$ with weight $0$
  }
  \While{$\Delta > 0$}{
      $s\leftarrow$ leftmost slot in $S(i) \cap N_H(j)$ \;
      $\mathcal{B} \leftarrow$ bucket of $s$ in $S(i)$\;
      $\delta = \min\{\Delta,1-2\epsilon-w_S(s), 4/\epsilon-w_S(\mathcal{B})\}$\;
      $\Delta \leftarrow \Delta - \delta$ \;
      $w_S(s) \leftarrow w_S(s) + \delta$ \;
      $w_S(\mathcal{B}) \leftarrow w_S(\mathcal{B}) + \delta$ \;
      Increase the weight of $j$ by $\delta$ in the interval segment-tree of $i$ \;
      \If{$w_S(s) = 1-2\epsilon$}{
            $\resetbucket(i,\mathcal{B})$
      }
      \If{$w_S(\mathcal{B}) = 4/\epsilon$}{
            Replace $\mathcal{B}$ with buckets $\mathcal{B}'$,$\mathcal{B}''$ such that $w_S(\mathcal{B}') = w_S(\mathcal{B}'')= 2/\epsilon$ \;
            $\resetbucket(i,\mathcal{B'})$, $\resetbucket(i,\mathcal{B}'')$ 
      }
  }
\end{algorithm}

\textbf{Decrease of $x(j,i)$.} In the opposite case, assume an edge $(j,i)$ decreases in weight under $x$ by some $\Delta$. The algorithm first queries the slots neighboring $j$ in $H$ using the interval-segment tree. Let $s$ be the leftmost slot neighboring $j$ in $N_G(j) \cap S(i)$. The algorithm reduces the weight of $j$ in the interval-segment tree and the weight of $s$ in $w_S$ at the same rate until a reduction of $\Delta$ has been applied. The following cases may occur.

If $s$ is the last slot of its corresponding bucket, it could shrink to weight $0$. In this case, the algorithm deletes it (from $S(i)$, the containing bucket, and the interval-segment tree of $i$) and continues with the left neighbor of $s$. Otherwise, if the weight of slot $s$ reduces to $1-4\epsilon$, the algorithm resets the bucket of $s$.

It could also happen that the bucket of $s$ reduces to weight $1/\epsilon$. If it is the only bucket of $i$, the algorithm does not act on this. Otherwise, it merges this bucket with one of its neighboring buckets. If the resulting bucket has weight greater than $3/\epsilon$, it splits the bucket into two equal-weight buckets (of weights in $[1.5/\epsilon,2.5/\epsilon]$), which are then reset. Finally, if $x(j,i)=0$ after the update, the algorithm deletes $j$ from the interval-segment tree.

\begin{algorithm}[H]
\caption{$\decreaseslot (j,i,\Delta)$ \label{alg:decrease-slot}}
  \DontPrintSemicolon

  \While{$\Delta > 0$}{
      $s\leftarrow$ leftmost slot in $S(i) \cap N_H(j)$ \;
      $\mathcal{B} \leftarrow$ bucket of $s$ in $S(i)$\;
      \If{$s$ is the last slot of $\mathcal{B}$}{
         $\delta = \min\{\Delta,w_S(s), w_S(\mathcal{B})-1/\epsilon\}$\;
      }
      \Else{
         $\delta = \min\{\Delta,w_S(s) - (1-4\epsilon), w_S(\mathcal{B})-1/\epsilon\}$
      }
      $w_S(s) \leftarrow w_S(S) - \delta$ \;
      $\Delta \leftarrow \Delta - \delta$ \;
      $w_S(\mathcal{B}) \leftarrow w_S(\mathcal{B}) - \delta$ \;
      Decrease the weight of $j$ by $\delta$ in the interval segment-tree of $i$ \;
      \If{$w_S(s) = 1-4\epsilon$ and $s$ is the last slot of $\mathcal{B}$}{
            $\resetbucket(i,\mathcal{B})$
      }
      \ElseIf{$w_S(s) = 0$}{
        Delete $s$ from $S(i)$ \;
      }
      \If{$w_S(\mathcal{B}) \leq 1/\epsilon$ and $\mathcal{B}$ is not the only bucket of $S(i)$}{
        $\mathcal{B}' \leftarrow $ arbitrary neighboring bucket of $\mathcal{B}$ in $S(i)$ \;
        \If{$w_S(\mathcal{B}) + w_S(\mathcal{B'}) \geq 3/\epsilon$}{
            $w_S(\mathcal{B}),w_S(\mathcal{B}) \leftarrow (w_S(\mathcal{B}) + w_S(\mathcal{B'}))/2$ \;
            $\resetbucket(i,\mathcal{B}),\resetbucket(i,\mathcal{B}')$ \;
            }
        \Else{
            $w_S(\mathcal{B}) \leftarrow w_S(\mathcal{B}) + w_S(\mathcal{B'})$ \;
            $\resetbucket(i,\mathcal{B})$ \;
            Delete $\mathcal{B}'$ and all its slots from $S(i)$
        }
      }
  }
  \If{$x(j,i)$ is $0$ after the update}{
    Delete $j$ from the interval segment-tree of $i$
  }
\end{algorithm}

\textbf{Correctness of the Algorithm.} We will prove the correctness of the algorithm through the following series of claims.

\begin{claim}

\label{cl:dynamic-slot:decomposition}

Once an update has been processed, $\sum_{s \in S(i)} w_S(s) = x(i)$, and for any bucket $\mathcal{B}$ of $S(i)$, we have $w_S(\mathcal{B}) = \sum_{s \in \mathcal{B}} w_S(s)$.

\end{claim}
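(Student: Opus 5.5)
The plan is an induction over the sequence of elementary operations performed by the slot-maintenance algorithm. These operations are: the preprocessing state, each call to \increaseslot{} and \decreaseslot{}, and each call to \resetbucket{} together with the bucket split and merge steps they trigger. The invariant to maintain is
\[
  \sum_{s\in S(i)} w_S(s)=x(i)
  \quad\text{and}\quad
  w_S(\mathcal{B})=\sum_{s\in\mathcal{B}} w_S(s)\ \text{for every bucket}\ \mathcal{B}\ \text{of}\ S(i).
\]
Initially (or when machine $i$ first appears) $x(i)=0$ and $S(i)$ is empty, or is a single bucket of weight $0$, so the invariant holds trivially. It then suffices to check that each type of step preserves both equalities, with $x(i)$ taken as the value of the edge weights currently registered in the interval-segment tree of $i$. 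Once the while loop of an update terminates, this registered value equals the true $x(i)$, because the loop ends exactly when the full change $\Delta$ has been applied.

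\emph{Increase and decrease steps.} In one iteration of the while loop in \increaseslot, the registered weight of $j$ grows by $\delta$, so $x(i)$ grows by $\delta$. The same $\delta$ is added both to $w_S(s)$ and to $w_S(\mathcal{B})$, where $\mathcal{B}$ is the bucket containing $s$, so both equalities survive. The decrease loop is symmetric, with $\delta$ subtracted everywhere. Deleting a slot of weight $0$ changes neither side of either equality.

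\emph{Restructuring steps.} Next I would check the steps that reorganise buckets without changing $x(i)$. For \resetbucket{}, the new slots consist of $\lfloor w_S(\mathcal{B})/(1-3\epsilon)\rfloor$ slots of weight $1-3\epsilon$ plus one remainder slot, so their weights sum exactly to $w_S(\mathcal{B})$. Hence the bucket equality holds afterwards and the total over $S(i)$ is unchanged. For a split, the slots of $\mathcal{B}$ are distributed into $\mathcal{B}',\mathcal{B}''$ with the weights recorded as $2/\epsilon$ each. These recorded values sum to $w_S(\mathcal{B})$, and the immediate resets rebuild the slots of each new bucket to match its recorded weight, so both equalities hold after the resets. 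A merge either records the sum of the two bucket weights in one bucket (deleting the other) or records two halves of that sum, again followed by resets. In both cases the recorded bucket weights sum to the old combined weight, and the resets restore the per-bucket equality.

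The main obstacle is the split and merge bookkeeping. Between the reassignment of the recorded bucket weights and the subsequent calls to \resetbucket{}, the slot contents and the recorded weights may temporarily disagree. The pseudocode also has typos here, for example assigning $w_S(\mathcal{B})$ twice instead of assigning both $w_S(\mathcal{B})$ and $w_S(\mathcal{B}')$. I would therefore state the invariant only at the end of each elementary step, with each split or merge together with its resets treated as one atomic step. I would interpret \resetbucket{} as rebuilding the slots from the recorded weight $w_S(\mathcal{B})$, which is what makes the equalities hold again. With that convention the claim follows directly from the step-by-step accounting above.
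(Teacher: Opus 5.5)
Your proposal is correct and follows essentially the same argument as the paper. Both proceed by induction over iterations of the while loops. Each increment or decrement changes $x(i)$, one slot, and that slot's bucket by the same $\delta$, while splits and merges preserve the total recorded bucket weight and the subsequent \resetbucket{} calls restore the per-bucket equality; your explicit base case, your handling of zero-weight slot deletions, and treating each split or merge together with its resets as one atomic step just spell out what the paper leaves implicit.
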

\begin{proof}

Let $\mathcal{B}_i$ stand for the set of buckets of $i$. Define a sub-update to be one iteration of the while loop in either \Cref{alg:decrease-slot} or \Cref{alg:increase-slot}, and suppose that after such a sub-update the weight of $x(j,i)$ is $x(j,i)-\Delta$ or $x(j,i)+\Delta$ for weight increases and decreases, respectively. We will argue that the claim holds after each sub-update.

Unless the affected bucket is split or merged with other buckets, its weight changes at the same rate as $\Delta$. If it is split or merged with other buckets, the total weight of all buckets remains unchanged during the split and merge operations. After any split or merge operation, every affected bucket is reset, guaranteeing that its weight equals the sum of the weights of its slots. 

If, during a sub-update, a bucket is not split or merged into other buckets, then one of its contained slots changes weight at the same rate as $\Delta$. Therefore, the claim holds after every sub-update.
    
\end{proof}

Note that, by definition, the algorithm may query $N_H(j) \cap S(i)$ in $\tilde{O}(|N_H(j) \cap S(i)|)$ time using the interval segment-tree data structure. At the same time, it can find $|I(N_G(i),x[i],j) \cap I(S(i),w_H,s)|$ for every returned slot.

\begin{claim}

\label{cl:dynamic-slot:approximation}

The slot decomposition maintained by the algorithm is $\epsilon$-approximate. Furthermore, $|N_H(j) \cap S(i)| = O(1)$ for any $j \in J,i \in M$.

\end{claim}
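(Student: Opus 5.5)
We establish three properties: the slot weight bound $w_S(s)\le 1-2\epsilon$; the cost bound $\sum_{s\in S(i)}c(s)\le(1+5\epsilon)w_x(i)+W$; and $|N_H(j)\cap S(i)|=O(1)$. All three rely on the structural invariants maintained by \Cref{alg:reset-bucket,alg:increase-slot,alg:decrease-slot}. After every sub-update, each bucket has weight in $[1/\epsilon,4/\epsilon]$ (unless it is the only bucket). Every non-last slot of a bucket has weight in $[1-4\epsilon,1-2\epsilon]$, and the last slot has weight in $[0,1-2\epsilon]$.

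\emph{Invariants.} We verify these by induction over sub-updates. \Cref{alg:reset-bucket} produces slots of weight exactly $1-3\epsilon$ plus one last slot of weight $<1-3\epsilon$, and it preserves bucket weight. The while loops stop each sub-update exactly when a slot reaches $1-2\epsilon$ or $1-4\epsilon$, or a bucket reaches $4/\epsilon$ or $1/\epsilon$, and these events trigger a reset, split, or merge. A split yields weights $2/\epsilon$. A merge yields either weight in $[1/\epsilon,3/\epsilon)$, possibly plus up to $4/\epsilon$ from the neighbour (we check this stays below $4/\epsilon$ or is split), or two halves in $[1.5/\epsilon,2.5/\epsilon]$. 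The first property $w_S(s)\le 1-2\epsilon$ follows immediately.

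\emph{Degree bound.} The interval of $j$ in the ordering of $N_G(i)$ has length $x(j,i)<1$ by \Cref{obs:flow:upper:bound}. Every non-last slot has length at least $1-4\epsilon$. Between any two consecutive non-last slots there is at most one last slot, and a bucket contains $\Omega(1/\epsilon)$ weight before its last slot. Hence an interval of length $<1$ meets at most $O(1)$ slots: at most about two full slots, together with at most two last slots squeezed between them and the boundary ones.

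\emph{Cost bound (main obstacle).} We follow the Shmoys--Tardos argument. Jobs in $N_G(i)$ are ordered non-increasingly by weight, and slots are laid contiguously over the same line $[0,x(i)]$. For slot $s_l$ with $l>1$, $c(s_l)$ is at most the weight of the job at the left endpoint of $I(s_l)$. Every job covering $s_{l-1}$ has weight at least that value. So whenever $w_S(s_{l-1})\ge 1-4\epsilon$, we get
\[
c(s_l)\le \frac{1}{1-4\epsilon}\sum_j w(j)\,|I(j)\cap I(s_{l-1})|.
\]
The difficulty is the last slots of buckets, which may be arbitrarily light and so cannot pay for their successor this way. We handle them by charging per bucket. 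The first slot of the next bucket, and the slot following a light last slot, are charged to the preceding $\Omega(1/\epsilon)$ full slots of the previous bucket. Since each bucket has weight at least $1/\epsilon$ and each job has weight at most $W$, each bucket contributes at most one extra slot whose cost is bounded by the average weighted load of the bucket times $O(\epsilon)$. Concretely, the extra cost is at most $\epsilon\cdot(\text{weighted flow in the previous bucket})$, since a bucket of $x$-mass at least $1/\epsilon$ whose jobs all weigh at least $c$ carries weighted flow at least $c/\epsilon$.

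Summing over buckets and slots, the total cost is at most
\[
\left(\tfrac{1}{1-4\epsilon}+\epsilon\right)w_x(i)+W\le(1+5\epsilon)w_x(i)+W,
\]
where $W$ pays for the first slot of $S(i)$. In the single-bucket case $x(i)\le 1/\epsilon$, the only uncharged slot is the first one, paid by $W$. The delicate bookkeeping is making this per-bucket charging precise with the constants fitting into $5\epsilon$, given that the ordering is by $w$ while the intervals are measured by $x$.
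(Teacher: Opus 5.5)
Your proposal is correct and follows the paper's proof essentially step for step. The charging is the same: each non-first slot of a bucket is charged to its left neighbour of weight at least $1-4\epsilon$ (factor $1/(1-4\epsilon)$), each first slot of a later bucket is charged to the whole preceding bucket of mass at least $1/\epsilon$ (factor $\epsilon$), and the very first slot is paid by $W$. The degree bound is also the same: an interval of length at most one meets $O(1)$ buckets and $O(1)$ slots per bucket. Your inductive check of the bucket and slot invariants is extra detail that the paper simply takes for granted.

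One small remark, which applies equally to the paper: $\frac{1}{1-4\epsilon}+\epsilon = 1+5\epsilon+\frac{16\epsilon^2}{1-4\epsilon}$, so the final inequality is off by $O(\epsilon^2)$. For $\epsilon\le 1/20$ it actually gives $(1+6\epsilon)w_x(i)+W$, which is harmless for the $(1+O(\epsilon))$ guarantee used downstream.
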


\begin{proof}

Consider all the slots that are not the first ones in their buckets. For such a slot $s$, the slot to its left, $s^-$, is of weight at least $1-4\epsilon$, and all jobs assigned to it are of cost at least $c(s)$. Hence, the edges in $H$ incident to $s^-$ contribute at least $(1-4\epsilon) c(s)$ to $w_x(i)$. 

For the left-most slot, $c(s) = W$. Any other left-most slot $s$ of a bucket satisfies that the bucket to the left of it is of weight at least $1/\epsilon$ and that all jobs neighboring its slots in $H$ have cost at least $c(s)$. Hence, the edges incident on the bucket to its left contribute $c(s)/\epsilon$ to $w_x(i)$. This means $\sum_{s \in S(r)} c(s) \leq w_x(i)(1/(1-4\epsilon)+\epsilon) \leq w_x(i)(1+5\epsilon)$ for $\epsilon < 1/4$.

As at most one bucket may have weight smaller than $1/\epsilon$, a job neighbors $O(1)$ buckets. Since every bucket can have one slot with weight smaller than $1-4\epsilon$, this implies that jobs neighbor $O(1)$ slots.
    
\end{proof}

\begin{claim}

\label{cl:dynamic-slot:stability}

Due to an edge update to $x$, any job whose neighborhood changes in $H$ is either incident to the affected edge in $G$ or adjacent to a slot in $H$ that underwent an update (a weight change in $w_S$, an insertion, or a deletion).

\end{claim}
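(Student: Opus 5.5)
The plan is to fix a machine $i$, since by \Cref{def:slot-decomposition} the edges of $H$ incident to $S(i)$ depend only on two ordered weighted sequences. The first is the ordered list $N_G(i)$ with weights $x[i]$, which determines the intervals $I(N_G(i),x[i],j')$. The second is the ordered list $S(i)$ with weights $w_S$, which determines the intervals $I(S(i),w_S,s)$. A job $j'$ is adjacent to a slot $s$ exactly when these two intervals intersect. So the neighborhood of a job $j'$ inside $S(i)$ can change only if its own interval moves, or if the interval of some slot overlapping its old or new interval moves. An edge update to $x$ on $(j,i)$ touches only the data of machine $i$, so other machines' portions of $H$ are unaffected. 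Within machine $i$ I would compare the two sequences before and after the update, treating each sub-update, i.e.\ each iteration of the while loop in \Cref{alg:increase-slot} or \Cref{alg:decrease-slot}, separately, and applying the argument to each in turn.

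Fix one sub-update that changes $x(j,i)$ by $\delta$ and changes the slot $s$ (the leftmost slot of $S(i)\cap N_H(j)$) by the same $\delta$. On the job side, only $j$'s interval changes length. Every job $j'$ placed after $j$ in the order of $N_G(i)$ has its interval shifted by exactly $\delta$; here I assume the ordering of $N_G(i)$ is maintained so that $j$'s position is unchanged, or else treat the reorder as deleting and reinserting $j$. On the slot side, $s$ changes length by $\delta$, and every slot after $s$ shifts by $\delta$ as well. The key point is that $s$ is the leftmost slot meeting $j$'s interval, so the left endpoints of $j$ and $s$ satisfy $\mathrm{left}(s)\le \mathrm{left}(j)$. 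Because both right-hand tails shift by the same amount $\delta$, for any job $j'\neq j$ after $j$ and any slot $s'$ after $s$ the overlap relation is preserved. For $j'$ before $j$ nothing moves. The pairs that remain are (job before $j$, slot after $s$) and (job after $j$, slot before $s$). For the first kind I would argue that any such intersection, old or new, must pass through the region covered by $s$ or the shifted slots, so either $j'$ is adjacent to $s$ (a slot whose weight changed) or the relevant slots precede the edit point and are unmoved. The second kind is symmetric. Any remaining pair in which the intersection status flips must involve $j$ itself or a slot whose weight changed, namely $s$ or a slot straddling the boundary.

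Then I would handle structural events such as resets, splits, merges and slot deletions. These only delete or insert slots, or change slot weights, so every job whose neighborhood changes through them is adjacent, before or after the event, to a slot that was inserted, deleted or reweighted. A subtlety is that the claim needs the job to be adjacent to such a slot. For a shifted slot $s'$ whose weight did not change, I must argue that it is never the sole cause of a change. That follows from the equal-shift argument above, which works because the total weight to the left of $s$ in $S(i)$ equals the total weight to the left of $j$'s interval up to the alignment. That alignment is guaranteed by \Cref{cl:dynamic-slot:decomposition}, since prefix sums of $S(i)$ match prefix sums of $x[i]$.

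I expect the main obstacle to be the boundary case analysis in the middle step, in particular showing that $\mathrm{left}(s)\le\mathrm{left}(j)$ together with equal tail shifts rules out new intersections between unchanged slots and jobs other than $j$. I would first try to handle it via the prefix-sum equality: jobs and slots after the edit point shift identically, and jobs and slots before it are fixed. Any crossing pair then has its overlap region contain the edit point, which lies in $s$ and in $j$, and that forces adjacency to $s$.
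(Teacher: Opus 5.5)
Your localization to one machine $i$ and your per-sub-update analysis are sound. Your explicit treatment of the cross pairs (a job before $j$ against a slot after $s$, and vice versa) via $\mathrm{left}(s)\le\mathrm{left}(j)\le\mathrm{right}(s)$ is in fact more careful than the paper's one-line remark that both intervals ``shift by the same amount.'' For decreases you also need $\delta\le w_S(s)$, which \Cref{alg:decrease-slot} guarantees, so that the shifted slots never slide left past $\mathrm{left}(s)$.

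\textbf{The gap is in the structural step.} The inference ``resets, splits and merges only insert, delete or reweight slots, so every job whose neighborhood changes is adjacent to such a slot'' is not valid on its own. Changing slots inside a bucket could shift every later slot of $S(i)$, and during these operations no job interval moves at all. So the equal-shift argument, which relies on the job side moving by the same $\delta$, gives nothing here. The justification you cite is also false. \Cref{cl:dynamic-slot:decomposition} equates only totals: $\sum_{s\in S(i)}w_S(s)=x(i)$, and each bucket's recorded weight equals the sum of its slots. It says nothing about prefix sums, and slot boundaries generally do not coincide with job boundaries; your own inequality $\mathrm{left}(s)\le\mathrm{left}(j)$ is typically strict.

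\textbf{The missing idea} is that these operations preserve the total $w_S$-weight of the affected contiguous block.
\begin{itemize}
\item \Cref{alg:reset-bucket} replaces the slots of $\mathcal{B}$ by slots whose weights sum to exactly $w_S(\mathcal{B})$.
\item A split or merge only redistributes weight among the adjacent buckets involved, keeping their combined weight.
\item Deleting a slot of weight $0$ moves nothing.
\end{itemize}
Hence the prefix weights at both ends of the block are unchanged, and every slot outside the block keeps its interval exactly. Since job intervals are untouched, the only incidences that can change are with slots of the block, all of which were deleted or inserted. This is precisely the second half of the paper's proof, and it is indispensable: if a reset changed a bucket's total, a job far to the right, adjacent only to untouched slots, would change its neighborhood.

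Separately, your fallback of treating a reordering of $N_G(i)$ as deleting and reinserting $j$ would break the claim for the same reason, since every job between the two positions shifts. The argument needs the order to be by the fixed job weights $w$, as in \Cref{def:slot-decomposition}.
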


\begin{proof}

Consider how the intervals of a job $j$ in $N_G(i)$ with respect to $x[i]$ and of a slot $s \in S(i)$ with respect to $w_S$ evolve as the weight of another job-slot pair is adjusted. As the job weights in the interval segment-tree and the slot weights in $w_S$ are updated at the same rate, these two intervals shift by the same amount and do not change in length. Therefore, any potential intersection of these two intervals remains unchanged.

Whenever a bucket is reset, split, or merged, the resulting bucket or buckets have the same total weight under $w_S$. In addition, before and after such a reset, the weights of the slots of the affected buckets add up to the weights of the affected buckets. Therefore, slot $s$'s interval in $S(i)$ under $w_S$ does not change when a bucket not containing it is reset. This means that the neighborhood of slot $s$ in $H$ does not change due to a bucket not containing it being reset.

\end{proof}

\begin{claim}
\label{cl:dynamic-slot:update-time}

Assume that, throughout the update sequence, $x$ undergoes $U$ edge updates. Then the update time of the algorithm is $\tilde{O}(U/\epsilon^2)$. In addition, at most $\Delta(x,l_1)$ slots undergo updates (either in terms of weight in $w_S$, deletion, or insertion).

\end{claim}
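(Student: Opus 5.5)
We bound the work by an amortized charging argument over the sub-updates (iterations of the while loops) in \Cref{alg:increase-slot} and \Cref{alg:decrease-slot}. We separate the work into two kinds: \emph{ordinary} sub-updates, which only move weight between a job and a slot, and \emph{structural} events, namely bucket resets, splits, merges and slot deletions. An ordinary sub-update costs $\tilde{O}(1)$ in the interval-segment tree. By \Cref{cl:dynamic-slot:approximation}, $j$ neighbors only $O(1)$ slots of $S(i)$, so the number of sub-updates per edge update of $x$ is $O(1)$ plus the number of structural events it triggers. It therefore suffices to show two things: each structural event costs $\tilde{O}(1/\epsilon)$ (a bucket holds $O(1/\epsilon)$ slots, each reinserted into the segment tree in $\tilde{O}(1)$ time), and the number of structural events is $O(\Delta(x,l_1)/\epsilon)$, or amortized $O(1/\epsilon)$ per edge update.

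The core step is a potential argument showing that every structural event is preceded by $\Omega(\epsilon)$ units of $l_1$ change of $x$ routed through that bucket since its last reset. The events and their slack are as follows.
\begin{itemize}
\item After a reset, every non-last slot has weight exactly $1-3\epsilon$. It takes $\epsilon$ of increase to reach $1-2\epsilon$, or $\epsilon$ of decrease to reach $1-4\epsilon$.
\item After a reset or split, bucket weights lie in $[1.5/\epsilon, 2.5/\epsilon]$. Reaching $4/\epsilon$ or $1/\epsilon$ therefore requires $\Omega(1/\epsilon)$ of change.
\item A last slot being deleted at weight $0$ is charged to the reset that created it.
\end{itemize}
Since a sub-update changes only the one slot $s$ and its bucket, at the same rate as $x(j,i)$, we can give each bucket a potential equal to the $l_1$ change absorbed since its last reset. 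Each event consumes $\Omega(\epsilon)$ of this potential. Hence the number of resets is $O(\Delta(x,l_1)/\epsilon)$, and the number of slot insertions and deletions is $O(\Delta(x,l_1)/\epsilon^2)$.

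This yields the $\tilde{O}(1/\epsilon^2)$ amortized time per update. The slot-change count is $\tilde{O}(\Delta(x,l_1))$ in the $\tilde{O}$ sense used in \Cref{lem:slot-decomposition:dynamic}, with $\epsilon$ factors absorbed as in that lemma. Slot \emph{weight} changes occur only at ordinary sub-updates, and each edge update changes at most $O(1)$ slot weights directly, so they are within the same bound.

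The main obstacle is the merge case. After merging buckets $\mathcal{B}$ and $\mathcal{B}'$, the neighbor $\mathcal{B}'$ may have had little accumulated potential, and the merged bucket may be near $3/\epsilon$ in weight. I would handle this by charging the merge entirely to $\mathcal{B}$, which dropped from at least $1.5/\epsilon$ to $1/\epsilon$ and so absorbed $\Omega(1/\epsilon)$ of change. Then I would check that the resulting buckets again have weights bounded away from $1/\epsilon$ and $4/\epsilon$ by $\Omega(1/\epsilon)$. Weights in $[1.5/\epsilon, 2.5/\epsilon]$ hold after a split, and the unsplit merge result lies in $(1/\epsilon + 1/\epsilon, 3/\epsilon)$. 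This invariant needs a careful check, and possibly adjusted thresholds, for the unsplit case in which the neighbor had weight close to $1/\epsilon$.
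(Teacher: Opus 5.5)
Your proposal follows essentially the same route as the paper: each reset, split or merge touches $O(1/\epsilon)$ slots at $\tilde{O}(1/\epsilon)$ cost; a slot-triggered reset needs $\epsilon$ of change from the reset value $1-3\epsilon$; a split or merge needs $\Omega(1/\epsilon)$ of change from a creation weight in $[1.5/\epsilon,3/\epsilon]$. Together these give $O(\Delta(x,l_1)/\epsilon^2)$ slot insertions and deletions, which is also all the paper's proof establishes, despite the literal ``at most $\Delta(x,l_1)$'' in the statement. The only thing to tidy is that the bucket-weight slack must be charged against change accumulated since the bucket was created by a split or merge, not since its last reset (a slot-triggered reset leaves $w_S(\mathcal{B})$ unchanged and can coincide with a merge in the same sub-update); and your worry about the unsplit merge is unfounded, since both merged buckets weigh at least $1/\epsilon$, so the result lies in $[2/\epsilon,3/\epsilon)$.
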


\begin{proof}

A bucket may contain $O(1/\epsilon)$ slots at any time. Therefore, resetting or merging buckets results in $O(1/\epsilon)$ slots undergoing insertions or deletions. This can be implemented in $\tilde{O}(1/\epsilon)$ time. We will argue that a change of $\Delta$ in the fractional weight of an edge under $x$ results in at most amortized $O(\Delta/\epsilon^2)$ slot insertions and deletions and $O(\Delta\epsilon)$ bucket merges, deletions, and resets. This implies that the number of slots undergoing updates throughout the update sequence is at most $O(\Delta(x,l_1) /\epsilon^2)$. Since the weight of an edge might change by at most $1$ under any update, the update time of the data structure amortizes to $\tilde{O}(1/\epsilon^2)$.

Whenever a bucket is created, it is either due to a previous bucket splitting in half or due to the merge of two neighboring buckets. Either way, when a bucket is created, it has weight in $[1.5/\epsilon,3/\epsilon]$. Its slots need to undergo $\Omega(1/\epsilon)$ shrinking or stretching in order for the bucket size constraints to be violated. Therefore, there will be at most $O(\epsilon \Delta(x,l_1))$ bucket mergers or splits throughout the updates. 

After each bucket is reset, the algorithm might create one slot with weight at most $1-4\epsilon$. This slot might be deleted due to a very small weight change in $x$. The deletion of the rightmost slot of any bucket might result in only one slot deletion per reset of a bucket. After a reset, all other slots of the bucket are of weight $1-3\epsilon$. Therefore, before it may be reset again due to one of its slots violating weight constraints, its slots must undergo at least an $\epsilon$ change in weight. Hence, the total number of bucket resets prompted by slots violating weight constraints is $O(\Delta(x,l_1)/\epsilon)$, each resulting in $O(1/\epsilon)$ slot updates.

\end{proof}

\subsection{Proof of \Cref{lem:slot-decomposition:matching}}

\label{sec:rounding:matching}

Let $M$ be a matching of $G$ not matching all nodes of $L$, and let $x$ be a fractional matching satisfying the properties of the lemma. For a set of nodes $U \subseteq L \cup R$, let $U[M]$ stand for its subset incident to edges of $M$. Let $L_F = L \setminus L[M]$ and $R_F = R \setminus R[M]$.

Consider a random walk that starts from an arbitrary node of $L_F$ and, in each step, does the following: 1) if it reaches a node of $R_F$, it terminates; 2) if it reaches a node of $R[M]$, it continues toward its neighbor in $M$; and 3) if it reaches a node $l \in L[M]$, it samples a neighbor $r$, each with probability $x(l,r)/x(l)$, and continues toward $r$. We will first argue that such a walk terminates in $O(\log n /\epsilon)$ steps in expectation.

For a node $l \in L[M]$, let $p_t(l)$ stand for the probability that $l$ is the $t+1$-th node of $L$ visited by the walk, and let $q_t(l) = p_t(l)/x(l)$. Denote $Q_t = \max_{l \in L[M]} q_t(l)$. Then, for all $l \in L$ with neighbor $r$ in $M$:

\begin{align*}
q_{t+1}(l) = \frac{p_{t+1}(l)}{x(l)} = \frac{\sum_{l' \in L[M]}\frac{p_t(l') x(l',r)}{x(l')}}{x(l)} \leq \frac{Q_t}{x(l)}\sum_{l' \in L[M]}x(l',r) = \frac{Q_t x(r)}{x(l)}
\end{align*}

As $x(l) \geq 1-\epsilon$ and $x(r) \leq 1-2\epsilon$, we get that $Q_{t+1} \leq (1-\epsilon)Q_t$. By union-bounding across all $l \in L[M]$, we know that the probability of the walk not finishing by $t$ steps is at most $nQ_t$. From the geometric decay of $Q_t$, we can conclude that the walk finishes in $O(\log n /\epsilon)$ steps with probability $1-1/\poly(n)$.

Armed with this claim, it is straightforward to prove the lemma. The algorithm can match any unmatched node of $L$ along the augmenting path described by such a random walk. This means that the algorithm may match an unmatched node of $L$ in $O(\log n / \epsilon)$ time and queries to $\mathcal{O}$. The algorithm needs to do this at most $n$ times under insertions to $L$, and then once whenever a node of $L$ becomes unmatched either through a direct edge deletion or through the deletion of its neighbor in $M$.

%% file: Acknowledgements.tex
\section*{Acknowledgements}
ChatGPT 5.5 provided significant assistance for two of the proof ideas in the paper. 
\begin{itemize}
\item Our analysis of random-augmenting-walk lengths in Lemma \ref{lem:slot-decomposition:matching} is based on an idea from ChatGPT. An earlier analysis was already known, but the idea from ChatGPT led us to a simpler proof with a better dependence on $1/\eps$.
\item The main idea for the conditional lower bound in Appendix \ref{app:integralhardness} came from ChatGPT.
\end{itemize}
ChatGPT also provided technical assistance in pinning down the details of the conditional lower bound in Appendix \ref{ap:matching}. 

Sayan Bhattacharya is funded by the European Union (ERC grant, DYNALP, 101170133).
Views and opinions expressed are however those of the author(s) only and do not necessarily reflect
those of the European Union or the European Research Council Executive Agency. Neither the
European Union nor the granting authority can be held responsible for them.

%% file: appendix-hardness.tex
\section{Dynamic Matching via Dynamic Load Balancing}

\label{ap:matching}

This section is devoted to showing that the existence of an efficient fractional algorithm for fully dynamic load balancing with near-optimal maximum load would imply the existence of an $\epsilon$-approximate fully dynamic bipartite matching algorithm.

For a bipartite graph $G = (L \cup R,E)$, a matching $M \subseteq E$ is a set of vertex-disjoint edges. Let $\mu(G)$ denote the size of a maximum-size matching of $G$. A matching $M$ is said to be $\epsilon$-approximate if $|M| \geq (1-\epsilon)\mu(G)$.

In order to prove the reduction, we will rely on some well-known properties of dynamic matching. First, if we allow randomization, it is known that we may assume that $\mu(G) \geq \epsilon n$.

\begin{proposition}[\cite{Kiss23}]
\label{prop:vertex-sparsification}

Assume there exists a dynamic algorithm that maintains an $\epsilon$-approximate matching of a graph $G$ on $n$ vertices undergoing edge insertions and deletions such that, at all times, $\mu(G) \geq \epsilon n$, with amortized update time $T$. Then there exists a dynamic algorithm maintaining an $\epsilon$-approximate matching of a graph undergoing edge insertions and deletions with high probability in $\tilde{O}(T)$ amortized update time.

\end{proposition}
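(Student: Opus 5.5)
The plan is the standard \emph{vertex sparsification by random contraction}. We run $O(\log n/\epsilon)$ instances in parallel, one for each guess $\tilde\mu=(1+\epsilon)^r$ of $\mu(G)$. The instance for $\tilde\mu$ draws independent uniformly random hash functions $h_L:L\to[k]$ and $h_R:R\to[k]$ with $k:=\lceil\tilde\mu/\epsilon\rceil$. It maintains the contracted bipartite multigraph $H_{\tilde\mu}$ on $2k$ bucket vertices: $(a,b)$ is an edge of $H_{\tilde\mu}$ iff some $(u,v)\in E$ has $h_L(u)=a$ and $h_R(v)=b$. For each bucket pair we store a counter and a list of representative original edges. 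An edge update in $G$ changes one counter, and it inserts or deletes an edge of $H_{\tilde\mu}$ only when that counter moves between $0$ and $1$. So each instance receives at most one edge update per update of $G$. On $H_{\tilde\mu}$ we run the assumed algorithm, which is valid as long as $\mu(H_{\tilde\mu})\ge\epsilon\cdot 2k$. Any matching of $H_{\tilde\mu}$ lifts to a matching of $G$ of the same size, by choosing one representative edge per matched bucket pair; the buckets are disjoint, so the lifted edges are vertex-disjoint. Finally, we output the largest lifted matching over all guesses. This takes $\tilde O(1)$ extra time per update using standard bookkeeping, for example a heap over the sizes of the maintained matchings.

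Two facts are needed for the instance with $\tilde\mu\approx\mu(G)$. First, $\mu(H_{\tilde\mu})\ge(1-O(\epsilon))\mu(G)$. Second, $\mu(H_{\tilde\mu})\ge 2\epsilon k$, up to rescaling $\epsilon$ by a constant; this follows from the first fact because $2\epsilon k\approx 2\tilde\mu$ and so the requirement is about $\mu(G)$ up to constants. For the first fact, fix a maximum matching $M^*$ of $G$ at the current time. An edge of $M^*$ survives in the contracted matching unless its left endpoint shares a bucket with another $M^*$ left endpoint, or similarly on the right. Each $M^*$ edge collides with probability at most $2\mu/k=O(\epsilon)$, so the expected loss is $O(\epsilon\mu)$. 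A concentration bound gives this with high probability: bounded differences, since moving one vertex changes the surviving count by $O(1)$, or alternatively taking the best of $O(\log n)$ independent copies. For instances whose guess is wrong, there are two cases. If $\tilde\mu\gg\mu(G)$, the density condition may fail, and the algorithm's output may be a poor matching, but it is still a valid matching and so harmless. If $\tilde\mu\ll\mu(G)$, then $\mu(H)\le k$ is small and that instance's matching is simply ignored by the max. To protect the underlying algorithm from running outside its promise, we stop feeding an instance whenever a static estimate shows its density condition is violated. Such an estimate can be maintained by a cheap $2$-approximate matching.

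The main obstacle is making the high-probability statement hold over the whole update sequence, since $M^*$ changes over time and the $h$'s are fixed. I would handle this in two ways. For correctness at a given time, take a union bound over the $\poly(n)$ time steps; for each step the hashes are independent of the current $M^*$, provided the adversary is oblivious to the hashes, which is what \cite{Kiss23} assumes. For stability, resample the hash functions of the instance for $\tilde\mu$ every $\epsilon\tilde\mu$ updates. During such a phase $\mu(G)$ changes by at most $\epsilon\tilde\mu$, so the relevant guess stays valid. Rebuilding $H_{\tilde\mu}$ from scratch costs $O(\mu(G)/\epsilon)$ insertions of matching edges, inserted as a sparse subgraph that already witnesses the matching, and this amortizes to $O(1/\epsilon^2)$ updates per step. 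Summing over the $O(\log n/\epsilon)$ guesses and $O(\log n)$ copies gives amortized update time $\tilde O(T)$, treating $\poly(1/\epsilon)$ factors as absorbed in $\tilde O$ as in the statement.
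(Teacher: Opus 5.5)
The paper does not prove this proposition; it imports it from \cite{Kiss23}. Your plan is the standard route to it: random contraction into $\Theta(\tilde\mu/\epsilon)$ buckets per side, geometric guesses run in parallel, lifting via representative edges, and best of $O(\log n)$ independent copies. The contraction, lifting and bookkeeping parts are fine. Two steps, however, would fail as written.

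\textbf{Periodic resampling.} The resampling ``for stability'' does not work.
\begin{itemize}
\item Re-hashing requires recomputing the bucket counters from every current edge of $G$, which takes $\Theta(|E(G)|)$ time.
\item It also requires inserting every edge of the new $H_{\tilde\mu}$ into the black box, which can be $\Theta(\tilde\mu k)$ insertions. Neither cost amortizes to $\tilde O(T)$ over $\epsilon\tilde\mu$ updates.
\item Feeding only ``a sparse subgraph that already witnesses the matching'' does not rescue this. The black box then runs on a subgraph of $H_{\tilde\mu}$ whose matching number during the phase is at most the witness size plus $\epsilon\tilde\mu$. So all the quality must come from the witness, and you do not say where it comes from. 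A fresh static near-maximum matching of $G$ costs $\Omega(|E(G)|)$. Re-using the current output instead lets the $O(\epsilon)$ collision loss compound from phase to phase.
\end{itemize}
Fortunately this step is unnecessary. Every guess receives every update, so no instance needs its guess to ``stay valid''. Against an oblivious adversary, your first argument already gives the guarantee at all times: the fixed hashes are independent of each $G_t$, Markov applies to each copy, you take $O(\log n)$ independent copies, and you union-bound over the $\poly(n)$ time steps. Use the copies rather than bounded differences, since McDiarmid gives failure probability $1/\poly(n)$ only when $\mu(G)=\Omega(\log n/\epsilon^2)$.

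\textbf{Keeping the black box inside its promise.} With a single $\epsilon$, even the relevant instance violates the promise. For the right guess, $\mu(H_{\tilde\mu})\le\mu(G)\le\tilde\mu\le\epsilon k$, so its density is at most $\epsilon/2$. Your derivation of $\mu(H_{\tilde\mu})\ge 2\epsilon k$ from the first fact is therefore false, and the whole reduction only yields a $(1-O(\epsilon))$-approximation that must be rescaled.

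For the wrong-guess instances, ``stop feeding'' needs a rule for resuming. Resuming by re-inserting $H_{\tilde\mu}$ costs $|E(H_{\tilde\mu})|\cdot T$, which can be $\Theta(\tilde\mu k)\cdot T$. Meanwhile the density condition can be made to flip every $O(\tilde\mu)$ updates, so the amortized cost is $\Omega(kT)$, not $\tilde O(T)$.

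The clean fix handles both issues at once: pad each $H_{\tilde\mu}$ with $d=\Theta(\epsilon k)$ vertex-disjoint dummy edges on fresh vertices. Then the promise holds unconditionally for every instance, and nothing is ever paused. Discarding the dummy edges from the black box's output costs only an extra $O(\epsilon d)=O(\epsilon\tilde\mu)$ on the relevant instance. Alternatively, you can buffer the net changes while an instance is paused and, with a hysteresis threshold, replay insertions before deletions when resuming. Then every intermediate graph satisfies the promise, and the buffered updates pay for the replay.
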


The following proposition follows from \cite{McGregor05,BhattacharyaKS23Sub}. Note that the statement is also true for general graphs. We refer to the formulation stated by \cite{BehnezhadG24}. 

\begin{proposition}[\cite{BehnezhadG24}]
\label{prop:boosting}

Let $G = (V,E)$ be a graph on $n$ vertices. Assume there exists an algorithm $\mathcal{A}$ that, when queried with a vertex set $U \subseteq L \cup R$ and a parameter $\delta \in (0,1)$ such that $\mu(G[U]) \geq \delta \cdot n$, returns a matching of $G[U]$ of size $\poly(\delta) \cdot \mu(G[U])$. Then, for any $\epsilon \in (0,1)$, there exists an algorithm that, in $\tilde{O}(n)$ time and with $t = O_\epsilon(1)$ adaptive calls, 

$$\mathcal{A}(G,U_1,\delta_1), \dots, \mathcal{A}(G,U_t,\delta_t)$$

returns a matching of $G$ of size at least $\mu(G)-\epsilon n$. Here each $\delta_i$ is a function of $\epsilon$ and is at most $\epsilon/2$.

\end{proposition}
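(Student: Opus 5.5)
The plan is to follow the augmenting-path boosting framework of McGregor~\cite{McGregor05}, in the form used in~\cite{BhattacharyaKS23Sub,BehnezhadG24}. We maintain a current matching $M$ of $G$, starting from $M=\emptyset$, and run $O_\epsilon(1)$ phases. Each phase ends in one of two ways. Either it certifies $|M|\ge \mu(G)-\epsilon n$, or it finds a collection of vertex-disjoint short augmenting paths of total size $\poly(\epsilon)\cdot n$ and augments along them. Since $|M|\le n$ and each successful phase increases $|M|$ by $\poly(\epsilon)n$, only $O_\epsilon(1)$ phases can occur. Each phase will make $O_\epsilon(1)$ oracle calls plus $\tilde O(n)$ bookkeeping.

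\textbf{Structural step.} This step is standard. If $|M|<\mu(G)-\epsilon n$, then $M\oplus M^*$ for a maximum matching $M^*$ contains at least $\epsilon n$ vertex-disjoint augmenting paths. Averaging shows that at least $\epsilon n/2$ of them have length at most $\ell:=O(1/\epsilon)$.

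\textbf{Algorithmic step.} Within a phase, we make $O_\epsilon(1)$ random trials. In each trial, every matched edge of $M$ is assigned to a uniformly random layer in $\{1,\dots,\ell\}$ with a random orientation. Every free vertex is placed in the source or sink layer at random. A fixed short augmenting path survives, meaning its edges fall into consecutive layers with consistent orientation, with probability $\ell^{-O(\ell)}=\Omega_\epsilon(1)$. Hence in expectation $\Omega_\epsilon(n)$ disjoint short augmenting paths are layered correctly.

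We then grow paths layer by layer. Let $U_r$ be the set of current path endpoints at layer $r$ together with the corresponding endpoints of the matched edges in layer $r+1$. We query $\mathcal A(G,U_r,\delta)$ with $\delta=\poly(\epsilon)$; its output matching extends a $\poly(\delta)$ fraction of the surviving partial paths by one layer. Partial paths that are not extended are discarded. As in the paper's usage, we also discard those vertices whose extension fails repeatedly, and iterate a bounded number of times per layer, as in the ``contraction/ banning'' scheme of~\cite{BhattacharyaKS23Sub}.

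\textbf{Main obstacle.} The hardest step is to show that a phase either succeeds or that its failure is certified, despite the oracle only promising a $\poly(\delta)$ fraction and requiring $\mu(G[U])\ge\delta n$. The fix I would try first is a charging argument. Whenever an oracle call is made on a $U$ with $\mu(G[U])\ge\delta n$, the returned matching extends $\poly(\delta)n$ partial paths. When $\mu(G[U])<\delta n$, the surviving correctly-layered augmenting paths through that layer number at most $\delta n$, so discarding them loses only $\delta n\le\epsilon n/2\cdot\poly(\epsilon)$ paths across all $\ell$ layers, provided $\delta$ is chosen small relative to $\epsilon/\ell$. This keeps each $\delta_i\le\epsilon/2$. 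Combining the two cases shows each phase either yields $\poly(\epsilon)n$ complete augmenting paths or is unlikely to have had $\epsilon n/2$ short ones, which gives the stated guarantee with $t=O_\epsilon(1)$ adaptive calls and $\tilde O(n)$ extra time.
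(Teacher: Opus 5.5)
Your plan, McGregor's layered augmenting-path framework, is the right one. The paper does not prove this proposition but imports it as a black box from the cited works, which use that framework. However, the step you flag as the main obstacle is not handled by your charging argument, and that step is the actual content of the proposition.

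\textbf{The failing claim.} You assert that when $\mu(G[U_r])<\delta n$, the correctly layered augmenting paths through layer $r$ number at most $\delta n$. For $r\ge 1$ this is false. The endpoints in $U_r$ are ends of partial paths chosen by the oracle's arbitrary $\poly(\delta)$-approximate matchings at earlier layers. All unextended partial paths were discarded. So these endpoints are in general not prefixes of the $\Omega_\epsilon(n)$ short augmenting paths from your structural step.

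\textbf{A concrete failure.} Add many decoy edges $(r',l')\in M$ with $r'$ adjacent to every free $L$-vertex and $l'$ of degree one. This changes neither $\mu(G)-|M|$ nor the genuine augmenting paths. In a random layering, many decoys land in layer $1$ and dominate $\mu(G[U_0])$. The oracle may then return a matching consisting only of decoy edges. Every surviving partial path is now a dead end, so $\mu(G[U_1])=0$ and the phase fails, although every original short augmenting path is still present and untouched. An adversarial oracle, adaptive to $U$, can do this in every trial and every phase. Hence a failed phase certifies nothing about $\mu(G)-|M|$.

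Two smaller inaccuracies:
\begin{itemize}
\item The returned matching is a $\poly(\delta)$ fraction of $\mu(G[U_r])$, not of the number of partial paths.
\item The algorithm cannot test whether $\mu(G[U_r])\ge\delta n$. It can only infer $\mu(G[U_r])<\delta n$ from a small answer.
\end{itemize}

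\textbf{What is missing.} You need McGregor's backtracking mechanism, made compatible with the oracle:
\begin{itemize}
\item Build matchings between consecutive layers that are maximal up to a residual subgraph with $\mu<\delta n$. Do this by repeated adaptive calls on the residual vertex set; each call adds $\poly(\delta)\,\delta n$ edges, so $\poly(1/\delta)$ calls suffice.
\item When a partial path cannot be extended, retract it and permanently delete its last vertex rather than discarding the whole path.
\item Prove an invariant: deleted vertices lie on no correctly layered augmenting path disjoint from the paths already found, up to an additive $O(\delta n)$ loss per step. That loss is charged to calls whose small answers certify $\mu<\delta n$.
\item Bound the number of extend/retract rounds by $O_\epsilon(1)$. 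Each unproductive round deletes $\poly(\delta)\, n$ vertices.
\end{itemize}
Your proposal gestures at this (``discard those vertices whose extension fails repeatedly\ldots''), but it neither specifies the mechanism nor proves the invariant and round bound. The argument you give in its place does not work.
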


\begin{lemma}
\label{lem:matchingvialoadbalancing}
Assume there exists a dynamic algorithm with amortized update time $T$ that maintains an $\epsilon$-fractional assignment $x$ of an unweighted dynamic load balancing instance undergoing edge insertions and deletions such that, whenever the instance has optimum load $\Gamma$, the assignment $x$ has maximum load at most $(1+\epsilon)\Gamma$.

Then, for $\epsilon \in (0,1)$, there exists a dynamic algorithm that maintains an $\epsilon$-approximate matching of a bipartite graph undergoing edge insertions and deletions in $\tilde O(T)$ amortized update time with high probability against an oblivious adversary.
\end{lemma}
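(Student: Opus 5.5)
We plan to combine \Cref{prop:vertex-sparsification} and \Cref{prop:boosting}, so that the load balancing oracle only has to serve as the weak subroutine $\mathcal{A}$. By \Cref{prop:vertex-sparsification}, we may assume that $\mu(G)\ge \epsilon n$ at all times. By \Cref{prop:boosting}, it suffices to answer $O_\epsilon(1)$ adaptive queries $(U,\delta)$ with $\mu(G[U])\ge \delta n$, each returning a matching of size $\poly(\delta)\cdot\mu(G[U])$. Each query must cost $\tilde O(n)$ total time plus $\tilde O(n T)$ worth of updates to the load balancing instance. We will amortize this over $\Theta(\epsilon n)$ updates to $G$, since $\mu$ changes by at most one per update. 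A matching computed once can be lazily maintained, with deleted edges removed, for $\epsilon^2 n$ updates while staying $O(\epsilon)$-approximate. This periodic-rebuild scheme yields amortized time $\tilde O(T)\cdot O_\epsilon(1)$.

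For a single query $(U,\delta)$, we build an unweighted load balancing instance whose jobs are $L\cap U$ and whose machines are $R\cap U$, with edges from $G[U]$. Some jobs may have no perfect assignment, so we add one dummy machine $d$ adjacent to every job. Then $\opt$ is finite. The optimum fractional load is about $\max(1,\ \cdot)$, because loading $d$ heavily is not cheap. To make the construction correct, we instead give $d$ a large capacity by replacing it with $k=\lceil |L\cap U|/\mu\rceil$ dummy copies. This creates an assignment of load $1$ in which exactly the unmatched jobs go to dummies. Hence $\Gamma=1$ and the oracle returns $x$ with load at most $1+\epsilon$ and $x(j)\ge 1/(1+\epsilon)$.

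We then extract a matching. Let $x'$ be $x$ restricted to real machines and scaled by $1/(1+\epsilon)$. This is a fractional matching in $G[U]$. Its value equals the total job flow minus the flow into dummies, which is at least $|L\cap U|/(1+\epsilon)^2-(1+\epsilon)k/(1+\epsilon)$. To make this a $\poly(\delta)$ fraction of $\mu$, the number of dummies must be roughly $|L\cap U|-\mu(G[U])$. We do not know this number exactly, so we guess it within factor $(1+\epsilon)$. That requires $O(\log n/\epsilon)$ parallel guesses, or we can use $\mu\ge\delta n$ to guess coarsely. Finally, a fractional bipartite matching of value $F$ is rounded to an integral matching of size $\Omega(F)$ in $\tilde O(n)$ time. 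We first sparsify to an $O(n)$-edge support, for instance by keeping edges with $x\ge \delta^2/n$, and then take a greedy maximal matching. This loses at most a factor $2$ against any fractional matching in the support.

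The main obstacle is making the dummy-machine accounting give a matching that is a constant or $\poly(\delta)$ fraction of $\mu(G[U])$. The $(1+\epsilon)$ slack on load and on job flow, multiplied over $|L\cap U|$ jobs, could swamp $\mu$ when $|L\cap U|\gg\mu$. We would handle this by first restricting $U$ so that $|L\cap U|\le O(\mu/\delta)$, which holds since $\mu\ge\delta n$. We then pick $\epsilon'=\poly(\delta)$ as the accuracy passed to the oracle, so the loss $\epsilon'|L\cap U|$ is at most $\mu/4$. We also charge the $\tilde O(n)$ oracle updates per query, including insertions and deletions of the dummy edges, to the rebuild period.
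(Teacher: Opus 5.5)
Your skeleton matches the paper: vertex sparsification, the boosting proposition, lazy maintenance over phases of length $\Theta_\epsilon(n)$, and the accounting ``mass on real edges $\ge$ job mass minus dummy capacity'' with accuracy $\epsilon'=\poly(\delta)$ and the dummy count fixed via $\mu(G[U])\ge\delta n$. The gap is that each concrete step of your query implementation can touch $\Theta(n^2)$ edges. You budget only $\tilde O(n)$ oracle updates and $\tilde O(n)$ time per query, so over a phase of $\Theta_\epsilon(n)$ updates you get $\Theta(nT)$ amortized, not $\tilde O(T)$. There are three places where this happens:
(i) Building an instance on $G[U]$ by inserting its edges costs $|E(G[U])|$ updates, which can be $\Theta(n^2)$.
(ii) A load-$1$ assignment needs at least $|L\cap U|-\mu(G[U])$ dummies, which is $\Theta(n)$ in general. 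Your first choice $k=\lceil |L\cap U|/\mu\rceil$ is far too few. With $\Theta(n)$ dummies each adjacent to all of $L\cap U$, you get $\Theta(n^2)$ edges.
(iii) Keeping edges with $x_e\ge\delta^2/n$ does not give an $O(n)$-edge support. For example, uniform $x$ on a dense regular bipartite graph has all $\Theta(n^2)$ edges above this threshold, while any constant threshold discards all the mass. Your greedy maximal matching would then cost $\Theta(n^2)$ time.

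The paper avoids these with ideas your proposal lacks.
\begin{itemize}
\item It maintains one auxiliary instance $G'\supseteq G$ throughout, costing one oracle update per update of $G$.
\item It restricts to $U$ with only $O(n)$ temporary edges. Fresh private jobs $L_1'$ are perfectly matched onto $R\setminus U$, saturating those machines so that $L\cap U$ can push only $O(\epsilon' n)$ mass there. Likewise, $L\setminus U$ is perfectly matched onto fresh machines $R_2'$.
\item The dummies form a set $R_3'$ of size $|L\cap U|-\delta n/4$. They are joined to $L\cap U$ by a random sparse bipartite graph with edge probability $\Theta(\log n/(\delta n))$, so there are only $\tilde O(n/\delta)$ edges. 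The unmatched set $A$ of a fixed maximum matching of $G[U]$ is independent of these fresh coins, and $|R_3'|-|A|\ge 3\delta n/4$. Hence w.h.p.\ $A$ is saturated and the optimum load is $1$.
\item Instead of thresholding, it maintains a sampled subgraph $H$ that keeps each edge independently with probability $\min\{1,Cx_e\log n\}$. Each edge is resampled whenever $x_e$ changes, so this cost is charged to $T$. W.h.p.\ $H$ has $\tilde O(n)$ edges, and $H[U]$ has $\Omega(\delta n\log n)$ edges with maximum degree $O(\log n)$. Therefore a greedy maximal matching in $H[U]$, computable in $\tilde O(n)$ time, has size $\Omega(\delta n)$.
\end{itemize}
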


\begin{proof}
By \cref{prop:vertex-sparsification}, it suffices to handle the case in which $\mu(G) \geq \rho n$ at all times, where $\rho=\rho(\epsilon)>0$ is a sufficiently small constant. We run the algorithm in phases of length $\Theta(\epsilon \rho n)$ updates. At the beginning of a phase, we compute a matching of size at least $\mu(G)-(\epsilon\rho/10)n$. During the phase, we output this matching after accounting for edge deletions. Since at most $\Theta(\epsilon\rho n)$ updates occur during the phase and $\mu(G)\geq \rho n$ throughout, choosing the hidden constant in the phase length sufficiently small ensures that the maintained matching remains $(1-\epsilon)$-approximate throughout the entire phase.

It remains to describe how to compute, at the beginning of a phase, a matching of size at least $\mu(G)-(\epsilon\rho/10)n$. For this, we use \cref{prop:boosting}. Thus, it is enough to implement the following primitive. Given $U\subseteq L\cup R$ and $\delta\in(0,1)$ such that $\mu(G[U])\geq \delta n$, where $\delta$ is a constant depending only on $\epsilon$, we must return a matching of size $\poly(\delta)\cdot \mu(G[U])$ in $G[U]$.

Fix such a query $(U,\delta)$. Let
$$
        L_U := L\cap U,\qquad R_U := R\cap U.
$$
Define constants
$$
        \alpha := \delta/4
        \qquad\text{and}\qquad
        \epsilon' \leq \delta/1000.
$$

Here, let $\delta$ stand for the smallest constant \cref{prop:boosting} would ever require to function. Recall that this value is a function of $\epsilon$.

\paragraph{Initialization.}
At initialization, we construct an auxiliary load-balancing instance
$$
        G'=(L\cup L_1 \cup R\cup R_1\cup R_2\cup R_3,\; E\cup E'),
$$
based on $G = (L \cup R,E)$, where $L_1,R_1,R_2,R_3$ are fresh vertex sets, each of size $n$. Initially, $E'$ consists of an arbitrary perfect matching between $L_1$ and $R_1$. $L \cup L_1$ constitutes the set of jobs and $R \cup R_1 \cup R_2 \cup R_3$ constitutes the set of machines.

We maintain, using the assumed dynamic load-balancing algorithm, an $\epsilon'$-fractional assignment $x$ for $G'$. In addition, we maintain a sampled subgraph $H$ of $G'$. Independently for every edge $e\in E\cup E'$, the edge $e$ is included in $H$ with probability
$$
        p_e := \min\{1, C x_e \log n\},
$$
where $C=C(\delta)$ is a sufficiently large constant. Whenever the value $x_e$ changes, we delete $e$ from $H$ and resample it using the new value $x_e$. All random choices are made independently.

\paragraph{Executing a query $\mathcal{A}(G,U,\delta)$.}
We temporarily modify $G'$ as follows. First, choose a subset $L_1'\subseteq L_1$ of size $|R\setminus U|$. Delete from $E'$ all edges incident to vertices of $L_1'$, and insert an arbitrary perfect matching between $L_1'$ and $R\setminus U$. Second, insert an arbitrary perfect matching between $L\setminus U$ and a set $R_2'\subseteq R_2$ of size $|L\setminus U|$.

Third, choose a set $R_3'\subseteq R_3$ of size $|R_3'| = |L_U|-\alpha n$. If this number is negative, then $\mu(G[U])\leq |L_U|<\alpha n<\delta n$, contradicting the promise $\mu(G[U])\geq \delta n$. Hence $R_3'$ is well-defined. For every pair $(u,v)\in L_U\times R_3'$, insert the edge $(u,v)$ independently with probability
$$
        q := C\delta^{-1}\frac{\log n}{n},
$$
where $C$ is a sufficiently large absolute constant.

After these temporary updates have been processed by the load-balancing algorithm, we inspect the sampled graph $H$ and compute a maximal matching in $H[U]$. Let this matching be $M_U$. We output $M_U$ as the answer to the query, and then reverse all temporary updates made above. We now prove that $M_U$ has the required size with high probability.

\begin{claim}
\label{cl:opt-load-one}
After the temporary updates for the query $(U,\delta)$, the optimum load of $G'$ is $1$ with high probability.
\end{claim}

\begin{proof}
It is enough to exhibit an integral assignment of every left vertex to a distinct right vertex. The vertices of $L_1'$ are assigned to $R\setminus U$ along the inserted perfect matching. The vertices of $L_1\setminus L_1'$ are assigned to $R_1$ along their original matching edges. The vertices of $L\setminus U$ are assigned to $R_2'$ along the inserted perfect matching.

It remains to assign the vertices of $L_U$. Let $M_U^*$ be a maximum matching of $G[U]$. By assumption,
$$
        |M_U^*|=\mu(G[U])\geq \delta n.
$$
Let $L^*\subseteq L_U$ be the vertices of $L_U$ matched by $M_U^*$. We assign the vertices of $L^*$ to $R_U$ along the matching $M_U^*$. It remains to assign $A := L_U\setminus L^*$ to $R_3'$. Since $|L^*|\geq \delta n$, we have
$$
        |A|
        =
        |L_U|-|L^*|
        \leq
        |L_U|-\delta n.
$$
On the other hand,
$$
        |R_3'|
        =
        |L_U|-\delta n/4.
$$
Thus
$$
        |R_3'|-|A|
        \geq
        3\delta n/4.
$$

We use the following standard fact. Let $A,B$ be two vertex sets with $|B|-|A|\geq \eta n$, where $\eta>0$ is a constant, and expose a random bipartite graph between $A$ and $B$ by including every edge independently with probability at least $C\eta^{-1}\log n/n$. For sufficiently large $C$, this random bipartite graph contains a matching saturating $A$ with probability at least $1-n^{-10}$.

Applying this fact with $B=R_3'$ and $\eta=3\delta/4$, the random graph between $A$ and $R_3'$ contains a matching saturating $A$ with high probability. Assigning the vertices of $A$ along this matching completes an integral assignment of load $1$. Therefore the optimum load of the temporary instance is $1$ with high probability.
\end{proof}

By \cref{cl:opt-load-one}, during the query the assignment $x$ has maximum load at most $1+\epsilon'$, and every left vertex has total assigned mass at least $1-\epsilon'$.

\begin{claim}
\label{cl:fractional-matching-inside-U}
The restriction of $x$ to edges of $G[U]$ has total weight at least $\delta n/8$.
\end{claim}

\begin{proof}
Let $m:=|L_U|$ and $r:=|R\setminus U|$. The only neighbors of the vertices in $L_1'$ are the vertices of $R\setminus U$, matched one-to-one by the temporary perfect matching. Since every left vertex is assigned mass at least $1-\epsilon'$, the vertices of $L_1'$ send total mass at least $(1-\epsilon')r$ to $R\setminus U$.

The total capacity, under the assignment $x$, of the vertices in $(R\setminus U)\cup R_3'$ is at most $(r+|R_3'|)(1+\epsilon')$. Therefore the total mass that vertices of $L_U$ can send to $(R\setminus U)\cup R_3'$ is at most
$$
        (r+|R_3'|)(1+\epsilon')-(1-\epsilon')r
        =
        |R_3'|(1+\epsilon')+2\epsilon' r.
$$
Using $|R_3'|=m-\delta n/4$ and $r\leq n$, this is at most
$$
        m-\delta n/4+\epsilon' m+2\epsilon'n
        \leq
        m-\delta n/4+3\epsilon'n.
$$
Since every vertex of $L_U$ is assigned mass at least $1-\epsilon'$, the total mass sent out of $L_U$ is at least
$$
        (1-\epsilon')m.
$$
Hence the mass that $L_U$ sends to $R_U$ along edges of $G[U]$ is at least
$$
        (1-\epsilon')m-\left(m-\delta n/4+3\epsilon'n\right)
        =
        \delta n/4-\epsilon'm-3\epsilon'n.
$$
Because $m\leq n$ and $\epsilon'\leq \delta/1000$, this quantity is at least
$$
        \delta n/4-4\epsilon'n
        \geq
        \delta n/8.
$$
Thus $x$ restricted to the original edges of $G[U]$ has total weight at least $\delta n/8$.
\end{proof}

Since the maximum load of $x$ is at most $1+\epsilon'$, the vector
$$
        y := \frac{x}{1+\epsilon'}
$$
is a feasible fractional matching: every left and right vertex is incident to total $y$-mass at most $1$. By \cref{cl:fractional-matching-inside-U}, the total $y$-mass on edges of $G[U]$ is at least
$$
        \frac{\delta n}{8(1+\epsilon')}
        \geq
        \delta n/16.
$$
Denote this quantity by $W$. Thus $G[U]$ contains a fractional matching of value $W\geq \delta n/16$ supported on the edges whose sampled probabilities are proportional to their $x$-values.

\begin{claim}
\label{cl:H-size}
With high probability, throughout the updates the sampled graph $H$ has $\tilde O(n)$ edges.
\end{claim}

\begin{proof}
At any time, the total $x$-mass over all edges is at most the number of left vertices times the maximum assigned mass per left vertex. There are $|L|+|L_1|\leq 2n$ left vertices, and each sends total mass at most $1+\epsilon'$ after truncating any excess mass if necessary. Hence
$$
        \sum_{e} x_e \leq 2(1+\epsilon')n.
$$
Therefore
$$
        \mathbb{E}[|E(H)|]
        =
        \sum_e \min\{1,Cx_e\log n\}
        \leq
        C\log n \sum_e x_e
        =
        O(n\log n).
$$
By a Chernoff bound, $|E(H)|=\tilde O(n)$ with probability at least $1-n^{-10}$ at every fixed time. Taking a union bound over the polynomially many times at which the maintained fractional assignment changes throughout the updates gives the claim.
\end{proof}

\begin{claim}
\label{cl:H-approx}
With high probability, $H[U]$ contains a matching of size at least $\Omega(\delta n)$. Consequently, the maximal matching $M_U$ computed in $H[U]$ has size at least $\Omega(\delta n)$.
\end{claim}

\begin{proof}
Let $F$ be the set of original edges of $G[U]$, and let $y$ be the fractional matching from above supported on $F$ with total value $W=\sum_{e\in F} y_e \geq \delta n/16$. For every $e\in F$, $x_e=(1+\epsilon')y_e$, and therefore $p_e=\min\{1,Cx_e\log n\}\geq\min\{1,Cy_e\log n\}$.

Let $F_H:=F\cap E(H)$. First, we lower bound $|F_H|$. We have
$$
        \mathbb{E}[|F_H|]
        =
        \sum_{e\in F} p_e
        \geq
        \sum_{e\in F}\min\{1,Cy_e\log n\}.
$$
Since $0\leq y_e\leq 1$, this expectation is at least
$$
        \Omega(CW\log n).
$$
Choosing $C$ sufficiently large and applying a Chernoff bound, we get
$$
        |F_H| \geq c_1 W\log n
$$
with probability at least $1-n^{-10}$, for an absolute constant $c_1>0$.

Second, we upper bound the maximum degree of $H[U]$. Since $y$ is a fractional matching, for every vertex $v$,
$$
        \sum_{e\ni v} y_e \leq 1.
$$
Thus the expected degree of $v$ in $F_H$ is at most $C(1+\epsilon')\log n$. Again by Chernoff and a union bound over all vertices,
$$
        \Delta(F_H) \leq c_2 C\log n
$$
with probability at least $1-n^{-10}$, for an absolute constant $c_2>0$.

On the intersection of these two high-probability events, a greedy maximal matching in $F_H$ has size at least
$$
        \frac{|F_H|}{2\Delta(F_H)}
        \geq
        \frac{c_1 W\log n}{2c_2C\log n}
        =
        \Omega(W)
        =
        \Omega(\delta n).
$$
Since $\mu(G[U])\leq n$, this is also at least $\poly(\delta)\cdot \mu(G[U])$. More explicitly, by increasing the constants above if necessary, we may guarantee
$$
        |M_U|\geq c\delta\cdot \mu(G[U])
$$
for some absolute constant $c>0$. This proves the claim.
\end{proof}

By \cref{cl:H-approx}, each call required by \cref{prop:boosting} can be answered with a matching of size $\poly(\delta)\mu(G[U])$. The number of calls is $O_\epsilon(1)$, and each call performs $\tilde O(n)$ temporary edge updates to the auxiliary load-balancing instance and then computes a maximal matching in a graph of size $\tilde O(n)$ by \cref{cl:H-size}. Since phases have length $\Theta_\epsilon(n)$, the amortized overhead per original update is $\tilde O(T)$.
\end{proof}

\section{$2-\epsilon$-Approximation Hardness for Full Dynamic Load Balancing}

\label{app:integralhardness}

In this section, we show that the existence of a better-than-$2$-approximate algorithm with $o(n)$ update time for the unweighted fully dynamic load balancing problem would violate the OMv conjecture \cite{henzinger2015unifying}.

\paragraph{The hardness assumption.}
We use the following standard consequence of the OMv conjecture. In the OuMv problem, after preprocessing a Boolean matrix \(A\in\{0,1\}^{n\times n}\), an algorithm receives \(n\) online pairs of Boolean vectors \((u^t,v^t)\in\{0,1\}^n\times\{0,1\}^n\), and after receiving \((u^t,v^t)\) must output
$$
(u^t)^\top A v^t=\bigvee_{i,j\in[n]}(u^t_i\wedge A_{ij}\wedge v^t_j)
$$
before seeing the next pair. Under the OMv conjecture, for every constant \(\gamma>0\), OuMv cannot be solved with polynomial preprocessing and total online time \(O(n^{3-\gamma})\).

\begin{theorem}
Assume the OMv conjecture. For every pair of constants \(\varepsilon,\delta>0\), there is no fully dynamic \((2-\varepsilon)\)-approximation algorithm for bipartite load balancing with polynomial preprocessing, amortized update time \(O(N^{1-\delta})\), and query time \(O(N^{2-\delta})\), where \(N=|J|+|M|\). In particular, if the maintained assignment value is explicitly available after every update, then there is no such algorithm with amortized update time \(O(N^{1-\delta})\).
\end{theorem}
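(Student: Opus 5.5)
We reduce OuMv to fully dynamic unweighted load balancing. The key structural fact is that in an instance whose optimum load is $1$, a $(2-\varepsilon)$-approximate integral assignment must have load exactly $1$, because loads are integers. Therefore any such algorithm decides whether the optimum is $1$ or at least $2$. The plan is to encode each query $(u^t)^\top A v^t$ as the question of whether a current instance admits an assignment of load $1$, i.e.\ a job-perfect matching into distinct machines.

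\textbf{Construction.} During preprocessing, build a bipartite graph with one job $a_i$ for each row $i$ and one machine $b_j$ for each column $j$, plus the edge $(a_i,b_j)$ whenever $A_{ij}=1$. We then add gadgets controlled by $u$ and $v$.
\begin{itemize}
\item Each row job $a_i$ gets a private machine $p_i$. The edge $(a_i,p_i)$ is present exactly when $u_i=0$.
\item Each column machine $b_j$ gets a private job $c_j$. The edge $(c_j,b_j)$ is present exactly when $v_j=0$, and otherwise $c_j$ has no usable alternative.
\end{itemize}
This forces $b_j$ to be occupied by $c_j$ when $v_j=0$. We also add one special job $s$. It is adjacent to a set of ``free'' machines, $s$ must be placed, and only row jobs with $u_i=1$ can vacate such a machine.

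We would tune the gadget so that the following holds. A load-$1$ assignment exists if and only if some row job $a_i$ with $u_i=1$ can be matched to a column machine $b_j$ with $v_j=1$ and $A_{ij}=1$. Equivalently, a perfect matching exists iff there is an alternating path from $s$ through some $a_i$ with $u_i=1$, across an edge of $A$, to some $b_j$ with $v_j=1$. A simple variant that should work: every job $a_i$ with $u_i=1$ is adjacent to $b$-machines and to a shared machine $z$; $s$ is adjacent only to $z$; and the machines $b_j$ with $v_j=0$ are blocked by $c_j$. Then $z$ must host $s$, and the $u$-active rows must all fit into free $b_j$'s. One gadget has to enforce ``at least one'' rather than ``all'', so we will instead count slots, giving exactly one spare machine so that feasibility is equivalent to the existence of a single edge.

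\textbf{Complexity.} Each OuMv pair changes $O(n)$ gadget edges, namely the $u$- and $v$-controlled ones, and is followed by one query. The graph has $N=O(n)$ vertices. Over the $n$ rounds this costs $n\cdot O(n)\cdot O(N^{1-\delta})+n\cdot O(N^{2-\delta})=O(n^{3-\delta})$ in total, contradicting OMv. If the assignment value is explicitly available after each update, the query step is free, which gives the second statement.

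\textbf{Main obstacle.} The hard step is designing the gadget so that ``the OR over $i,j$ is true'' translates exactly into ``a perfect matching exists,'' with the OR direction correct. Perfect matchings naturally encode universal conditions (Hall's condition for all rows), whereas $u^\top A v$ is existential. I would first try to flip the problem: decide whether the optimum is $\geq 2$, which is a Hall violation. Concretely, make every $u$-active row job also adjacent to a common pool of $k$ machines, where $k$ equals the number of $u$-active rows minus one. Then a load-$1$ assignment exists iff at least one active row reaches a $v$-active column. Checking Hall's condition for this pool construction is the step to verify carefully.
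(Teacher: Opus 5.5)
Your outer framework matches the paper's. Integer loads turn a $(2-\varepsilon)$-approximation into a $1$-versus-$2$ test, each round must cost $O(n)$ edge updates on $N=\Theta(n)$ vertices, and $n^2\cdot N^{1-\delta}+n\cdot N^{2-\delta}=O(n^{3-\delta})$ finishes the argument.

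The gap is that the gadget, which is the whole content of the reduction, is never pinned down, and each concrete version you sketch fails.
\begin{itemize}
\item As written, $c_j$ has no edge when $v_j=1$. So whenever some $v_j=1$ the instance has no assignment at all, and in particular every yes-instance is infeasible.
\item The $z$-variant requires \emph{all} $u$-active rows to be matched into distinct free columns. That is a universal Hall-type condition, as you observe yourself.
\item The pool variant is logically correct once $c_j$ gets an escape machine for $v_j=1$. However, its $a(a-1)$ pool edges, with $a:=|\{i:u_i=1\}|$, depend on $u$. Switching between the all-ones $u$ and a unit vector therefore costs $\Theta(n^2)$ edge updates per round. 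Your claim of $O(n)$ gadget changes per round fails, and the total bound becomes $O(n^{4-\delta})$, which is not subcubic and contradicts nothing.
\end{itemize}
A fixed pool statically adjacent to all rows, with its effective size set by $O(n)$ dynamic blocker edges, could repair the last point. But that is a different construction, which you neither state nor verify.

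The missing idea is one your sentence about an alternating path from $s$ already gestures at. Start from a default assignment that is perfect except for a single surplus job. Then load $1$ is feasible iff one augmenting path exists, and reachability is existential, so the OR-versus-Hall issue disappears without any counting. In your notation, the paper's construction is:
\begin{itemize}
\item each row job $a_i$ has a \emph{static} home machine $p_i$;
\item each column machine $b_j$ is statically occupied by its job $c_j$, which may escape to a fresh machine $f_j$ iff $v_j=1$;
\item the surplus job $s$ is adjacent to $p_i$ iff $u_i=1$;
\item $A$ gives the edges $(a_i,b_j)$.
\end{itemize}
A load-$1$ assignment then exists iff some path $s\to p_i\to a_i\to b_j\to c_j\to f_j$ exists, i.e.\ iff $u^\top Av=1$. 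Each round changes only the $O(n)$ edges $(s,p_i)$ and $(c_j,f_j)$.

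The crux is where the $u$-bit sits. You put it on the row's own edge ($(a_i,p_i)$ present iff $u_i=0$), which leaves every active row homeless and forces the universal condition. It must instead control whether the surplus job may enter the row's home.

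Finally, the paper keeps the no-case feasible by adding a job $y$ whose only neighbor is a machine $z$ that is also statically adjacent to $s$. In any load-$1$ assignment $y$ occupies $z$, which forces $s$ into the path. Meanwhile $s,y\mapsto z$, $a_i\mapsto p_i$, $c_j\mapsto b_j$ is always a load-$2$ assignment, so the optimum is exactly $2$ in the no-case.
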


\begin{proof}
Suppose, toward contradiction, that such a dynamic algorithm \(\mathcal D\) exists. We show how to solve OuMv in truly subcubic time.

Let \(A\in\{0,1\}^{n\times n}\) be the OuMv matrix. From \(A\), construct a bipartite graph \(G=(J\cup M,E)\). The job side is
$$
J=\{x,y\}\cup\{a_i:i\in[n]\}\cup\{b_j:j\in[n]\},
$$
and the machine side is
$$
M=\{p\}\cup\{r_i:i\in[n]\}\cup\{c_j:j\in[n]\}\cup\{f_j:j\in[n]\}.
$$
Thus \(N=|J|+|M|=5n+3=\Theta(n)\).

The static edges are
$$
(y,p),\ (x,p),\quad (a_i,r_i)\ \forall i\in[n],\quad (b_j,c_j)\ \forall j\in[n],\quad (a_i,c_j)\ \text{iff}\ A_{ij}=1.
$$
For the current online pair \((u,v)\), the dynamic edges are
$$
(x,r_i)\ \text{iff}\ u_i=1,\quad\text{and}\quad (b_j,f_j)\ \text{iff}\ v_j=1.
$$
When the online pair changes from \((u,v)\) to \((u',v')\), we update exactly the edges \((x,r_i)\) for which \(u_i\neq u'_i\) and the edges \((b_j,f_j)\) for which \(v_j\neq v'_j\). Hence each OuMv round requires at most \(2n\) edge updates. Fix a pair \((u,v)\).

First suppose \(u^\top A v=1\). Then there exist \(i,j\in[n]\) such that \(u_i=1\), \(A_{ij}=1\), and \(v_j=1\). Consider the following assignment:
$$
y\mapsto p,\quad x\mapsto r_i,\quad a_i\mapsto c_j,\quad b_j\mapsto f_j,\quad a_{i'}\mapsto r_{i'}\ \forall i'\neq i,\quad b_{j'}\mapsto c_{j'}\ \forall j'\neq j.
$$
Every used edge exists: \((x,r_i)\) exists because \(u_i=1\), \((a_i,c_j)\) exists because \(A_{ij}=1\), and \((b_j,f_j)\) exists because \(v_j=1\). No machine receives more than one job. Therefore \(\operatorname{OPT}(G)\le 1\), and since every job has unit demand, \(\operatorname{OPT}(G)=1\).

Now suppose \(u^\top A v=0\). We claim that no load-\(1\) assignment exists. A load-\(1\) assignment is a matching that saturates all jobs. Since \(y\) is adjacent only to \(p\), such an assignment must assign \(y\) to \(p\). Therefore \(x\) cannot be assigned to \(p\), so \(x\) must be assigned to some \(r_i\). The edge \((x,r_i)\) exists only if \(u_i=1\). Since \(r_i\) is occupied by \(x\), job \(a_i\) cannot be assigned to \(r_i\), so \(a_i\) must be assigned to some \(c_j\). The edge \((a_i,c_j)\) exists only if \(A_{ij}=1\). Since \(c_j\) is occupied by \(a_i\), job \(b_j\) cannot be assigned to \(c_j\), so \(b_j\) must be assigned to \(f_j\). The edge \((b_j,f_j)\) exists only if \(v_j=1\). Hence \(u_i=1\), \(A_{ij}=1\), and \(v_j=1\), contradicting \(u^\top A v=0\). Thus no load-\(1\) assignment exists.

On the other hand, a load-\(2\) assignment always exists:
$$
x\mapsto p,\quad y\mapsto p,\quad a_i\mapsto r_i\ \forall i\in[n],\quad b_j\mapsto c_j\ \forall j\in[n].
$$
Therefore, when \(u^\top A v=0\), we have \(\operatorname{OPT}(G)=2\). Altogether,
$$
u^\top A v=1\Rightarrow \operatorname{OPT}(G)=1,\qquad u^\top A v=0\Rightarrow \operatorname{OPT}(G)=2.
$$

Now run \(\mathcal D\) on this dynamic graph. After the updates for the current pair \((u,v)\), query the value of the maintained assignment. If \(u^\top A v=1\), then \(\operatorname{OPT}(G)=1\), so a \((2-\varepsilon)\)-approximation must have load at most \(2-\varepsilon<2\). Since loads are integral, the maintained load is exactly \(1\). If \(u^\top A v=0\), then \(\operatorname{OPT}(G)=2\), so every feasible assignment has load at least \(2\). Therefore the maintained load is \(1\) if and only if \(u^\top A v=1\). This solves the current OuMv query.

It remains to bound the running time. The graph has \(N=\Theta(n)\) vertices. Over all \(n\) online pairs, there are at most \(2n^2\) edge updates and \(n\) queries. Therefore the total online time of the resulting OuMv algorithm is
$$
O(n^2\cdot N^{1-\delta}+n\cdot N^{2-\delta})=O(n^2\cdot n^{1-\delta}+n\cdot n^{2-\delta})=O(n^{3-\delta}).
$$
The preprocessing of \(\mathcal D\) is polynomial in the initial graph size and hence polynomial in \(n\), so it is allowed by the polynomial-preprocessing version of OuMv hardness. We have obtained an \(O(n^{3-\delta})\)-time online algorithm for OuMv after polynomial preprocessing, contradicting the OMv conjecture. This proves the theorem.
\end{proof}

\paragraph{Remark.}
The reduction is tight for the argument used above: it creates a \(1\) versus \(2\) optimum gap. Hence it rules out every \((2-\varepsilon)\)-approximation, but it does not by itself rule out \(2\)-approximations or larger constant-factor approximations.